\newcommand{\subversion}[1]{}
  \theoremstyle{plain}
  \newtheorem{theorem}{Theorem}[section]
  \newtheorem{lemma}[theorem]{Lemma}
  \newtheorem{corollary}[theorem]{Corollary}
  \newtheorem{definition}[theorem]{Definition}
  \newtheorem{remark}[theorem]{Remark}
  \newtheorem{claim}[theorem]{Claim}
  \newtheorem{importedtheorem}[theorem]{Imported Theorem}
 \newtheorem{construction}{Construction}
 \newtheorem{conjecture}{Conjecture}
\newtheorem{fact}[theorem]{Fact}
\newcommand{\algo}{\mathcal}
\newcommand{\sk}{\mathsf{sk}\xspace}
\newcommand{\pk}{\mathsf{pk}\xspace}
\newcommand{\vk}{\mathsf{vk}\xspace}
\newcommand{\ct}{\mathsf{CT}\xspace}
\renewcommand{\vec}[1]{\mathbf{#1}}
\newcommand{\bit}{\{0,1\}}
\newcommand{\Eval}{\mathsf{Eval}}
\newcommand{\SIS}{\mathsf{SIS}}
\newcommand{\ISIS}{\mathsf{ISIS}}
\newcommand{\LWE}{\mathsf{LWE}}
\DeclareMathAlphabet\mathbfcal{OMS}{cmsy}{b}{n}
\newcommand{\FHE}{\ensuremath{\mathsf{FHE}}\xspace}
\newcommand{\proj}[1]{\ensuremath{|#1\rangle \langle #1|}}
\newcommand{\KeyGen}{\mathsf{KeyGen}}
\newcommand{\Del}{\mathsf{Del}}
\newcommand{\Vrfy}{\mathsf{Vrfy}}
\newcommand{\PPT}{\mathsf{PPT}}
\newcommand{\Invert}{\mathsf{Invert}}
\newcommand{\QPT}{\mathsf{QPT}}
\newcommand{\CPTP}{\mathsf{CPTP}}
\newcommand{\aux}{\mathsf{aux}}
\newcommand{\FT}{\mathsf{FT}}
\newcommand{\Mod}[1]{\ (\mathrm{mod}\ #1)}
\newcommand{\rand}{\raisebox{-1pt}{\ensuremath{\,\xleftarrow{\raisebox{-1pt}{$\scriptscriptstyle\$$}}\,}}}
\newcommand{\N}{\mathbb{N}}
\newcommand{\Z}{\mathbb{Z}}
\newcommand{\negl}{\mathsf{negl}}
\newcommand{\james}[1]{{\color{red} James: #1 }}
\newcommand{\dakshita}[1]{{\color{orange} Dakshita: #1 }}
\title{Publicly-Verifiable Deletion via Target-Collapsing Functions}
\author{James Bartusek\footnote{bartusek.james@gmail.com}\\UC Berkeley\and 
Dakshita Khurana\footnote{dakshita@illinois.edu}\\UIUC \and 
Alexander Poremba\footnote{{aporemba@caltech.edu}}\\Caltech
}
\date{}
\newcommand{\poly}{\mathrm{poly}}
\newcommand{\secp}{\lambda}
\def\cA{{\cal A}}
\def\cD{{\cal D}}
\def\cF{{\cal F}}
\def\cH{{\cal H}}
\def\cM{{\cal M}}
\def\cZ{{\cal Z}}
\def\bbI{{\mathbb I}}
\def\bbN{{\mathbb N}}
\def\poly{{\rm poly}}
\def\negl{{\rm negl}}
\newcommand{\Enc}{\mathsf{Enc}}
\newcommand{\Dec}{\mathsf{Dec}}
\DeclareMathOperator*{\expectation}{\mathbb{E}}
\newcommand{\E}{\expectation}
\newcommand{\Gen}{\mathsf{Gen}}
\newcommand{\Hyb}{\mathsf{Hyb}}
\newenvironment{boxfig}[2]{\begin{figure}[#1]\fbox{
    \begin{minipage}{\linewidth}
    \vspace{0.2em}\makebox[0.025\linewidth]{}    \begin{minipage}{0.95\linewidth}{{#2 }}
    \end{minipage}\vspace{0.2em}\end{minipage}}}{\end{figure}}
\newcommand{\PVD}{\mathsf{PVD}}
\renewcommand{\partial}{\mathsf{partial}}
\newcommand{\TD}{\mathsf{TD}}
\newcommand{\Exp}{\mathsf{Exp}}
\newcommand{\sA}{\mathsf{A}}
\newcommand{\Ver}{\mathsf{Ver}}
\newcommand{\Samp}{\mathsf{Samp}}
\newcommand{\td}{\mathsf{td}}
\newcommand{\Recover}{\mathsf{Recover}}
\newcommand{\nonnegl}{\mathsf{non}\text{-}\mathsf{negl}}
\begin{document}

\maketitle

\begin{abstract}

We build quantum cryptosystems that support publicly-verifiable deletion from standard cryptographic assumptions. We introduce target-collapsing as a weakening of collapsing for hash functions, analogous to how second preimage resistance weakens collision resistance; that is, target-collapsing requires indistinguishability between superpositions and mixtures of preimages of an honestly sampled image. 

We show that target-collapsing hashes enable publicly-verifiable deletion ($\PVD$), proving 
conjectures from [Poremba, ITCS'23] and demonstrating that the Dual-Regev encryption (and corresponding fully homomorphic encryption) schemes support $\PVD$ under the LWE assumption. 
We further build on this framework to obtain a variety of primitives supporting publicly-verifiable deletion from weak cryptographic assumptions, including:
\begin{itemize}
    \item Commitments with $\PVD$ assuming the existence of injective one-way functions, or more generally, {\em almost-regular} one-way functions. Along the way, we demonstrate that (variants of) target-collapsing hashes can be built from almost-regular one-way functions.
    \item Public-key encryption with $\PVD$ assuming trapdoored variants of injective (or almost-regular) one-way functions. We also
    demonstrate that the encryption scheme of [Hhan, Morimae, and Yamakawa, Eurocrypt'23] based on pseudorandom group actions
    has $\PVD$.
    \item $X$ with $\PVD$ for $X \in \{$attribute-based encryption, quantum fully-homomorphic encryption, witness encryption, time-revocable encryption$\}$,
    assuming $X$ and trapdoored variants of injective (or almost-regular) one-way functions.  
\end{itemize}
\end{abstract}

\newpage
\tableofcontents

\newpage
\section{Introduction}
Recent research has explored the exciting possibility of combining quantum information with computational hardness to enable classically infeasible cryptographic tasks. Beginning with proposals such as unforgeable money~\cite{Wiesner83}, this list has recently grown to include the possibility of provably deleting cryptographic information encoded into quantum states~\cite{Unruh2013,Broadbent_2020,hiroka2021quantum,cryptoeprint:2022/969,hiroka2021certified,Poremba22,cryptoeprint:2022/1178,BGGKMRR,AKNYY,cryptoeprint:2023/325}.

In this work, we further investigate the task of provable deletion of information via destructive measurements. 
We focus on building primitives that satisfy {\em publicly-verifiable deletion} ($\PVD$). This deletion property allows any participant in possession of a quantum encoding to publish a publicly-verifiable classical certificate proving that they deleted\footnote{In this work, we focus on \emph{information-theoretic} deletion of computationally hidden secrets, where the guarantee is that after deletion, even an unbounded adversary cannot recover the plaintext that was previously determined by their view \cite{cryptoeprint:2022/1178}.} the underlying plaintext. This is in contrast to the weaker {\em privately-verifiable deletion} property, where deletion can be verified only by parties that hold a secret verification key, and this key must remain hidden from the party holding the ciphertext.
Public verification is more desirable due to its stronger security guarantee: secret verification keys do not need to be stored in hidden locations, and security continues to hold even when the verification key is leaked. 
Furthermore, clients can outsource verification of deletion by publishing the verification key itself.

Our approach to building publicly verifiable deletion departs from templates used in prior works on deletion. While most prior works, building on~\cite{Unruh2013,Broadbent_2020}, rely on the combination of a quantum information-theoretic tool such as Wiesner encodings/BB84 states~\cite{Wiesner83,BB84} and a cryptographic object such as an encryption scheme, our work enables publicly-verifiable deletion by directly using simple cryptographic properties of many-to-one hash functions.



\paragraph{The Template, in a Nutshell.} 
When illustrating our approach to publicly-verifiable deletion, it will help to first consider enabling this for a simple cryptographic primitive: a commitment scheme. 
That is, we consider building a statistically binding non-interactive quantum bit commitment scheme where each commitment is accompanied by a classical, {\em public} verification key $\mathsf{vk}$. A receiver holding the commitment may generate a classical proof that they deleted the committed bit $b$, and this proof can be publicly verified against $\mathsf{vk}$. We would like to guarantee that as long as verification accepts, the receiver has information-theoretically removed $b$ from their view and will be unable to recover it given unbounded resources, despite previously having the bit $b$ determined by their view.

To allow verification to be a public operation, it is natural to imagine the certificate or proof of deletion to be a hard-to-find solution to a public puzzle. For instance, the public verification key could be an image $y$ of a (one-way) function, and the certificate of deletion a valid pre-image $f^{-1}(y)$ of this key. Now, the commitment itself must encode the committed bit $b$ in such a way that the ability to generate $f^{-1}(y)$ given the commitment implies information-theoretic deletion of $b$. This can be enabled by encoding $b$ in the {\em phase} of a state supported on multiple pre-images of $y$. 

Namely, given an appropriate {\em two-to-one} function $f$, 
a commitment\footnote{Technically, it is only an appropriate purification of the scheme described here that will satisfy binding; we ignore this detail for the purposes of this overview.} to a bit $b$ can be
\[\mathsf{Com}(b) = \left( y, \ket{0,x_0}_\sA + (-1)^b \ket{1,x_1}_\sA \right) \]
where $(0,x_0), (1,x_1)$ are the two pre-images of (a randomly sampled) image $y$.

Given an image $y$ and a state on register $\sA$, a valid certificate of deletion of the underlying bit could be any pre-image of $y$, which for a well-formed commitment will be obtained by measuring the $\sA$ register in the computational basis. It is easy to see that an immediate {\em honest} measurement of the $\sA$ register implies information-theoretic erasure of the phase $b$. 
But a malicious adversary holding the commitment may decide to perform arbitrary operations on this state in an attempt to find a pre-image $y$ without erasing $b$.

In this work, we analyze  (minimal) requirements on the cryptographic hardness of $f$ in the template above, so that the ability to computationally find any preimage of $y$ given the commitment necessarily implies {\em information-theoretic} erasure of $b$. A useful starting point, inspired by recent conjectures in~\cite{Poremba22}, is the {\em collapsing} property of hash functions. This property was first introduced in~\cite{10.1007/978-3-662-49896-5_18} as a quantum strengthening of collision-resistance. 



\paragraph{Collapsing Functions.}
The notion of \emph{collapsing} considers an experiment where a computationally bounded adversary prepares an arbitrary superposition of preimages of $f$ on a register $\mathsf{A}$, after which the challenger tosses a random coin $c$. If $c = 0$, the challenger measures register $\mathsf{A}$, otherwise it measures a register containing the hash $y$ of the value on register $\mathsf{A}$, thus leaving $\mathsf{A}$ holding a superposition of preimages of $y$. The register $\mathsf{A}$ is returned to the adversary, and we say that $f$ is collapsing if the adversary cannot guess $c$ with better than negligible advantage.
Constructions of collapsing hash functions are known based on LWE~\cite{10.1007/978-3-662-53890-6_6}, low-noise LPN~\cite{crypto-2022-32202}, and more generally on special types of collision-resistant hashes. They have played a key role in the design of post-quantum protocols, especially in settings where proofs of security of these protocols rely on {\em rewinding} an adversary.

It is easy to see that 
\[\mathsf{Com}(b) = \left( y, \ket{0,x_0} + (-1)^b \ket{1,x_1} \right) \]
computationally hides the bit $b$ as long as the function $f$ used to build the commitment above is {\em collapsing}. Indeed, collapsing implies that the superposition $\ket{0,x_0} + (-1)^b \ket{1,x_1}$ is computationally indistinguishable from the result of measurement in the computational basis, and the latter perfectly erases the phase $b$. 
However, $\PVD$ requires  something stronger: we must show that any adversary that generates a valid pre-image of $y$ given the superposition $\ket{0,x_0} + (-1)^b \ket{1,x_1}$, must have {\em information-theoretically} deleted $b$ from its view, despite $b$ being information-theoretically present in the adversary's view before generating the certificate. We show via a careful proof that this is indeed the case for collapsing $f$.
Proving this turns out to be non-trivial. Indeed, a similar construction in~\cite{Poremba22} based on the Ajtai hash function \cite{DBLP:conf/stoc/Ajtai96} relied on an unproven conjecture, which we prove in this work by developing new techniques. 

In addition, we show how $f$ in the template above can be replaced with functions that satisfy weaker properties than collapsing, yielding $\PVD$ from regular variants of one-way functions. We discuss these results below.


\subsection{Our Results}

We introduce new properties of (hash) functions, namely target-collapsing, generalized target-collision-resistance.
We will show that hash functions satisfying these properties (1) can be based on (regular) variants of one-way functions and (2) imply publicly-verifiable deletion in many settings. Our results also use an intermediate notion, a variant of target-collapsing that satisfies certified everlasting security. Before discussing our results, we motivate and discuss these new definitions informally below.

\subsubsection{Definitions}

\paragraph{Target-Collapsing and Generalized Target-Collision-Resistant Functions.}
Towards better understanding the computational assumptions required for $\PVD$, we observe that in the deletion experiment for the commitment above, 
the superposition $\ket{x_0} + (-1)^b \ket{x_1}$ is prepared by an {\em honest committer}. 
This indicates that the collapsing requirement, where  security is required to hold even for an adversarial choice of superposition over preimages, may be overkill.

Inspired by this, 
we consider a natural weakening called {\em target-collapsing}, where {\em the challenger (as opposed to the adversary)} prepares a superposition of preimages of a random image $y$ of $f$ 
on register $\mathsf{A}$. After this, the challenger tosses a random coin $c$. If $c=0$, it does nothing to $\mathsf{A}$, otherwise it measures $\mathsf{A}$ in the computational basis.
The register $\mathsf{A}$ is returned to the adversary, and we say that a hash function is target-collapsing if a computationally bounded adversary cannot guess $c$ with better than negligible advantage.


As highlighted above, this definition weakens collapsing to allow the challenger (instead of the adversary) to prepare the preimage register. 
The weakening turns out to be significant because we show that target-collapsing functions are realizable from relatively weak cryptographic assumptions -- namely variants of one-way functions -- which are unlikely to imply (standard) collapsing or collision-resistant hash functions due to known black-box separations~\cite{10.1007/BFb0054137}.

To enable these instantiations from weaker assumptions, we first further generalize target-collapsing so that 
when $c=1$, the challenger applies a {\em binary-outcome measurement} $M$ 
to $\mathsf{A}$ (as opposed to performing a computational basis measurement resulting in a singleton preimage). 
Thus, a template commitment with $\PVD$ from generalized target-collapsing hashes has the form:
\[\mathsf{Com}(b) = \left( y, \sum_{x:f(x) = y, M(x) 
= 0} \ket{x} + (-1)^b \sum_{x:f(x) = y, M(x) 
= 1} \ket{x} \right). \]
We show that this commitment satisfies $\PVD$ as long as $f$ is target-collapsing w.r.t. the measurement $M$, and satisfies an additional property of ``generalized'' target-collision-resistance (TCR), that we discuss next.


Generalized target-collision-resistance is a quantum generalization of the (standard) cryptographic property of second pre-image resistance/target-collision-resistance.
Very roughly, this considers an experiment where the challenger first prepares a superposition of preimages of a random image $y$ of $f$ on register $\mathsf{A}$. After this, the challenger applies a measurement (e.g., a binary-outcome measurement) $M$ on $\mathsf{A}$ to obtain outcome $\mu$ and sends $\mathsf{A}$ to the adversary. We require that no polynomially-bounded adversary given register $\mathsf{A}$ can output {\em any} preimage $x'$ of $y$ such that $M(x') \neq M(\mu)$ (except with negligible probability)\footnote{We remark that this notion can also be seen as a generalization of ``conversion hardness'' defined in  \cite{HMY}.}.



\paragraph{Certified Everlasting Target-Collapsing.}
In order to show $\PVD$, instead of directly relying on target-collapsing (which only considers computationally bounded adversaries), 
we introduce a stronger notion that we call \emph{certified everlasting} target-collapsing. This considers the following experiment: as before, the challenger prepares a superposition of preimages of a random image $y$ of $f$ on register $\mathsf{A}$. After this, the challenger tosses a random coin $c$. 
If $c = 0$, it does nothing to $\mathsf{A}$, otherwise it applies measurement $M$ to $\mathsf{A}$. The register $\mathsf{A}$ is returned to the adversary, after which the adversary is required to return a pre-image of $y$ as its ``deletion certificate''. While such a certificate can be obtained via an honest measurement of the register $\mathsf{A}$, the {\em certified everlasting target-collapsing} property requires that the following {\em everlasting} security guarantee hold. As long as the adversary is computationally bounded at the time of generating a valid deletion certificate, verification of this certificate implies that the bit $c$ is {\em information-theoretically} erased from the adversary's view, and cannot be recovered even given unbounded resources. That is, if the adversary indeed returns a valid pre-image, they will never be able to guess whether or not the challenger applied measurement $M$.

\subsubsection{New Constructions and Theorems}

\paragraph{Main Theorem.} Now, we are ready to state the main theorem of our paper.
In a nutshell, this says that any (hash) function $f$ that satisfies both target-collapsing and (generalized) target-collision resistance also satisfies {\em certified everlasting} target-collapsing.

\begin{theorem}(Informal).
    If $f$ satisfies target-collapsing and generalized target-collision-resistance with respect to measurement $M$, then $f$ satisfies {\em certified everlasting target-collapsing} with respect to the measurement $M$.
\end{theorem}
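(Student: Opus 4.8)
The plan is to prove the contrapositive, building an efficient adversary that breaks \emph{target-collapsing} out of any efficient adversary that breaks \emph{certified everlasting target-collapsing}, while carefully using \emph{generalized target-collision-resistance} (TCR) to bridge the information-theoretic guarantee after deletion to the computational setting before it. First I would set up the experiment formally: the challenger prepares the uniform superposition $\sum_{x : f(x) = y} \ket{x}$ over preimages of a random image $y$, flips $c$, and either leaves register $\mathsf{A}$ alone (if $c=0$) or applies the binary measurement $M$ (if $c=1$). The adversary $\cA$ receives $\mathsf{A}$, outputs a classical certificate $x'$, and then an unbounded predictor tries to guess $c$ from $\cA$'s residual state. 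The key structural observation is that conditioned on the certificate $x'$ being a valid preimage of $y$, TCR guarantees that $M(x') = \mu$ (the measurement outcome the challenger would have obtained), \emph{except with negligible probability}. This is what lets me argue that a valid certificate essentially commits the adversary to a particular value of $M$ applied to its own state.

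The heart of the argument is a hybrid/simulation step. I would define a ``measure-then-delete'' process: imagine the challenger \emph{always} applies $M$ (i.e.\ behaves as in the $c=1$ branch) but only \emph{records} the outcome rather than acting differently. The claim I want is that, on the event that $\cA$ produces a valid preimage $x'$, the post-certificate state of $\cA$ is statistically close whether or not $M$ was actually applied --- because a valid preimage forces $M(x') = \mu$ by TCR, so the state has effectively already collapsed onto the $M$-eigenspace consistent with the certificate. In other words, once the certificate is validated, applying $M$ (the $c=1$ action) does nothing detectable to the remaining state, which is exactly the everlasting indistinguishability of $c$ that we must establish. The reduction to \emph{target-collapsing} enters to show that \emph{before} the certificate is produced, $\cA$ cannot have already distilled information about whether $M$ was applied: if an unbounded predictor could guess $c$ with non-negligible advantage from $\cA$'s pre-measurement actions, I would turn $\cA$ (run up to the point of certificate generation) into a target-collapsing distinguisher, contradicting the hypothesis.

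The main obstacle, and the step I expect to require the most care, is correctly \emph{ordering the quantifiers over computational and unbounded phases}. Certified everlasting security allows the adversary to be computationally bounded only \emph{up to} the moment it outputs the certificate, after which it is unbounded; meanwhile target-collapsing is a purely computational guarantee. So I cannot directly feed the unbounded post-certificate predictor into the target-collapsing reduction. The fix is a two-stage decomposition: I would first use TCR (which holds against computational adversaries, applied at certificate-generation time) to argue that the certificate pins down $M(x') = \mu$, thereby reducing the everlasting distinguishing task to distinguishing two states that differ only by whether the \emph{already-determined} measurement $M$ was applied; and then observe that conditioned on this event the two states are \emph{statistically} (not merely computationally) identical, so the unbounded predictor gains nothing. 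The delicate point is bounding the cross terms: I must show that the ``bad'' event where the certificate is valid yet $M(x') \neq \mu$ contributes only negligibly, and that the renormalization incurred by conditioning on a valid (hence non-negligibly-probable) certificate does not blow up the trace distance.

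Concretely, I would structure the write-up as: (i) define the certified-everlasting experiment and the associated projectors for ``valid certificate'' and for the $M$-eigenspaces; (ii) invoke generalized TCR to establish $\Pr[\text{valid } x' \wedge M(x') \neq \mu] \le \negl(\secp)$; (iii) use this to rewrite the real post-certificate state, up to negligible trace distance, as a state that lies entirely in the $M$-eigenspace labeled by $\mu$, so that the $c=0$ and $c=1$ branches coincide on the support of valid certificates; (iv) use \emph{target-collapsing} to rule out the residual possibility that $\cA$ extracts $c$-information during its bounded phase, completing the everlasting bound. I expect step (iii) to carry the real weight, as it is where the interplay between the quantum no-signaling intuition and the classical commitment forced by TCR must be made rigorous.
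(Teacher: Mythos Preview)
Your proposal has a genuine gap in step (iii), and step (iv) does not repair it. The claim that ``a valid preimage forces $M(x') = \mu$ by TCR, so the state has effectively already collapsed onto the $M$-eigenspace consistent with the certificate'' conflates two different objects. TCR tells you that in the $c=1$ branch, the \emph{classical} certificate $x'$ satisfies $M(x') = \mu$ with overwhelming probability; it says nothing about the adversary's \emph{quantum residual state}. In the $c=0$ branch there is no $\mu$ at all (nothing was measured), and the adversary's leftover register can remain in a superposition spanning multiple $M$-eigenspaces even after it has output a valid classical $x'$. So you cannot conclude that ``applying $M$ does nothing detectable to the remaining state'': that is exactly the statement you are trying to prove, not a consequence of TCR. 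Your step (iv) then tries to close this by invoking target-collapsing, but as you yourself note, the predictor is unbounded and cannot be fed into a computational reduction; the ``fix'' you describe just restates the desired conclusion (that the two post-certificate states are statistically close) without deriving it.

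The missing idea, which the paper supplies, is to manufacture an \emph{efficiently checkable predicate on the challenger's own register} whose success \emph{implies} (information-theoretically) that the adversary's state is independent of $c$. Concretely: purify the choice of $c$ into a control qubit $\mathsf{C}$ held by the challenger, so that the $c=0/1$ branches become a coherent superposition; then show that when the certificate $x'$ is valid, the projection of $\mathsf{C}$ onto $\frac{1}{\sqrt{2}}(\ket{0}+(-1)^{M(x')}\ket{1})$ succeeds with overwhelming probability. This projection is efficient, so target-collapsing lets you transfer its success from the measured world (where TCR guarantees it) to the unmeasured world. Once the projection succeeds, $\mathsf{C}$ is unentangled and in a Hadamard-basis state, so measuring it yields a uniformly random $c$ independent of the adversary --- which is the information-theoretic conclusion you need. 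Without this ``efficient predicate on the challenger's side'' trick, there is no bridge from the computational hypotheses to the everlasting guarantee.
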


We also extend recent results from the collapsing literature \cite{cryptoeprint:2022/786,crypto-2022-32202,crypto-2022-32124} to show that for the case of binary-outcome (in fact, polynomial-outcome) measurements $M$, generalized TCR with respect to $M$ actually implies target-collapsing with respect to $M$. Thus, we obtain the following corollary.


\begin{corollary}(Informal).
    If $f$ satisfies generalized target-collision-resistance with respect to a {\em binary-outcome} measurement $M$, then $f$ satisfies {\em certified everlasting target-collapsing} with respect to  the measurement $M$.
\end{corollary}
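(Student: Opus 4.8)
The plan is to derive the corollary by composing the two results already established in the excerpt. Observe that the corollary's hypothesis is strictly that $f$ is generalized target-collision-resistant with respect to a \emph{binary-outcome} measurement $M$, whereas the Main Theorem requires \emph{both} target-collapsing \emph{and} generalized TCR with respect to $M$. The missing ingredient is therefore target-collapsing, and the excerpt already promises an extension of the collapsing literature establishing that, for binary-outcome (indeed polynomial-outcome) $M$, generalized TCR with respect to $M$ \emph{implies} target-collapsing with respect to $M$. So the whole argument is: (i) invoke that extension to upgrade the hypothesis into target-collapsing; (ii) feed the now-available pair of properties into the Main Theorem to conclude certified everlasting target-collapsing.

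The substance lives entirely in step (i), so I would spell out that reduction. Suppose toward a contradiction that some $\QPT$ adversary $\cA$ wins the target-collapsing game for $M$ with non-negligible advantage $\varepsilon$. I would build an adversary $\cB$ against generalized TCR. $\cB$ receives the challenger's register $\mathsf{A}$, holding a superposition of preimages of a random image $y$, together with its own recorded outcome $\mu$ of the binary measurement. Because $M$ is binary-outcome, the preimage superposition splits into just two coherent branches, the $M=0$ branch and the $M=1$ branch, and distinguishing ``$M$ applied'' from ``$M$ not applied'' is exactly distinguishing the incoherent mixture of these two branches from their coherent superposition. The key step is to convert $\cA$'s distinguishing power into the ability to physically produce a preimage in the branch \emph{opposite} to $\mu$: running $\cA$ as a binary measurement and applying a gentle-measurement / rewinding (``almost-as-good-as-new'') argument, a successful distinguisher can be used to collapse $\mathsf{A}$ onto the other branch while disturbing it only slightly, after which measuring in the computational basis yields a preimage $x'$ of $y$ with $M(x') \neq \mu$. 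This is exactly a generalized target collision, contradicting generalized TCR.

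Having obtained target-collapsing, step (ii) is immediate: $f$ now satisfies both target-collapsing and generalized TCR with respect to $M$, so the Main Theorem applies verbatim and yields certified everlasting target-collapsing with respect to $M$. I would also note that the everlasting flavor of the conclusion is inherited entirely from the Main Theorem --- the target-collapsing and TCR hypotheses are only required against computationally bounded adversaries, which is precisely what both the extension of step (i) and the original hypothesis supply.

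The main obstacle is the reduction in step (i), and specifically bounding the disturbance introduced when $\cA$'s distinguisher is repurposed as a measurement: one must argue that, conditioned on the distinguisher behaving as it does on the coherent superposition versus the mixture, the post-measurement state retains enough amplitude on the opposite-$\mu$ branch for a subsequent computational-basis measurement to output a valid second preimage with non-negligible probability. Controlling this requires the gentle-measurement / rewinding machinery from \cite{cryptoeprint:2022/786,crypto-2022-32202,crypto-2022-32124}, and crucially uses that $M$ has only polynomially many outcomes, so that the branch structure is simple enough for the collision to be extracted; extending beyond polynomial-outcome $M$ is exactly where this route would break down.
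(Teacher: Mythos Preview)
Your two-step structure---(i) use TCR to obtain target-collapsing when $M$ is binary-outcome, then (ii) feed both into the Main Theorem---is exactly the paper's route, and step (ii) is immediate as you say.

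The description of the mechanism in step (i), however, is off. You invoke ``gentle-measurement / rewinding'' to ``collapse $\mathsf{A}$ onto the other branch while disturbing it only slightly.'' That is not how the reduction works, and gentle measurement (which controls disturbance when a measurement \emph{accepts with high probability}) is not the relevant tool here. The actual argument in the cited works, and the one the paper imports verbatim as a fact, is a single-shot operator inequality: writing the distinguisher as a projector $D$ and the $M$-branches as orthogonal projectors $\Pi_0,\Pi_1$, one has
\[
\sum_{i}\Big\|\Pi_{1-i}\,D\,\Pi_i\ket{\psi}\Big\|^2 \;\geq\; \frac{1}{2}\Big(\|D\ket{\psi}\|^2-\sum_i\|D\Pi_i\ket{\psi}\|^2\Big)^2.
\]
The right-hand side is (the square of) the target-collapsing advantage; the left-hand side is exactly the probability that the TCR adversary---who receives the post-$M$-measurement state $\Pi_\mu\ket{\psi}$, applies $D$ once, and then measures in the computational basis---outputs a preimage in the $1{-}\mu$ branch. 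No rewinding, no repeated invocations of $\cA$, no almost-as-good-as-new lemma. The point is not that $D$ barely disturbs the state; it is precisely that $D$ \emph{must} rotate weight across branches if it distinguishes the coherent superposition from the mixture. Your intuition that the distinguisher ``moves'' amplitude between branches is correct, but the quantitative control comes from this inequality, not from gentle measurement.
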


\paragraph{Resolving the Strong Gaussian Collapsing Conjecture~\cite{Poremba22}.} 
We now apply the main theorem and its corollary to build various cryptographic primitives with $\PVD$. First, we immediately {\bf prove} the following ``strong Gaussian-collapsing''\footnote{Here, ``Gaussian'' refers to a quantum superposition of Gaussian-weighted vectors, where the distribution assigns probability proportional to
$\rho_\sigma(\vec x) = \exp(-\pi \|\vec x \|^2/ \sigma^2)$ for vectors $\vec x \in \Z^m$ and parameter $\sigma >0$.} conjecture from \cite{Poremba22}, which essentially conjectures that the Ajtai hash function (based on the hardness of SIS) satisfies a certain form of key-leakage security after deletion.
This follows from our main theorem because the Ajtai hash function is known to be collapsing \cite{10.1007/978-3-030-26951-7_12,Poremba22} and collision-resistant (which implies that it is target-collapsing and target-collision-resistant when preimages are sampled from the Gaussian distribution).

\begin{conjecture}[Strong Gaussian-Collapsing Conjecture,~\cite{Poremba22}] 
There exist $n,m,q \in \N$ with $m \geq 2$ and $\sigma > 0$ such that,
for every efficient quantum algorithm $\algo A$,
$$
\Big|\Pr[\mathsf{StrongGaussCollapseExp}_{\algo A,n,m,q,\sigma}(0)=1] - \Pr[\mathsf{StrongGaussCollapseExp}_{\algo A,n,m,q,\sigma}(1)=1]\Big| \leq \negl(\lambda)
$$
with respect to the experiment defined in \Cref{fig:SGC}.
\end{conjecture}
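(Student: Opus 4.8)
The plan is to derive the conjecture as an immediate instance of our main theorem, specialized to the Ajtai hash. Concretely, set $f = f_{\bfA}$ with $f_{\bfA}(\bfx) = \bfA\bfx \bmod q$ for a uniformly random $\bfA \rand \Zq^{n\times m}$, take the preimage distribution to be the discrete Gaussian of width $\sigma$ on $\Z^m$, and let $M$ be the measurement read off from the experiment of \Cref{fig:SGC} (essentially the computational-basis readout producing a single short preimage). The first step is bookkeeping: I would verify that $\mathsf{StrongGaussCollapseExp}_{\algo A,n,m,q,\sigma}(c)$ is, up to syntactic rewriting, exactly the certified everlasting target-collapsing experiment for $(f_{\bfA},M)$ with the challenger's coin set to $c$. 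Here the Gaussian superposition $\sum_{\bfx:\,\bfA\bfx=\by}\rho_\sigma(\bfx)\ket{\bfx}$ over the coset of a measured image $\by$ plays the role of the challenger-prepared preimage state, a valid deletion certificate is a Gaussian preimage of $\by$, and the hidden bit is whether $M$ was applied. Once the experiments are aligned, it suffices to check the two hypotheses of the main theorem for $(f_{\bfA},M)$.

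Next I would establish the two hypotheses. For target-collapsing, I use that the Ajtai hash is collapsing under $\SIS$ \cite{10.1007/978-3-030-26951-7_12,Poremba22}; since target-collapsing only demands indistinguishability when the preimage superposition is the honestly prepared Gaussian state — a special case of the adversarially chosen superposition quantified over in the collapsing experiment — collapsing immediately implies target-collapsing with respect to $M$. (If $M$ is a coarsening of the basis readout, this still goes through, as $M$ can be applied as classical post-processing of a computational-basis measurement.) For generalized target-collision-resistance, I would reduce to collision resistance of $f_{\bfA}$, which again follows from $\SIS$ \cite{DBLP:conf/stoc/Ajtai96}: an adversary that, given the post-$M$ coset register, outputs a preimage $\bfx'$ of $\by$ with $M(\bfx') \neq \mu$ necessarily produces a vector distinct from the preimage obtained by measuring the register, and the two together form a collision. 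Note that because $M$ has super-polynomially many outcomes here, I invoke the main theorem directly rather than the corollary, which only handles (poly-)binary-outcome measurements.

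With both hypotheses in hand, the main theorem yields certified everlasting target-collapsing for $(f_{\bfA},M)$, which in particular implies the computational indistinguishability asserted by the conjecture (indeed it gives the stronger guarantee against unbounded distinguishers following an efficiently produced certificate). The main obstacle I anticipate is not any single reduction — each is conceptually routine — but rather pinning down the parameter regime so that all pieces hold simultaneously. The tuple $(n,m,q,\sigma)$ must be chosen above the smoothing parameter so that $\by = \bfA\bfx$ is statistically close to uniform and the Gaussian coset state is well defined, while keeping $\SIS$ (hence both collapsing and collision resistance) hard and ensuring that certificate verification correctly tests membership in the Gaussian-weighted coset of $\by$. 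Threading these constraints, and confirming that the measurement $M$ extracted from \Cref{fig:SGC} is the same $M$ for which both target-collapsing and target-collision-resistance are argued, is where the care is needed.
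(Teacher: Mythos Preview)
Your overall strategy---reduce to the main theorem applied to the Ajtai hash with the Gaussian preimage distribution---matches the paper's. However, the claim that $\mathsf{StrongGaussCollapseExp}$ aligns with the certified everlasting target-collapsing experiment ``up to syntactic rewriting'' glosses over two genuine discrepancies that the paper explicitly handles.

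First, in \Cref{fig:SGC} the matrix is \emph{not} uniform: it has the planted form $\vec A = [\bar{\vec A}\,\|\,\bar{\vec A}\cdot\bar{\vec x}]$ with $\bar{\vec x}\rand\{0,1\}^{m-1}$, whereas in the target-collapsing experiments the hash key is sampled uniformly. The paper invokes the leftover hash lemma (requiring $m>2n\log q$) to argue these distributions are statistically close; you never mention this step. Second, in $\mathsf{StrongGaussCollapseExp}$ the adversary is \emph{efficient throughout} and is handed the short trapdoor $\vec t=(\bar{\vec x},-1)$ after certificate verification, whereas the certified everlasting experiment has no trapdoor revelation but allows an unbounded second phase. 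The paper's reduction exploits exactly this: the unbounded second-phase reduction, given a uniform $\vec A$, brute-forces a binary $\vec x$ with $\bar{\vec A}\vec x = \bar{\vec A}\bar{\vec x}\bmod q$ and feeds $\vec t=(\vec x,-1)$ to the SGC adversary. Your proposal asserts the everlasting guarantee is ``stronger'' but does not explain how to simulate the trapdoor the SGC adversary expects.

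A smaller point: you attribute the (target-)collapsing of Ajtai to $\SIS$, but the paper derives it from the Gaussian-collapsing property of \cite{Poremba22}, which rests on $\LWE$; $\SIS$ is what gives target-collision-resistance. So the final statement depends on both $\LWE$ and $\SIS$, not $\SIS$ alone.
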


\begin{figure}[!htb]
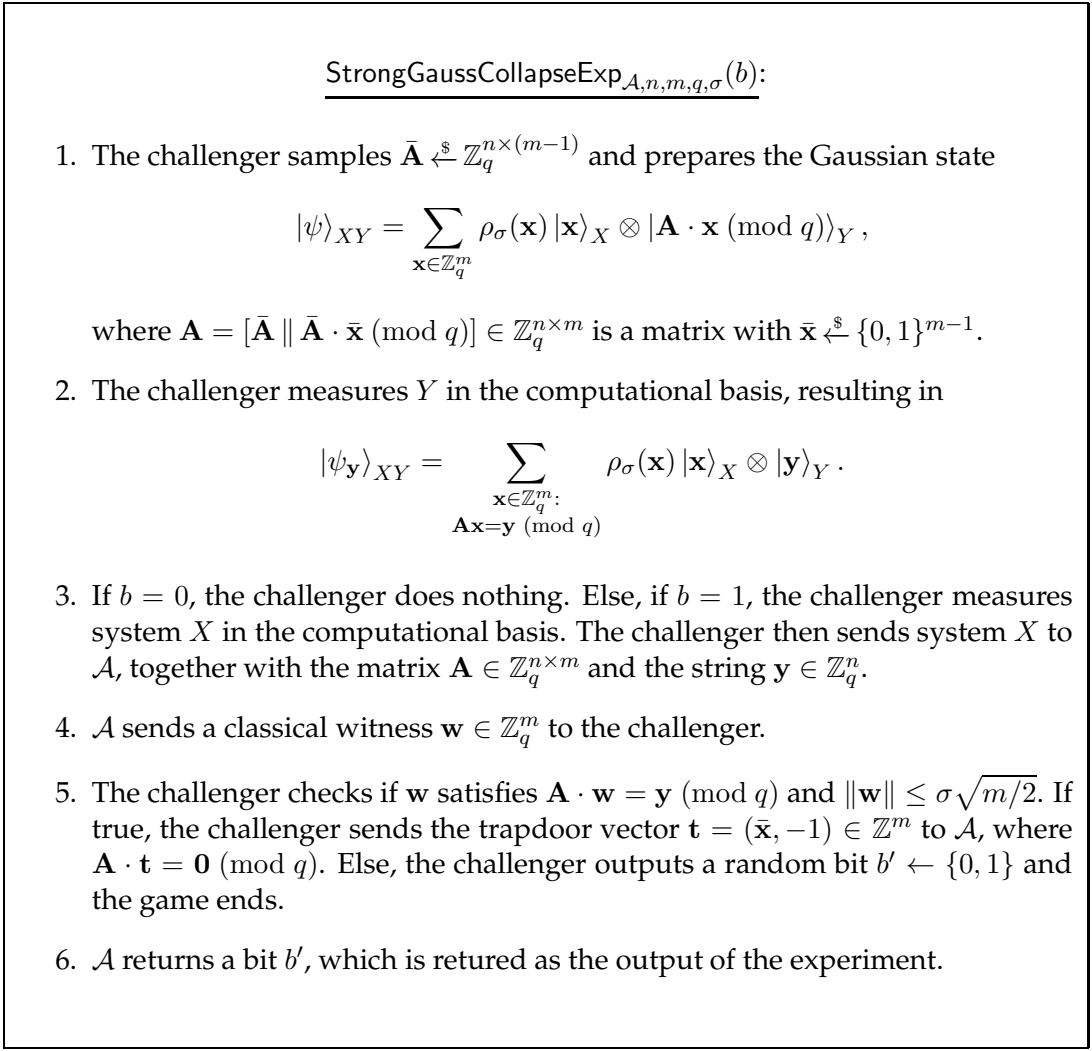

   \begin{center} 
   \begin{tabular}{|p{14cm}|}
    \hline 
\begin{center}
\underline{$\mathsf{StrongGaussCollapseExp}_{\algo A,n,m,q,\sigma}(b)$}: 
\end{center}
\begin{enumerate}
    \item The challenger samples $ \bar{\vec A} \rand \Z_q^{n \times (m-1)}$
    and prepares the Gaussian state
    $$
    \ket{\psi}_{XY} = \sum_{\vec x \in \Z_q^m} \rho_\sigma(\vec x) \ket{\vec x}_X \otimes \ket{\vec A \cdot \vec x \Mod{q}}_Y,
    $$
    where $\vec A = [\bar{\vec A} \, \| \, \bar{\vec A} \cdot \bar{\vec x} \Mod{q}] \in \Z_q^{n \times m}$ is a matrix with $\bar{\vec x} \rand \bit^{m-1}$.

    \item The challenger measures $Y$ in the computational basis, resulting in
    $$
    \ket{\psi_{\vec y}}_{XY} = \sum_{\substack{\vec x \in \Z_q^m:\\ \vec A \vec x= \vec y \Mod{q}}} \rho_\sigma(\vec x) \ket{\vec x}_X \otimes \ket{\vec y}_Y.
    $$
\item If $b=0$, the challenger does nothing. Else, if $b=1$, the challenger measures system $X$ in the computational basis. The challenger then sends system $X$ to $\algo A$, together with the matrix $\vec A \in \Z_q^{n \times m}$ and the string $\vec y \in \Z_q^n$.

\item $\algo A$ sends a classical witness $\vec w \in \Z_q^m$ to the challenger.

\item The challenger checks if $\vec w$ satisfies $\vec A \cdot \vec w = \vec y \Mod{q}$ and $\| \vec w \| \leq \sigma \sqrt{m/2}$. If true, the challenger sends the trapdoor vector $\vec t = (\bar{\vec x},-1) \in \Z^m$ to $\algo A$, where $\vec A \cdot \vec t = \vec 0 \Mod{q}$. Else, the challenger outputs a random bit $b' \gets \{0,1\}$ and the game ends.
 
\item $\algo A$ returns a bit $b'$, which is retured as the output of the experiment.
\end{enumerate}
\ \\
\hline
\end{tabular}
    \caption{The strong Gaussian-collapsing experiment~\cite{Poremba22}.}
    \label{fig:SGC}
    \end{center}
\end{figure}

This conjecture, from~\cite{Poremba22} considers a slightly weaker notion of certified collapsing which resembles the notion of certified deletion first proposed by Broadbent and Islam~\cite{Broadbent_2020}. Here, the adversary is not computationally unbounded once a valid deletion certificate is produced; instead, the challenger simply reveals some additional secret information (in the case of the strong Gaussian-collapsing experiment, the challenger reveals a short trapdoor vector for the Ajtai hash function\footnote{In the strong Gaussian-collapsing experiment it is crucial that the trapdoor is only revealed after a valid certificate is presented; otherwise, the adversary can easily distinguish the collapsed from the non-collapsed world by applying the Fourier transform and using the trapdoor to distinguish $\LWE$ samples from uniformly random vectors~\cite{Poremba22}.}). 

Following results from \cite{Poremba22}, we obtain the following cryptosystems with $\PVD$, for the first time from standard cryptographic assumptions.

\begin{theorem}(Informal)
     Assuming the hardness of $\LWE$ and $\SIS$ with appropriate parameters, there exists public-key encryption and (leveled) fully-homomorphic encryption with $\PVD$.
\end{theorem}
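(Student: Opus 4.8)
The plan is to instantiate the Dual-Regev (and corresponding leveled Gentry--Sahai--Waters) encryption schemes in their quantum, deletion-enabled form following \cite{Poremba22}, and then to obtain publicly-verifiable deletion by invoking the now-established Strong Gaussian-Collapsing statement (\Cref{fig:SGC}), which is itself a consequence of the Main Theorem. Throughout, the underlying function is the Ajtai hash $f_{\vec A}(\vec x) = \vec A \cdot \vec x \Mod{q}$ with key $\vec A \in \Z_q^{n \times m}$ generated via $\GenTrap$ together with a short trapdoor. The public key consists of $\vec A$ and a syndrome $\vec y = \vec A \cdot \vec x^\ast$ for a short $\vec x^\ast$, the classical secret key is the trapdoor (or $\vec x^\ast$), and a deletion certificate is any short $\vec w$ with $\vec A \cdot \vec w = \vec y \Mod{q}$ and $\| \vec w \| \le \sigma \sqrt{m/2}$. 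Since verifying these two conditions uses only the public key, verification is a public operation.

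First I would recall the construction and its basic guarantees. To encrypt a bit $\mu$, the encryptor prepares the Gaussian preimage superposition
\[ \ket{\psi_{\vec y}} = \sum_{\vec x \,:\, \vec A \vec x = \vec y \Mod{q}} \rho_\sigma(\vec x)\, \ket{\vec x} \]
and encodes $\mu$ in the relative phase between the two groups determined by $M$, matching the template commitment of the overview with $f = f_{\vec A}$ and $M$ the binary-outcome measurement induced by the $\LWE$ phase. Decryption uses the secret-key trapdoor for $\vec A$ to coherently apply $M$ and read off $\mu$, while an honest deletion certificate is obtained by measuring the state in the computational basis to yield a short preimage $\vec w$ of $\vec y$. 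Semantic security reduces to $\LWE$: without the trapdoor, target-collapsing of the Ajtai hash---which holds because it is collapsing under $\LWE$/$\SIS$ \cite{10.1007/978-3-030-26951-7_12,Poremba22}---renders the phase superposition computationally indistinguishable from its computational-basis measurement, and the latter perfectly erases $\mu$.

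The crux is publicly-verifiable everlasting deletion, which I would obtain directly from the Main Theorem and its corollary. The Ajtai hash is collapsing \cite{10.1007/978-3-030-26951-7_12,Poremba22} and collision-resistant under $\SIS$; restricted to Gaussian-weighted preimages, these give target-collapsing and generalized target-collision-resistance with respect to $M$, so by the Main Theorem $f_{\vec A}$ is \emph{certified-everlasting} target-collapsing. Translated to the encryption scheme, this says that once a computationally bounded adversary holding the ciphertext submits a valid short preimage $\vec w$ of $\vec y$, the phase bit---and hence $\mu$---is information-theoretically erased from its residual state, so that even an unbounded post-measurement adversary cannot recover $\mu$. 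This is exactly the $\PVD$ guarantee, and it simultaneously resolves the Strong Gaussian-Collapsing conjecture, since that experiment is a weaker, trapdoor-revealing special case.

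Finally, for leveled $\FHE$ I would observe that the Gentry--Sahai--Waters ciphertext built on Dual-Regev has the same structure: homomorphic evaluation acts only on the classical $\LWE$-ciphertext component, while the syndrome $\vec y$ and the Gaussian preimage state that define the deletion certificate are left untouched. Hence the identical deletion argument applies to evaluated ciphertexts, with correctness inherited from the classical scheme up to the usual leveled noise bounds. The main obstacle is to route the scheme's everlasting-deletion game---where the adversary becomes unbounded after producing a certificate---through the certified-everlasting target-collapsing guarantee of the Main Theorem (which is strictly stronger than the trapdoor-revealing experiment of \Cref{fig:SGC}), while accounting for the extra information the encryption adversary holds: the full public key $(\vec A, \vec y)$ and the classical, possibly evaluated, ciphertext. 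One must verify that these classical objects are a fixed function of public randomness and of the message's computational hiding, independent of the phase register on which deletion operates, so that exposing them---before or after certificate generation---does not reintroduce information about $\mu$.
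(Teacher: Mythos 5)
Your overall route is the paper's route: establish that the Ajtai hash, with preimages drawn from the truncated discrete Gaussian, is target-collapsing under $\LWE$ (via the Gaussian-collapsing property, \Cref{thm:Gauss-collapsing}) and target-collision-resistant under $\SIS$, invoke the Main Theorem (\Cref{thm:CETC-generalization}) to get \emph{certified everlasting} target-collapsing (\Cref{thm:ajtai-certified-everlasting}), and then rerun the hybrid templates of \cite{Poremba22} for Dual-Regev PKE and dual-GSW FHE with that everlasting guarantee in place of the conjectured strong Gaussian-collapsing property -- recovering \Cref{SGC} as a corollary along the way, exactly as the paper does. You also correctly identify the one subtlety the paper emphasizes: the everlasting-deletion game (unbounded adversary after certification) is strictly stronger than the trapdoor-revealing experiment of \Cref{fig:SGC}, so the hybrids must be routed through the certified everlasting notion directly, with the trapdoor-revealing variant handled by brute-forcing the trapdoor in the unbounded phase.

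There is, however, one step that would fail as written: you instantiate the template with ``$M$ the binary-outcome measurement induced by the $\LWE$ phase.'' In the Dual-Regev scheme the Fourier-dual ciphertext (\Cref{lem:duality}) carries the phase $\omega_q^{\langle \vec x,\, b\cdot(0,\dots,0,\lfloor q/2\rfloor)\rangle}$, which is a \emph{$q$-ary} phase depending on the full last coordinate $x_{m+1}\in\Z_q$, not a $\pm 1$ phase split across two $M$-classes. Certified everlasting target-collapsing with respect to a binary $M$ only guarantees erasure of information destroyed by measuring that single predicate; a mixture over the two $M$-outcomes still retains relative $q$-ary phases within each class, so $b$ would \emph{not} be statistically erased and the final hybrid would not be independent of the plaintext. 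The paper avoids this entirely by taking $\cM$ to be the identity function, i.e., certified everlasting $D_{\Z_q^m,\sigma/\sqrt{2}}$-target-collapsing with the full computational-basis measurement of the preimage register (which trivially kills any phase pattern), and with identity $\cM$ the TCR hypothesis is immediate from $\SIS$ since any second short preimage yields a collision. The balanced binary-measurement machinery you are importing belongs to the one-way-function-based constructions of \cref{sec:regularOWF}, not to this instantiation. Two smaller inaccuracies in the same spirit: decryption does not ``coherently apply $M$'' but simply measures the primal ciphertext and takes an inner product with the short secret vector; and in \Cref{cons:FHE-cd} homomorphic evaluation is the coherent unitary $U_{\mathsf{NAND}}$ acting on quantum ciphertext registers rather than on a ``classical $\LWE$-ciphertext component'' -- the paper sidesteps your concern about evaluated ciphertexts by treating evaluation and deletion as separate features and reusing the PKE hybrid argument verbatim (\Cref{thm:FHE-CD-security}).
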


Next, we ask whether one necessarily needs to rely on concrete, highly structured assumptions such as LWE in order to achieve publicly-verifiable deletion, or whether weaker generic assumptions suffice. We present a more general approach to building primitives with $\PVD$ from weaker, generic assumptions.

\paragraph{Commitments with $\PVD$ from Regular One-Way Functions.} 
We first formulate the notion of a \emph{balanced binary-measurement} TCR hash, which is any function that is TCR with respect to some appropriately balanced binary-outcome measurement. By balanced, we mean that the set of preimages of a random image will have significant weight on preimages that correspond to both measurement outcomes (this will roughly be required to guarantee the binding property of our commitment/correctness properties of our encryption schemes). By roughly following the template described above, we show that such hashes generically imply commitments with $\PVD$. Next, we show that such ``balanced'' functions can be based on (almost-)regular one-way functions\footnote{This is a generalization of regular one-way functions where preimage sets for different images should be polynomially related in size.} By carefully instantiating this outline, we obtain the following results.

\begin{theorem}(Informal).
    Assuming the existence of almost-regular one-way functions, there exists a balanced binary-outcome TCR hash, and consequently there exist commitments with $\PVD$.
\end{theorem}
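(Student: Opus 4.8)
The statement combines two implications, and the plan is to treat them in sequence. The second --- that any balanced binary-outcome TCR hash yields a commitment with $\PVD$ --- I would obtain essentially for free from the machinery already in place: the balanced binary-measurement TCR hash, call it $(F,M)$, is exactly what the generalized-target-collapsing commitment template consumes, producing $\mathsf{Com}(b) = \big(y,\ \sum_{x:F(x)=y,\,M(x)=0}\ket{x} + (-1)^b\sum_{x:F(x)=y,\,M(x)=1}\ket{x}\big)$; the Corollary (generalized TCR with respect to a binary-outcome $M$ implies certified everlasting target-collapsing with respect to $M$) supplies the everlasting phase-hiding needed for deletion, while balancedness is what makes the two phase-branches comparable and thereby guarantees statistical binding and correct openings. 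Hence the real content is the first implication, on which I focus.

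To build $(F,M)$ from an almost-regular one-way function $f\colon\bit^n\to\bit^m$, I would compose $f$ with a pairwise-independent hash $H\colon\bit^m\to\bit^{t+1}$ and set $F(x)=\big(H(f(x))\big)_{[1,t]}$ together with the description of $H$, and $M(x)=\big(H(f(x))\big)_{t+1}$. The point of composing on the \emph{output} of $f$ is that distinct $M$-branches of a random $F$-image correspond to preimages of genuinely \emph{different} $f$-images, so that crossing from one branch to the other requires inverting $f$ on a fresh image rather than merely finding a second preimage (which a general many-to-one $f$ need not resist). Writing the (almost-)regularity as $2^{k}$, the image of $f$ has size $\approx 2^{\,n-k}$, so I would choose $t$ so that a length-$(t{+}1)$ hash value (a ``sub-bucket'') captures $\Theta(1)$ distinct $f$-images in expectation; a second-moment computation using pairwise independence then shows that, for a noticeable fraction of images $y$, both sub-buckets $(y\|0)$ and $(y\|1)$ are nonempty and, by almost-regularity, carry weights within a $\poly$ factor of each other, i.e.\ $M$ is balanced. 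Since $k$ is known only up to a $\poly$ factor, I would guess it among the $O(n)$ scales at a polynomial cost.

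The heart of the argument is generalized target-collision-resistance of $(F,M)$, which I would establish by reduction to the one-wayness of $f$ via a planting argument. On challenge $y^\ast=f(x^\ast)$, the reduction $B$ coherently evaluates $f$ and measures the image register to obtain a random $f$-image $\tilde y$ and the superposition $\sum_{x:f(x)=\tilde y}\ket{x}$ over its preimages. It then samples $H$ \emph{conditioned} on $\tilde y$ and $y^\ast$ sharing their first $t$ bits (same bucket $y$) but differing in the final bit (so $y^\ast$ lands on the branch opposite to $\tilde y$, say $M(\tilde y)=\mu$ and $M(y^\ast)=1-\mu$). Conditioning a pairwise-independent hash on its values at the two points $\{\tilde y,y^\ast\}$ is statistically close to sampling it uniformly, so the triple (state, hash, $F$-image) that $B$ hands to the adversary $\cA$ is close to a real execution, \emph{provided} the honest ``all label-$\mu$ preimages in bucket $y$'' state coincides with $B$'s single-image state. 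To guarantee this I would condition on the good event $G$ that the two occupied sub-buckets $(y\|\mu)$ and $(y\|1{-}\mu)$ each contain \emph{exactly one} $f$-image; by the $\Theta(1)$-occupancy tuning, $G$ has constant probability, under $G$ the honest state is exactly the preimages of $\tilde y$, and the \emph{unique} opposite-branch image is $y^\ast$. Any valid answer $x'$ of $\cA$ satisfies $F(x')=y$ and $M(x')=1-\mu$, hence $f(x')=y^\ast$, so $B$ returns a preimage of its challenge, inverting $f$ with probability $\Omega(\varepsilon)$ whenever $\cA$ wins with advantage $\varepsilon$.

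The step I expect to fight hardest with is precisely this reconciliation: the honest challenger's branch is a superposition over the preimages of \emph{all} same-label $f$-images in a sub-bucket, whereas the only state $B$ can prepare while simultaneously planting $y^\ast$ is the single-image state of $\tilde y$; bridging the two forces the $\Theta(1)$-occupancy regime and the conditioning on $G$, and it is the same regime that must still deliver enough balance for the commitment. Getting these demands to coexist --- controlling sub-bucket occupancy under mere \emph{almost}-regularity, bounding the statistical error of the two-point-conditioned pairwise-independent hash, and keeping $y^\ast$ distributed like a genuine opposite-branch image --- is where the care goes. Once the first implication is in hand, combining it with the Corollary and the template as in the first paragraph yields commitments with $\PVD$, completing the proof.
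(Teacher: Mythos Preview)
Your reduction of the second implication to the corollary and commitment template is fine and matches the paper. The gap is in the first implication, precisely at the tension you flag but do not resolve.

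In the $\Theta(1)$-occupancy regime you need for your planting argument, a \emph{constant} fraction of buckets will have one of the two sub-buckets empty, so the balance condition in the definition of a balanced binary-measurement TCR hash---which must hold with probability $1-\negl(\secp)$ over $(h,x)$---fails outright. You write ``noticeable fraction,'' but noticeable is not enough: binding for the commitment template uses this overwhelming-probability bound directly. Conversely, if you lower $t$ so that each sub-bucket has $\omega(1)$ images (which would give balance with overwhelming probability), your state simulation breaks: the honest branch-$\mu$ state is a superposition over preimages of \emph{all} $f$-images in sub-bucket $(y\|\mu)$, while the only state your reduction $B$ can prepare is the single-image superposition for $\tilde y$, and these are far in trace distance. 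You cannot condition your analysis on the event $G$ to escape this, because the adversary's success in the honest game could be concentrated entirely on $\neg G$ (the state under $G$ versus $\neg G$ is genuinely different and the adversary may well distinguish them). So the two requirements---overwhelming balance and faithful state simulation---are in direct conflict under your construction, and pairwise independence does not bridge them.

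The paper resolves this by a different decomposition. It first invokes a result of Haitner et al.\ that turns an almost-regular one-way function into a function $f'$ that is $\delta$-\emph{balanced over its entire range} (every range element outside a negligible bad set has preimage mass within $[1-\delta,1+\delta]\cdot 2^{-n}$) and that is \emph{one-way over its range} (hard to invert on a uniformly random range element, not just on $f'(x)$ for random $x$). It then composes with the trivial two-to-one map $z\mapsto \min(z,z\oplus\Delta)$ for a uniformly random shift $\Delta$, with $M$ reading off which of $\{z,z\oplus\Delta\}$ the intermediate value is. Balance now holds with $1-\negl$ probability because both $z$ and $z\oplus\Delta$ are uniform over the range and hence outside the bad set. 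And the TCR reduction is exact, with no occupancy conditioning: given a uniform-range challenge $z$, sample $z'$ by evaluating $f'$ on a random input (so you hold the correct branch state), set $\Delta = z\oplus z'$, and observe that this perfectly matches the honest distribution since $z'\oplus\Delta$ is uniform. The step you are missing is this intermediate object---a function balanced over its \emph{range} and one-way over its \emph{range}---which is exactly what makes the two demands coexist.
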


\paragraph{Public-Key Encryption with $\PVD$ from Regular Trapdoor Functions.}
Next, we take this framework to the public-key setting, showing that any balanced binary-outcome TCR hash with an additional ``trapdoor'' property 
generically implies a public-key encryption scheme with $\PVD$. The additional property roughly requires the existence of a trapdoor for $f$ that enables recovering the phase term from the quantum commitments discussed above: we call this {\em trapdoor phase-recoverability}. We show that balanced binary-outcome TCR, with trapdoor phase-recoverability, can be based on injective trapdoor one-way functions or pseudorandom group actions (the latter builds on~\cite{HMY}). 

\begin{theorem} (Informal).
    Assuming the existence of injective trapdoor one-way functions or pseudorandom group actions, there exists a balanced binary-outcome TCR hash with trapdoor phase-recoverability, and consequently there exists public-key encryption with $\PVD$.
\end{theorem}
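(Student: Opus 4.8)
The plan is to prove the theorem in two modular steps: first a generic compiler that turns \emph{any} balanced binary-outcome TCR hash carrying a trapdoor phase-recovery procedure into a public-key encryption scheme with $\PVD$, and second two instantiations of such a hash, one from injective (almost-regular) trapdoor one-way functions and one from pseudorandom group actions. The balanced binary-outcome TCR hash itself (without the trapdoor) is already supplied by the preceding theorem, so the genuinely new content is (a) adjoining trapdoor phase-recoverability and (b) the public-key compiler together with its security analysis.

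\emph{The compiler.} Given a balanced binary-outcome TCR hash $f$ with measurement $M$ and a trapdoor $\td$ enabling phase-recovery, I would set $\KeyGen(1^\lambda)$ to sample $(f,\td)$ and output $\pk = f$, $\sk = \td$. To encrypt a bit $b$, the encryptor prepares $\sum_x \ket{x}\ket{f(x)}$, measures the image register to obtain $y$, and applies the phase $(-1)^{b\cdot M(x)}$ coherently, producing
\[
\ct = \Big( y, \ \sum_{x:\,f(x)=y,\,M(x)=0}\ket{x} \ + \ (-1)^b \sum_{x:\,f(x)=y,\,M(x)=1}\ket{x} \Big),
\]
which is exactly the template commitment state. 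Decryption runs the trapdoor phase-recovery procedure to read off $b$ (correctness); deletion measures the register in the computational basis to obtain a witness $w$, and $\Vrfy(\pk,y,w)$ accepts iff $f(w)=y$ and $w$ lies in the valid (e.g.\ bounded) domain. Semantic security reduces to target-collapsing of $f$: collapsing the preimage register in the $M$-basis erases the relative phase, so $\Enc(0)$ and $\Enc(1)$ are each indistinguishable from one and the same $M$-collapsed mixture, hence from each other. Here I would invoke the extension of the collapsing results quoted earlier, by which generalized TCR with respect to a binary-outcome $M$ already implies target-collapsing with respect to $M$.

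\emph{The $\PVD$ guarantee.} The heart of the argument is deriving $\PVD$ from the certified everlasting target-collapsing conclusion of the Corollary. Writing $\ket{A_\mu}$ for the (sub-normalized) branch $\sum_{f(x)=y,\,M(x)=\mu}\ket{x}$, the two ciphertexts are $\ket{A_0}+\ket{A_1}$ and $\ket{A_0}-\ket{A_1}$, whereas the $M$-collapsed state is the mixture $\propto \proj{A_0}+\proj{A_1}$, which equals $\propto \proj{A_0+A_1}+\proj{A_0-A_1}$ since the two equal-norm orthogonal branches make $\{A_0,A_1\}$ and $\{A_0\pm A_1\}$ mutually unbiased. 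Certified everlasting target-collapsing states that, conditioned on a valid certificate, a now-unbounded adversary cannot distinguish $\ket{A_0}+\ket{A_1}$ from this mixture; by the mutual-unbiasedness identity this forces the post-certificate residual states for $b=0$ and $b=1$ to be statistically close, which is precisely information-theoretic erasure of $b$. I would make this rigorous with a short twirling/averaging argument over the encrypted bit.

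\emph{Instantiations and the main obstacle.} For injective (almost-regular) trapdoor one-way functions I would instantiate the balanced binary-outcome TCR hash of the preceding theorem with the trapdoor function and observe that $\td$ supplies phase-recoverability: using $\td$ one coherently computes the $M$-label, compresses the branches $\ket{A_0},\ket{A_1}$ onto a single ancilla to obtain $\ket{0}+(-1)^b\ket{1}$ up to a recoverable register, and reads $b$ via a Hadamard. For pseudorandom group actions I would follow~\cite{HMY}: the two-to-one map $x\mapsto g\cdot x$ with secret $g$ as trapdoor has label bit $M$, generalized TCR follows from one-wayness/pseudorandomness of the action, and phase-recovery is obtained by applying $g^{-1}$ to align the branches before a Hadamard readout. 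The crux I expect to be delicate is reconciling trapdoor phase-recoverability—which is exactly the secret-key holder's ability to \emph{distinguish} the two phase states—with the target-collapsing and everlasting guarantees, which must still hold against the ciphertext holder who lacks $\td$; I must argue that target-collapsing (hence the Corollary) survives publication of $\pk=f$ with $\td$ hidden while decryption under $\td$ remains correct. For the injective case the extra difficulty is that injectivity offers no collisions at all, so the hardest verification is that the multiplicity-introducing gadget underlying the balanced hash simultaneously keeps $M$ balanced enough for well-defined phase-recovery, preserves targeted-collision hardness, and stays invertible under $\td$.
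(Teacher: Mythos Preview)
Your high-level decomposition matches the paper's: build PKE-with-$\PVD$ generically from a balanced binary-measurement TCR hash equipped with trapdoor phase-recoverability, then instantiate from injective trapdoor OWFs and from pseudorandom group actions. Your $\PVD$ argument via the identity $\proj{A_0}+\proj{A_1}=\tfrac{1}{2}(\proj{A_0+A_1}+\proj{A_0-A_1})$ is a legitimate and slightly different route than the paper's (the paper instead uses the efficient phase-flip $x\mapsto(-1)^{M[h](x)}$ to transport certified-everlasting target-collapsing from $\ket{\psi_0}$ to $\ket{\psi_1}$). Note, though, that your identity holds for \emph{any} orthogonal $A_0,A_1$, not because they are ``equal-norm''; the balanced property only gives comparably-sized, not equal, branches, so your invocation of ``mutual unbiasedness'' is misnamed even if the algebra goes through (via $\TD(\rho,\tfrac{1}{2}\rho+\tfrac{1}{2}\sigma)=\tfrac{1}{2}\TD(\rho,\sigma)$).

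There are two genuine gaps. First, your single-shot compiler does not achieve $1-\negl(\lambda)$ decryption correctness: because the branches are only $\delta$-balanced (not perfectly so), trapdoor phase-recovery only guarantees a \emph{constant} advantage ($\Pr[\Recover\to 0\mid b{=}0]\ge c+\epsilon$ versus $\le c-\epsilon$). The paper therefore encrypts $\lambda$ parallel copies and decrypts by thresholding; without this amplification your scheme is not a correct PKE. Second, and more seriously, your phase-recovery sketch for the injective-TDF instantiation does not work as written. After the balancing construction the hash is $f_\Delta\circ h_{\text{univ}}\circ f$, so the two branches $\ket{A_0}$ and $\ket{A_1}$ are uniform superpositions over \emph{distinct} preimage sets of the (now non-injective) function $h_{\text{univ}}\circ f$; there is no unitary that ``compresses'' both onto the same register so that an ancilla disentangles into $\ket{0}+(-1)^b\ket{1}$. (That trick works for the group-action case precisely because the two preimages differ by a known group element.) The paper's resolution, which you are missing, is to choose the universal hash to be \emph{itself} superposition-invertible (e.g.\ polynomial evaluation over a finite field, inverted by root-finding), so that together with the trapdoor for $f$ one can \emph{prepare} the state $\ket{\psi_{h,y,0}}\propto\ket{A_0}+\ket{A_1}$ from scratch and then project the ciphertext onto it; this gives $\Recover$ with the required constant gap. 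Your final paragraph correctly senses that the ``multiplicity-introducing gadget'' is the crux, but does not identify this mechanism.
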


We also show that injectivity requirement on the trapdoor function can be further relaxed to a notion of ``superposition-invertible'' trapdoor regular one-way function for the results above. Informally, this is a regular one-way function, where a trapdoor allows one to obtain a uniform superposition over all preimages of a given image. This is an example of a {\em generic assumption} that is not known to, and perhaps is unlikely to, imply classical public-key encryption -- but does imply PKE with quantum ciphertexts, and in fact even one that supports $\PVD$. The only other assumption in this category is the concrete assumption that pseudorandom group actions exist~\cite{HMY}.

\begin{theorem} (Informal).
    Assuming the existence of superposition-invertiable regular trapdoor functions, there exists a balanced binary-outcome TCR hash with trapdoor phase-recoverability and consequently, there exists public-key encryption with $\PVD$.
\end{theorem}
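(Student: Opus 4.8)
The plan is to prove the two implications in sequence, invoking the generic machinery already developed above. By the preceding generic transformation, any balanced binary-outcome TCR hash equipped with trapdoor phase-recoverability yields a \PKE\ scheme with \PVD; hence it suffices to construct such a hash from a superposition-invertible regular trapdoor function $g$. This construction is meant to generalize the injective-trapdoor case, replacing the use of a unique preimage with the ability---granted by superposition-invertibility---to coherently prepare, given the trapdoor, the uniform superposition $\sum_{x : g(x) = y}\ket{x}$ over the fiber of any image $y$.

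First I would fix the hash $f$ to be $g$ itself (sampling a random image as $y = g(x^*)$ for uniform $x^*$, which is a random image by regularity) and design the balanced binary measurement $M$. Since $g$ is regular, every fiber has, up to polynomial factors, the same size, so applying a pairwise-independent hash $h$ to the preimage and reading off one near-uniform bit defines a binary outcome $M(x) = h(x)$ that, by a leftover-hash-lemma argument, splits each fiber into two halves of comparable weight, establishing balancedness. (In the injective sub-case the fiber is a singleton; there one instead creates the two halves by pairing, i.e.\ by appending a label bit and exploiting the hardcore-bit structure, which is the gadget being abstracted here.) The commitment state is then the honestly prepared $\sum_{x : f(x)=y,\, M(x)=0}\ket{x} + (-1)^b \sum_{x : f(x)=y,\, M(x)=1}\ket{x}$, obtainable by superposition-inversion followed by measuring the image register and coherently computing $M$ into the phase.

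Next I would verify the two remaining properties. For \emph{trapdoor phase-recoverability}, the holder of the trapdoor uses superposition-invertibility to reprepare a reference fiber superposition and interferes it with the received state (a controlled comparison and uncomputation of $M$), so that the relative phase $(-1)^b$ is rotated into an amplitude and $b$ is read out by a final measurement; correctness here is a direct calculation using that the two halves have comparable support. The substantive step is \emph{generalized target-collision-resistance} with respect to $M$: I must show that no efficient adversary, given the post-measurement state supported entirely on the $M=\mu$ half of the fiber of $y$, can output a preimage $x'$ of $y$ with $M(x') \neq \mu$. The intuition is that the adversary's state carries no information about the opposite half of the fiber, so producing a preimage there amounts to inverting $g$ on a fresh point; I would formalize this as a reduction to the one-wayness (equivalently, the conversion hardness in the sense of \cite{HMY}) of $g$, embedding the external challenge into the $M = 1-\mu$ half.

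The hard part will be making this reduction go through, because simulating the TCR challenger appears to require the trapdoor (to prepare the $M=\mu$ half of the fiber), whereas the one-wayness game withholds it. Resolving this tension is the main obstacle: I would sample the challenge image as $y = g(x^*)$ so that the reduction knows the $M=\mu$ side explicitly enough to simulate the adversary's state via superposition-inversion, while arguing---using regularity together with the information-theoretic independence of the two halves---that a preimage landing in the $M = 1-\mu$ half constitutes a genuine inversion the reduction could not itself have produced. Once generalized TCR is established, the Corollary above upgrades it (for binary-outcome $M$) to certified everlasting target-collapsing, which supplies the everlasting deletion guarantee; combined with trapdoor phase-recoverability and the generic transformation, this yields \PKE\ with \PVD, completing the proof.
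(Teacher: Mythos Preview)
Your proposal has a genuine gap in the TCR step, and it is precisely the tension you flag: with your choice of measurement $M(x)=h(x)$ on the \emph{preimage}, the post-$M$-measurement state the TCR challenger hands out is a uniform superposition over $\{x:g(x)=y,\,h(x)=\mu\}$. A reduction to one-wayness that receives only a challenge image (or even a single preimage $x^*$) cannot prepare this state without the trapdoor; ``superposition-inversion'' is exactly what the trapdoor provides and what the one-wayness game withholds. Your suggested fix of sampling $y=g(x^*)$ gives the reduction one point in the fiber, not the required superposition over half of it, so the simulation does not go through.

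The paper resolves this by placing the binary split on the \emph{image} side rather than the preimage side. First it composes the almost-regular one-way function with a (superposition-invertible) $3\lambda$-universal hash on the output, obtaining a $\delta$-balanced function $f'$ that is one-way \emph{over its range}. Then for a random shift $\Delta$ it sets $h_\Delta(x)=f_\Delta(f'(x))$ with $f_\Delta(z)=\min\{z,z\oplus\Delta\}$, and defines $M[h_\Delta](x)$ by whether $f'(x)$ or $f'(x)\oplus\Delta$ is lexicographically smaller. Now the two halves of a fiber of $h_\Delta$ are exactly the preimage sets of two distinct $f'$-images $z$ and $z\oplus\Delta$. In the TCR reduction, the reduction prepares a uniform superposition over all $x$, computes and measures $f'$ to obtain $z'$ together with the correct superposition over $f'^{-1}(z')$ \emph{without any trapdoor}, then chooses $\Delta\coloneqq z\oplus z'$ so that the external range-challenge $z$ becomes the other half; any $x'$ the adversary returns with $M=1-\mu$ satisfies $f'(x')=z$, breaking one-wayness over the range. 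Trapdoor phase-recoverability then follows because the universal hash (e.g.\ polynomial evaluation) is itself superposition-invertible, so the composed function is superposition-invertible with the original trapdoor, letting the decryptor prepare $\ket{\psi_{h,y,0}}$ and project.
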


\paragraph{Advanced Encryption with $\PVD$ from Weak Assumptions}
Finally, we show that hybrid encryption gives rise to a generic compiler for encryption with $\PVD$, obtaining the following results. 



\begin{theorem}(Informal).
    Assuming the existence of injective trapdoor one-way functions or pseudorandom group actions, and $X \in \{$attribute-based encryption, quantum fully-homomorphic encryption, witness encryption, timed-release encryption$\}$,
    there exists $X$ with $\PVD$.
\end{theorem}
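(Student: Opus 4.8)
The plan is to give a generic hybrid-encryption (KEM--DEM) compiler that lifts any scheme $X$ to a scheme $X$ with $\PVD$, using the public-key encryption with $\PVD$ guaranteed by the previous theorem (call it $\mathsf{PKE}_{\PVD}$, which we may instantiate from injective trapdoor one-way functions or pseudorandom group actions). To encrypt a message $m$ under the relevant $X$-parameter $\alpha$ (an attribute/policy for attribute-based encryption, a statement for witness encryption, a release-time for timed-release encryption, or simply the public key for QFHE), I would sample $(\mathsf{pk},k) \gets \mathsf{PKE}_{\PVD}.\KeyGen$, compute the deletable quantum ciphertext $(\rho,\vk) \gets \mathsf{PKE}_{\PVD}.\Enc(\mathsf{pk},m)$, and classically bind the inner secret key to $X$ via $\mathsf{ct}_X \gets X.\Enc(\alpha,k)$. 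The compiled ciphertext is $(\mathsf{ct}_X,\rho,\mathsf{pk},\vk)$: a party authorized by $X$ (a satisfying key, a witness, enough elapsed time, etc.) recovers $k$ from $\mathsf{ct}_X$ and then runs $\mathsf{PKE}_{\PVD}.\Dec(k,\rho)$, while deletion runs $\mathsf{PKE}_{\PVD}.\Del(\rho)$ and is publicly checked against $\vk$ with $\mathsf{PKE}_{\PVD}.\Vrfy$. This preserves the functionality of $X$, since $X$ gates access to $k$ exactly as it would gate a plaintext; I would flag that QFHE needs the evaluation procedure to act through the layering (homomorphically recovering an encryption of $m$ from $k$ and $\rho$ before evaluating), which requires a little extra care.

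Correctness and ordinary (pre-deletion) $X$-security are routine two-step hybrids: the security of $X$ hides $k$ from an unauthorized bounded adversary, after which the semantic security of $\mathsf{PKE}_{\PVD}$ hides $m$. The substantive goal is certified everlasting $\PVD$: after a valid certificate, $m$ must be information-theoretically gone even against an unbounded adversary. I would reduce this to the certified everlasting security of $\mathsf{PKE}_{\PVD}$, relying on the property---established by the main theorem and mirrored exactly by the strong Gaussian-collapsing experiment, where the trapdoor is handed over only after a valid witness---that $\mathsf{PKE}_{\PVD}$ remains secure even when its secret key $k$ is revealed to the (now unbounded) adversary after deletion.

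The reduction runs in two phases split by the certificate. In the bounded pre-deletion phase I would use the security of $X$ to replace $\mathsf{ct}_X = X.\Enc(\alpha,k)$ by $X.\Enc(\alpha,0)$; since the adversary is unauthorized and $\QPT$ before producing a certificate, this leaves the certificate and the post-measurement residual state indistinguishable, and---crucially---the modified experiment can be simulated by a reduction that does \emph{not} know $k$, so it can faithfully embed the external $\mathsf{PKE}_{\PVD}$ challenge encrypting $m_b$. Invoking certified everlasting security of $\mathsf{PKE}_{\PVD}$ then yields that, conditioned on a valid certificate, the joint state of the residual register together with the later-revealed $k$ is within negligible trace distance for $m_0$ and $m_1$. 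Because the only remaining side information an unbounded adversary holds---$\mathsf{ct}_X$, $\mathsf{pk}$, and $\vk$---is a message-independent function of $k$ and fresh randomness, its final guess is independent of which message was encrypted, which is exactly everlasting $\PVD$ security.

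The hard part will be composing the computational security of $X$ with the information-theoretic, unbounded-adversary guarantee of deletion: an unbounded post-deletion adversary can always decrypt $\mathsf{ct}_X$ and recover $k$, so one cannot argue hiding of $k$ against it directly, and a naive chain of hybrids that switches $\mathsf{ct}_X$ back and forth fails because the intermediate steps are only computationally indistinguishable. The resolution is to perform and ``cash in'' the $X$-switch entirely within the bounded phase and to lean on the fact that the everlasting conclusion is a statement about the physical residual state at certificate time: once that state is certified information-theoretically independent of $m$ even given $k$, no subsequent unbounded processing (including decrypting $\mathsf{ct}_X$ to recover $k$) can reintroduce a dependence on $m$. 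Concretely, this requires the inner everlasting guarantee to be robust to computationally-hiding auxiliary information about $k$ (namely $\mathsf{ct}_X$ itself), which holds because such auxiliary information is useless to the bounded deletion phase; making this robustness rigorous is the delicate technical point. I expect the per-primitive work to be light for attribute-based encryption, witness encryption, and timed-release encryption (where $X$ merely gates $k$), while QFHE requires additionally defining homomorphic evaluation over the two-layer encoding so that deletion of $\rho$ still certifies erasure.
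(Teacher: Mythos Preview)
Your construction and overall strategy match the paper's: encrypt the inner trapdoor with $X$ and use the PKE-with-PVD ciphertext as the deletable component, with the QFHE caveat handled exactly as you describe (homomorphically decrypting the inner layer before evaluating). You also correctly isolate the only non-routine step as making the everlasting guarantee robust to computationally-hiding auxiliary information about the trapdoor, and you are right that a black-box hybrid at the PKE level mixing a computational switch of $\mathsf{ct}_X$ with the statistical deletion guarantee does not close.

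The paper resolves your ``delicate technical point'' not at the PKE level but by returning to the hash: it introduces auxiliary-input versions of target-collision-resistance and target-collapsing, parameterized by a (possibly interactive) family $\cZ(\td)$, and observes that the proof of the main theorem (TCR $+$ TC $\Rightarrow$ certified-everlasting TC) goes through verbatim with $\cZ$ present. The composition then becomes clean because $(\cM,\cZ)$-TCR follows from plain $\cM$-TCR plus semantic security of $\cZ$---both purely computational statements---after which the certified-everlasting upgrade is applied once, with $\cZ$ already baked in, so no computational/statistical mixing ever occurs. Instantiating $\cZ$ with the $X$-encryption of $\td$ (and, for ABE, letting $\cZ$ be interactive so it can answer key queries) yields the corollary directly.
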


Prior to this work, while there existed encryption schemes with $\PVD$ from non-standard assumptions such as one-shot signatures~\cite{hiroka2021quantum}, conjectured strong collapsing~\cite{Poremba22} or post-quantum indistinguishability obfuscation~\cite{BGGKMRR}, no basic or advanced cryptosystems supporting $\PVD$ were known from standard assumptions.
We provide a more detailed overview of prior work below.

\subsection{Prior work}

The first notion resembling \emph{certified deletion} was introduced by
Unruh \cite{Unruh2013} who proposed a (private-key) quantum timed-release
encryption scheme that is \emph{revocable}, i.e. it allows a user to \emph{return} the ciphertext of a quantum timed-release encryption scheme, thereby losing all access to the data. Unruh's scheme uses conjugate coding~\cite{Wiesner83,BB84} and relies on the \emph{monogamy of entanglement} in order to guarantee that revocation necessarily erases information about the plaintext. 
Broadbent and Islam~\cite{Broadbent_2020} introduced the notion of \emph{certified deletion} and constructed a private-key quantum encryption scheme with the aforementioned feature which is inspired by the quantum key distribution protocol~\cite{BB84,Tomamichel2017largelyself}. 
In contrast with Unruh's~\cite{Unruh2013} notion of revocable quantum ciphertexts which are eventually returned and verified, Broadbent and Islam~\cite{Broadbent_2020} consider certificates which are entirely classical. 
Moreover, the security definition requires that, once the certificate is successfully verified, the plaintext remains hidden even if the secret key is later revealed.
Inspired by the notion of quantum copy-protection~\cite{Aar09},
Ananth and La Placa~\cite{ananth2020secure} defined a form of quantum software protection called \emph{secure software leasing} whose anti-piracy notion requires that the encoded program is returned and verified.

Using a hybrid encryption scheme, Hiroka, Morimae, Nishimaki and Yamakawa~\cite{hiroka2021quantum} extended the scheme in~\cite{Broadbent_2020} to both public-key and attribute-based encryption with privately-verifiable certified deletion via \emph{receiver non-committing} encryption~\cite{10.5555/1756169.1756191,10.1145/237814.238015}. 
Hiroka, Morimae, Nishimaki and Yamakawa~\cite{hiroka2021certified} considered \emph{certified everlasting zero-knowledge proofs} for $\mathsf{QMA}$ via the notion of \emph{everlasting security} which was first formalized by M\"{u}ller-Quade and Unruh~\cite{10.1007/978-3-540-70936-7_3}. Bartusek and Khurana~\cite{cryptoeprint:2022/1178} revisited the notion of certified deletion and presented a unified approach for how to generically convert any public-key, attribute-based, fully-homomorphic, timed-release or witness encryption scheme into an equivalent quantum encryption scheme with certified deletion. In particular, they considered a stronger notion called \emph{certified everlasting security} which allows the adversary to be computationally unbounded once a valid deletion certificate is submitted. This is also the definition we consider in this work.
In the same spirit, Hiroka, Morimae, Nishimaki and Yamakawa~\cite{cryptoeprint:2022/969} gave a \emph{certified everlasting} functional encryption scheme which allows the receiver of the ciphertext to obtain the outcome specific function applied the plaintext, but nothing else.
In other very recent work, Ananth, Poremba and Vaikuntanathan~\cite{cryptoeprint:2023/325} used Gaussian superpositions to construct (key)-revocable cryptosystems, such as public-key encryption, fully homomorphic encryption and pseudorandom functions assuming the hardness of $\LWE$, and Agarwal et al. \cite{AKNYY} introduced a generic compiler for adding key-revocability to a variety of cryptosystems. In these systems, the cryptographic key consists of a quantum state which can later be \emph{certifiably revoked} via a quantum channel -- in contrast with the classical deletion certificates for ciphertexts considered in this work.

\paragraph{Cryptosystems with Publicly Verifiable Deletion.}
First, in addition to their results in the setting of private verification,~\cite{hiroka2021quantum} also gave a public-key encryption scheme with certified deletion which is \emph{publicly verifiable} assuming the
existence of one-shot signatures (which rely on strong black-box notions of obfucation) and extractable witness encryption. Using 
\emph{Gaussian superpositions},
Poremba~\cite{Poremba22} proposed \emph{Dual-Regev}-based public-key and fully homomorphic encryption schemes with certified deletion which are publicly verifiable and proven secure assuming the (then unproven) \emph{strong Gaussian-collapsing conjecture} --- a strengthening of the collapsing property of the Ajtai hash.
Finally, a recent work~\cite{BGGKMRR} relies on post-quantum indistinguishability obfuscation (iO) to obtain both publicly verifiable deletion and publicly verifiable key revocation. This is a strong assumption for which we have candidates, but no constructions based on standard (post-quantum) assumptions at this time. 

\section{Technical Overview}

In this overview, we begin by discussing the key ideas involved in proving our main theorem. We show how to prove publicly verifiable deletion for a toy protocol that relies on stronger assumptions than the ones that we actually rely on in our actual technical sections. 

Next, we progressively relax these assumptions to instantiate broader frameworks, including the one from~\cite{Poremba22}, obtaining public-key encryption and fully-homomorphic encryption with $\PVD$ from LWE/SIS.

Finally, we further generalize this to enable constructions from weak cryptographic assumptions -- including commitments with $\PVD$ from variants of one-way functions and PKE with $\PVD$ from trapdoored variants of the same assumption. We also discuss a hybrid approach that enables a variety of advanced encryption schemes supporting $\PVD$.

\subsection{Proving Our Main Theorem}
\label{sec:overreq}
Consider the toy commitment
\[\mathsf{Com}(b) = \left( y, \ket{0, x_0} + (-1)^b \ket{1, x_1} \right) \]
where $(0, x_0), (1, x_1) $ are preimages of $y$ under a structured two-to-one function $f$, where every image has a preimage that begins with a $0$ and another that begins with a $1$.  
We note that this commitment can be efficiently prepared by first preparing a superposition over all preimages 
\[\sum_{b \in \{0,1\}, x \in \{0,1\}^\secp} \ket{b,x}\] 
on a register $\mathsf{X}$, then writing the output of $f$ applied on $X$ to register $\mathsf{Y}$, and finally measuring the contents of register $\mathsf{Y}$ to obtain image $y$. The register $\mathsf{X}$ contains $\ket{0, x_0} + \ket{1, x_1}$, which can be converted to $\ket{0, x_0} + (-1)^b \ket{1, x_1}$ via (standard) phase kickback.


To show  that the commitment satisfies publicly-verifiable deletion, 
we consider an adversary $\cA = (\cA_1, \cA_2)$ where $\cA_1$ is (quantum) polynomial time and $\cA_2$ is unbounded, participating in the following experiment.

\begin{itemize}
\item The challenger samples $b \leftarrow \{0,1\}$ and runs $\mathsf{Expmt}_0(b)$, described below.\\
\noindent{\underline{$\mathsf{Exmpt}_0(b):$}}
\begin{enumerate}
    \item Prepare $\left( \ket{0, x_0} + (-1)^b \ket{1, x_1}, y \right)$ on registers $\mathsf{A}, \mathsf{B}$ and send them to $\cA_1$.
    \item $\cA_1$ outputs a (classical) deletion certificate $\gamma$,\footnote{If the $\cA_1$ outputs a quantum state as their certificate, the state is measured in the computational basis to obtain a classical certificate $\gamma$.} and left-over state $\rho$.
    \item If $f(\gamma) \neq y$, output a uniformly random bit $b' \gets \{0,1\}$, otherwise output $b' = \cA_2(\rho)$.
\end{enumerate}
\item The advantage of $\cA$ is defined to be $\mathsf{Adv}_{\cA}^{\mathsf{Expmt}_0} = \big| \Pr[b' = b]  - \frac{1}{2} \big|$.
\end{itemize}

We discuss how to prove the following.
\begin{claim} (Informal).
\label{clm:overclm}
For every $\cA = (\cA_1, \cA_2)$ where $\cA_1$ is (quantum) computationally bounded, \[\mathsf{Adv}_{\cA}^{\mathsf{Expmt}_0} = \negl(\secp), \]
as long as $f$ is target collapsing and target collision-resistant w.r.t. a computational basis measurement of the pre-image register.
\end{claim}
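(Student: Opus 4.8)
The plan is to show that, conditioned on the event that $\cA_1$ outputs a valid certificate, the committed bit $b$ is \emph{statistically} hidden from the residual state $\rho$; everlasting security against the unbounded $\cA_2$ then follows immediately. The invalid case is free: whenever $f(\gamma)\neq y$ the output $b'$ is an independent uniform bit and contributes exactly $\tfrac12$ to $\Pr[b'=b]$. Hence, writing $p$ for the probability of a valid certificate, $\mathsf{Adv}^{\mathsf{Expmt}_0}_\cA = p\cdot\big|\Pr[b'=b \mid \text{valid}]-\tfrac12\big|$, and it suffices to bound the conditional advantage. I then purify: introduce a phase register $\mathsf{P}$ held by the challenger and write the (uniformly random $b$) commitment as the bipartite state $\ket{\Psi}_{\mathsf{P}\mathsf{A}}$ equal, up to normalization, to $\ket{+}_{\mathsf{P}}\ket{0,x_0}_{\mathsf{A}} + \ket{-}_{\mathsf{P}}\ket{1,x_1}_{\mathsf{A}}$, equivalently $\ket{0}_{\mathsf{P}}\ket{\psi_0}_{\mathsf{A}} + \ket{1}_{\mathsf{P}}\ket{\psi_1}_{\mathsf{A}}$ with $\ket{\psi_b}=\ket{0,x_0}+(-1)^b\ket{1,x_1}$. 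Since $\cA$ never touches $\mathsf{P}$, sampling $b$ and sending $\ket{\psi_b}$ is equivalent to sending $\mathsf{A}$ and measuring $\mathsf{P}$ in the computational ($Z$) basis at the very end to define $b$; thus $\cA_2$ guessing $b$ is exactly the task of predicting the $Z$-measurement of $\mathsf{P}$ from $\rho$. Conversely, the conjugate ($X$) measurement of $\mathsf{P}$ returns the which-preimage bit: it yields $m'\in\{0,1\}$ and collapses $\mathsf{A}$ to $\ket{m',x_{m'}}$, so it is equivalent to an $M$-measurement (computational-basis measurement) of $\mathsf{A}$.

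The crux is the following structural claim, which I establish using \emph{both} hypotheses but invoking only the \emph{efficient} $\cA_1$: conditioned on a valid certificate, the $X$-measurement of $\mathsf{P}$ equals the which-preimage bit $m(\gamma)$ of the certificate, except with negligible probability. This suffices, because for all but a negligible fraction of transcripts $(\gamma,\text{valid})$ the reduced state of $\mathsf{P}$ is then negligibly close to the pure $X$-eigenstate $\ket{(-1)^{m(\gamma)}}$; a nearly pure marginal forces the joint state of $\mathsf{P}$ and $\cA_2$'s registers to be negligibly close to a product, so the $Z$-measurement of $\mathsf{P}$ is uniform and statistically independent of $\rho$. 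Hence $\Pr[b'=b\mid\text{valid}]\le \tfrac12+\negl(\secp)$, as desired.

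To prove the structural claim I exploit that the difference between ``send the preimage superposition'' and ``first $M$-measure $\mathsf{A}$'' is precisely the target-collapsing coin, and I build an \emph{efficient} target-collapsing distinguisher $\cB$ that runs only $\cA_1$. On input $(\mathsf{A},y)$ — with $\mathsf{A}$ either the untouched superposition or its $M$-measurement — $\cB$ appends a fresh $\mathsf{P}=\ket{0}$, applies $\mathsf{CNOT}$ from the which-preimage qubit of $\mathsf{A}$ into $\mathsf{P}$ followed by a Hadamard on $\mathsf{P}$. This reconstructs exactly $\ket{\Psi}_{\mathsf{P}\mathsf{A}}$ in the unmeasured case, and the collapsed product state $\ket{(-1)^m}_{\mathsf{P}}\ket{m,x_m}_{\mathsf{A}}$ in the measured case. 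Then $\cB$ forwards $(\mathsf{A},y)$ to $\cA_1$, receives $(\gamma,\rho)$, measures $\mathsf{P}$ in the $X$ basis to get $m'$, and guesses ``unmeasured'' iff the event $E := [\,f(\gamma)=y\,]\wedge[\,m(\gamma)\neq m'\,]$ occurs. In the measured world $m'=m$ deterministically and $\cA_1$ received an honest $M$-preimage, so target-collision-resistance with respect to $M$ gives $\Pr[E]\le\negl(\secp)$; in the unmeasured world $\Pr[E]$ is exactly the quantity in the structural claim. Target collapsing forces these two probabilities to be negligibly close, so the structural claim's probability is negligible.

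The main obstacle — and the reason the claim is nontrivial — is that $\cA_2$ is unbounded whereas target collapsing is only a \emph{computational} guarantee, so one cannot run $\cA_2$ inside a computational reduction, nor argue that the measured and unmeasured residual states are even computationally close in a way $\cA_2$ respects. The resolution is to use target collapsing and target-collision-resistance solely with the efficient $\cA_1$ to pin down an \emph{information-theoretic} structural fact (a valid certificate forces the conjugate measurement of the purifying register), and only then invoke the resulting near-purity of $\mathsf{P}$ to conclude everlasting hiding of $b$ against the unbounded $\cA_2$. The enabling trick is that the purifying phase register can be coherently reconstructed from $\mathsf{A}$ itself by a $\mathsf{CNOT}$ and a Hadamard, which is what makes the phase/which-preimage duality testable by an efficient distinguisher.
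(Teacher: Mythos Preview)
Your proposal is correct and takes essentially the same approach as the paper. Your purifying register $\mathsf{P}$ is the paper's register $\mathsf{C}$, your ``structural claim'' is exactly the paper's ``efficiently computable predicate'' (that the projection of $\mathsf{C}$ onto $\ket{(-1)^{m(\gamma)}}$ succeeds whenever the certificate is valid), and your $\mathsf{CNOT}$-then-Hadamard reconstruction of $\mathsf{P}$ from the target-collapsing challenge is, up to a standard circuit identity, the paper's preparation of $\ket{+}_{\mathsf{C}}$ followed by a controlled-$\mathsf{Z}$ into $\mathsf{A}$. The only presentational difference is that you purify the uniform $b$ directly into $\mathsf{P}$ and read it off via a $Z$-measurement at the end, whereas the paper first reformulates via a guessed bit $c$ with an abort on $c\neq b$ (losing a factor of $2$) before purifying $c$; your route is a bit more direct but the content is identical.
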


\paragraph{Overview of the Proof of Claim \ref{clm:overclm}.}
To prove this claim, we must  show that $b$ 
is information-theoretically {\em removed} from the leftover state of any $\cA_1$ that generates a valid pre-image of $y$, despite the fact that the adversary's view contains $b$ at the beginning of the experiment.

Proof techniques for this type of experiment were recently introduced in~\cite{cryptoeprint:2022/1178} in the context of {\em privately verifiable deletion} via BB84 states. Inspired by their method, our first step is to defer the dependence of the experiment on the bit $b$. 
In more detail, we will instead imagine sampling
the distribution by guessing a uniformly random $c \leftarrow \{0,1\}$, and initializing the adversary with $\left( \ket{x_0} + (-1)^c \ket{x_1}, y \right)$.
The challenger later obtains input $b$ and aborts the experiment (outputs $\bot$) if $c \neq b$.
Since $c$ was a uniformly random guess, the trace distance between the $b = 0$ and $b = 1$ outputs of this modified experiment is at least half the trace distance between the outputs of the original experiment.
Moreover, we can further delay the process of obtaining input $b$, and then abort or not until {\em after} the adversary outputs a certificate of deletion.
That is, we can consider a {\em purification} where a register $\mathsf{C}$ contains a superposition $\ket{0} + \ket{1}$ of two choices for $c$, and is later measured to determine bit $c$. This experiment is discussed in detail below.\\

\noindent 
\underline{$\mathsf{Expmt}_1(b):$}
The experiment proceeds as follows.
\begin{enumerate}
\item Prepare the $\ket{+}$ state on an ancilla register $\mathsf{C}$, and a superposition of preimages $\ket{x_0} + \ket{x_1}$ of a random $y$ on register $\mathsf{A}$.
\item 
Then, controlled on the contents of register $\mathsf{C}$, do the following:
if the control bit is $0$, do nothing, and otherwise flip the phase on $x_1$ (via phase kickback), changing the contents of $\mathsf{A}$ to $\ket{x_0} - \ket{x_1}$. 
This means that the overall state is
\[
\frac{1}{\sqrt{2}} \sum_{c \in \{0,1\}} \ket{c}_{\mathsf{C}} \otimes \ket{0, x_0}_{\mathsf{A}} + (-1)^c \ket{1, x_1}_{\mathsf{A}}
\]
Send $\mathsf{A}$ to $\cA_1$.  
\item Obtain from $\cA_1$ a purported certificate of deletion $\gamma$.
\item If $f(\gamma) \neq y$, abort, and otherwise measure register $\mathsf{C}$ to obtain output $c$, and abort if $c \neq b$. In the case of abort, output a uniformly random bit $b' \gets \{0,1\}$.
\item If no aborts occurred, output $b' = \cA_2(\rho)$.
\end{enumerate}
We note that the event $c = b$ occurs with probability exactly $\frac{1}{2}$, and since measurements on separate subsystems commute, we have that 
\begin{equation}
\label{eq:over1}
\mathsf{Adv}_{\cA}^{\mathsf{Expmt}_1} \geq \frac{1}{2} \mathsf{Adv}_{\cA}^{\mathsf{Expmt}_0}.
\end{equation}
where $\mathsf{Adv}_{\cA}^{\mathsf{Expmt}_1} = \big| \Pr[\mathsf{Expmt}_1(b) = b]  - \frac{1}{2} \big|$ for $b \leftarrow \{0,1\}$.

Once the dependence of the experiment on $b$ has been deferred, as above, we can consider another experiment (described below) where the challenger measures the contents of register $\mathsf{A}$ {\em before} sending it to $\cA_1$. Intuitively, performing this measurement {\em removes} information about $b$ from $\cA_1$'s view in a manner that is computationally undetectable by $\cA_1$ (due to the target-collapsing property of $f$). \\

\noindent \underline{$\mathsf{Expmt}_2(b):$}
The experiment proceeds as follows.
\begin{itemize}
\item Prepare the $\ket{+}$ state on an ancilla register $\mathsf{C}$, and a superposition of preimages $\ket{x_0} + \ket{x_1}$ of a random $y$ on register $\mathsf{A}$.
\emph{Next, measure register $\mathsf{A}$ in the computational basis.}

Then, controlled on the contents of register $\mathsf{C}$, do the following:
if the control bit is $0$, do nothing, and otherwise flip the phase on $x_1$. 
This means that the overall state is a uniform mixture of the states
\[\frac{1}{\sqrt{2}} \sum_{c \in \{0,1\}} \ket{c}_{\mathsf{C}} \otimes \ket{0, x_0}_{\mathsf{A}} \text{ and }
\frac{1}{\sqrt{2}} \sum_{c \in \{0,1\}} (-1)^c\ket{c}_{\mathsf{C}} \otimes \ket{1, x_1}_{\mathsf{A}}
\]
Finally, send $\mathsf{A}$ to $\cA_1$. 
\item Obtain from $\cA_1$ a purported certificate of deletion $\gamma$.
\item If $f(\gamma) \neq y$, abort, otherwise measure register $\mathsf{C}$ to obtain output $c$, and abort if $c \neq b$. In the case of abort, output a uniformly random bit $b' \gets \{0,1\}$.
\item If no aborts occurred, output $b' = \cA_2(\rho)$.
\end{itemize}

As described above, the target-collapsing property of $f$ implies that $\cA_1$ cannot (computationally) distinguish the register $\mathsf{A}$ obtained in $\mathsf{Expmt}_2(b)$ from the one obtained in $\mathsf{Expmt}_1(b)$. However, this is not immediately helpful:  information about which experiment $\cA_1$ participated in could potentially be encoded into $\cA_1$'s left-over state $\rho$, so that it remains computationally hidden from $\cA_1$ but can be extracted by (unbounded) $\cA_2$. And it is after all the output of $\cA_2$ that determines the advantage of $\cA$. 
Because of $\cA_2$ being unbounded and the experiments only being {\em computationally} indistinguishable, even if we could show that $\mathsf{Adv}_{\cA}^{\mathsf{Expmt}_2} = \negl(\secp)$, it is unclear how to use this to show our desired claim, i.e., $\mathsf{Adv}_{\cA}^{\mathsf{Expmt}_0} = \negl(\secp)$.
It may appear that the proof is stuck.

To overcome this issue, we will aim to identify an {\em efficiently computable predicate} of the challenger's system, which will {\em imply} the following (inefficient) property: when $\cA_1$ outputs a valid deletion certificate, even an unbounded $\cA_2$ cannot determine whether it participated in $\mathsf{Expmt}_1(b)$ or $\mathsf{Expmt}_2(b)$, i.e.,  $\cA_1$'s left-over state is information-theoretically independent of $b$.

\paragraph{Identifying an Efficiently Computable Predicate.} 
Observe that in $\mathsf{Expmt}_2(b)$, the ancilla register $\mathsf{C}$ is {\em unentangled} with the rest of the experiment. 
In fact, the ancilla register is exactly $\ket{+}$ when we give the adversary $\ket{0,x_0}$ on register $\mathsf{A}$, and $\ket{-}$ when we give the adversary $\ket{1,x_1}$ on register $\mathsf{A}$.
Moreover, in $\mathsf{Expmt}_2(b)$, the \underline{target-collision-resistance} of $f$ implies that the computationally-bounded $\cA_1$ given $x_0$ cannot output $x_1$ as their deletion certificate (and vice-versa).

This, along with the fact that the certificate {\em must} be a pre-image of $y$ means that the following guarantee holds   (except with negligible probability) in $\mathsf{Expmt}_2(b)$:
\begin{center}{\em When the adversary outputs a valid certificate $\gamma$, a projection of the pre-image register onto $\ket{+}$ succeeds if $\gamma = (0, x_0)$ and a projection of the pre-image register onto $\ket{-}$ succeeds if $\gamma = (1, x_1)$.}\end{center}


At this point, we can rely on the \underline{target-collapsing} property of $f$ to prove the following claim: the {\em efficient projection} described above also succeeds except with negligible probability in $\mathsf{Expmt}_1(b)$, when the adversary generates a valid deletion certificate. If this claim is not true, then since the experiments (including $\mathsf{\cA}_1$) run in quantum polynomial time until the point that the deletion certificate is generated, and the projection is efficient, one can build a reduction that contradicts target-collapsing of $f$. This reduction obtains a challenge (which is either a superposition when the challenger did not measure, or a mixture if the challenger did measure) on register $\mathsf{A}$, prepares ancilla $\mathsf{C}$ as in $\mathsf{Expmt}_1(b)$, then follows steps 2, 3 identically to $\mathsf{Expmt}_1(b)$. 
Next, given a deletion certificate $(\beta,x_\beta)$, the reduction projects $\mathsf{C}$ onto $\ket{0} + (-1)^\beta \ket{1}$, outputting $1$ if the projection succeeds and $0$ otherwise. 

\paragraph{Introducing an Alternative Experiment.}
Having established that the projection above must succeed in $\mathsf{Expmt}_1(b)$ except with negligible probability,
we can now consider an alternative experiment $\mathsf{Expmt}_{\mathsf{alt}}(b)$. This is identical to $\mathsf{Expmt}_1(b)$, except that the challenger {\em additionally} projects register $\mathsf{C}$ onto $\ket{0} + (-1)^{\beta} \ket{1}$ when the adversary generates a valid certificate $(\beta,x_\beta)$. We established above that the projection is successful in $\mathsf{Expmt}_1(b)$ except with negligible probability, and this implies that 
\begin{equation} 
\label{eq:over2}
\mathsf{Adv}_{\cA}^{\mathsf{Expmt}_{\mathsf{alt}}} \geq \mathsf{Adv}_{\cA}^{\mathsf{Expmt}_{1}} - \negl(\secp)
\end{equation}
where as before, $\mathsf{Adv}_{\cA}^{\mathsf{Expmt}_{\mathsf{alt}}} = \big| \Pr[\mathsf{Expmt}_{\mathsf{alt}}(b) = b]  - \frac{1}{2} \big|$ for $b \leftarrow \{0,1\}$.

Crucially, in $\mathsf{Expmt}_{\mathsf{alt}}(b)$, the bit $c$ is determined by a measurement on register $\mathsf{C}$ which is {\em unentangled} with the system and in either the $\ket{+}$ or $\ket{-}$ state (due to the projective measurement that we just applied). Thus, measuring $\mathsf{C}$ in the computational basis results in a uniformly random and independent $c$. By definition of the experiment (abort when $b \neq c$, continue otherwise) -- this implies that the bit $b$ is set in a way that is uniformly random and independent of the adversary's view, and thus  
\begin{equation}
\label{eq:over3}
\mathsf{Adv}_{\cA}^{\mathsf{Expmt}_{\mathsf{alt}}} = 0
\end{equation}
Now, equations~(\ref{eq:over1}, \ref{eq:over2}, \ref{eq:over3}) together yield the desired claim, that is, $\mathsf{Adv}_{\cA}^{\mathsf{Expmt}_{0}} = \negl(\secp)$.

This completes a simplified overview of our key ideas, assuming the existence of a perfectly $2$-to-$1$ function $f$ where every image $y$ has preimages $\left( (0, x_0), (1, x_1) \right)$, and where $f$ satisfies both target-collapsing and target-collision-resistance.
Unfortunately, we do not know how to build functions satisfying these clean properties from simple generic assumptions. Instead, we will generalize the template above, where the first generalization will no longer require $f$ be $2$-to-$1$.

\paragraph{Generalizing the Template.}
First, note that we can replace $\ket{0, x_0}$ and $\ket{1, x_1}$ with superpositions over two disjoint sets of preimages of $y$ separated via an efficient binary-outcome measurement, namely
\[
\mathsf{Com}(b) = 
\sum_{x: f(x) = y, M(x) = 0} \ket{x} + (-1)^b \sum_{x: f(x) = y, M(x) = 1} \ket{x}
\]
We can even consider measurements $M$ that have arbitrarily many outcomes.
Proof ideas described above also generalize almost immediately to show that for any $M$, $\mathsf{Com}$ satisfies $\PVD$ as long as $f$ is target-collapsing and target-collision resistant w.r.t. $M$.
In fact, we can generalize this even further (see our main results in Section~\ref{sec:maintheorem},~\ref{sec:mainaux}) to consider arbitrary (as opposed to uniform) distributions over pre-images, as well as to account for any auxiliary information that may be sampled together with the description of the hash function.

\paragraph{Certified Everlasting Target-Collapsing.}
As discussed in the results section, our actual technical proofs proceed in two parts. (1) Show that for any $M$, a function $f$ that is target-collapsing and target-collision resistant w.r.t. $M$ is also {\em certified everlasting} target-collapsing w.r.t. $M$, and (2) show that $f$ being {\em certified everlasting} target-collapsing implies that $\mathsf{Com}$ satisfies publicly verifiable deletion.

Recall that {\em certified} everlasting target collapsing requires that an adversary that outputs a valid deletion certificate information-theoretically loses the bit $b$ determining whether they received a superposition or a mixture of preimages. Our proof of certified everlasting target-collapsing follows analogously to the proof sketched above. In short, we defer measurement of a bit $b$ which decides whether the adversary is given a superposition or a mixture, and then rely on target-collapsing and target-collision-resistance to argue that an efficient projection on the challenger's state (almost) always succeeds when the adversary outputs a valid certificate. We finally show that success of this projection implies that the adversary's state is information-theoretically independent of $b$.

The certified everlasting target-collapsing property almost immediately implies certified deletion security of $\mathsf{Com}$ via a hybrid argument: 
\begin{itemize}
\item In $\mathsf{Hyb}_0$, the adversary obtains register $\mathsf{A}$ containing
\[
\mathsf{Com}(0) = 
\sum_{x: f(x) = y, M(x) = 0} \ket{x} + \sum_{x: f(x) = y, M(x) = 1} \ket{x}
\]
\item In $\mathsf{Hyb}_1$, the measurement $M$ is applied to $\mathsf{A}$ before sending it to the adversary.
\item In $\mathsf{Hyb}_2$, the adversary obtains register $\mathsf{A}$ containing
\[
\mathsf{Com}(1) = 
\sum_{x: f(x) = y, M(x) = 0} \ket{x} - \sum_{x: f(x) = y, M(x) = 1} \ket{x}
\]
\end{itemize}
The certified everlasting hiding property of $f$ guarantees that all hybrids are statistically close when the adversary outputs a valid deletion certificate. Moreover, these experiments abort and output a random bit when the adversary does not output a valid certificate, and it is easy to show (by computational indistinguishability) that the probability of generating a valid certificate remains negligibly close between experiments.

\paragraph{TCR Implies Target-Collapsing for Polynomial-Outcome Measurements}
We also show that when $M$ has polynomially many possible outcomes, then TCR implies target-collapsing w.r.t. $M$. This follows from techniques that were recently developed in the literature on collapsing versus collision resistant hash functions~\cite{cryptoeprint:2022/786,crypto-2022-32202,crypto-2022-32124}. In a nutshell, these works showed that any distinguisher that distinguishes mixtures from superpositions over preimages for an {\em adversarially chosen} image $y$, can be used to swap between pre-images, and therefore find a collision for $y$. We observe that their technique is agnostic to whether the image $y$ is chosen randomly (in the targeted setting) or adversarially. Furthermore, it also extends to swapping superpositions over sets of pre-images to superpositions over other sets. These allow us to prove (Section~\ref{sec:tcr-implies}) that TCR w.r.t. any polynomial-outcome measurement $M$ implies target-collapsing w.r.t. $M$.

\subsection{Publicly-Verifiable Deletion via Gaussian Superpositions}
In \Cref{sec:Dual-Regev}, we revisit the \emph{Dual-Regev} public-key and (leveled) fully homomorphic encryption schemes with publicly-verifiable deletion which were proposed by Poremba~\cite{Poremba22}
and were conjectured to be secure under the \emph{strong Gaussian-collapsing property}. 
By applying our main 
theorem to the Ajtai hash function, we obtain a proof of the conjecture, which allows us to show the certified everlasting security of the aforementioned schemes assuming the hardness of the $\LWE$ assumption.

The constructions introduced in~\cite{Poremba22} exploit the the duality between $\LWE$ and $\SIS$~\cite{cryptoeprint:2009/285}, and rely on the fact that one encode Dual-Regev ciphertexts via Gaussian superpositions. Below, we give a high-level sketch of the basic public-key construction.

\begin{itemize}
\item To generate a pair of keys $(\sk,\pk)$, sample a random $\vec A \in \Z_q^{n \times (m+1)}$ together with a particular short trapdoor vector $\vec t \in \Z^{m+1}$ such that $\vec A \cdot \vec t = \vec 0 \Mod{q}$. Let $\pk = \vec A$ and $\sk = \vec t$.

\item To encrypt $b \in \bit$ using $\pk=\vec A$, generate the following for a random $\vec y \in \Z_q^n$:
$$
\vk \leftarrow (\vec A,\vec y), \quad\quad
\ket{\ct} \leftarrow \sum_{\vec s \in \Z_q^n} \sum_{\vec e \in \Z_q^{m+1}} \rho_{q/\sigma}(\vec e) \, \omega_q^{-\langle\vec s,\vec y \rangle} \ket{\vec s^\intercal \vec A + \vec e^\intercal +b \cdot (0,\dots,0, \lfloor\frac{q}{2} \rfloor)},
$$
where $\vk$ is a public verification key and $\ket{\ct}$ is the quantum ciphertext for $\sigma >0$.

\item To decrypt $\ket{\ct}$ using $\sk$, measure in the computational basis to obtain $\vec c \in \Z_q^{m+1}$, and output $0$, if $\vec c^\intercal \cdot \sk\in \Z_q$
is closer to $0$ than to $\lfloor\frac{q}{2}\rfloor$,
and output $1$, otherwise. Here $\sk = \vec t$ is chosen such that $\vec c^\intercal \cdot \sk$ yields an approximation of $b \cdot \lfloor \frac{q}{2} \rfloor$ from which we can recover $b$.
\end{itemize}
To delete the ciphertext $\ket{\ct}$, perform a measurement in the Fourier basis. Poremba~\cite{Poremba22} showed that the Fourier transform of $\ket{\ct}$ results in the \emph{dual} quantum state given by
$$
\ket{\widehat{\ct}}=\sum_{\substack{\vec x \in \Z_q^{m+1}:\\ \vec A \vec x = \vec y \Mod{q}}}\rho_{\sigma}(\vec x) \, \omega_q^{\langle \vec x,b \cdot (0,\dots,0,  \lfloor\frac{q}{2} \rfloor)\rangle} \,\ket{\vec x}.
$$
In other words, a Fourier basis measurement of $\ket{\ct}$ will necessarily erase all information about the plaintext $b \in \bit$ and results in a \emph{short} vector $\pi \in \Z_q^{m+1}$ such that $\vec A \cdot \pi = \vec y \Mod{q}$. To publicly verify a deletion certificate, simply check whether a certificate $\pi$ is a solution to the (inhomogenous) $\SIS$ problem specified by $\vk=(\vec A,\vec y)$. Due to the hardness of the $\SIS$ problem, it is computationally difficult to produce a valid deletion certificate from $(\vec A,\vec y)$ alone.

Our approach to proving certified everlasting security of the Dual-Regev public-key and fully-homomorphic encryption schemes with publicly-verifiable deletion in~\cite{Poremba22} is as follows.
First, we observe that the Ajtai hash function is both target-collapsing and target-collision-resistant with respect to the 
 discrete Gaussian distribution. Here, the former follows from $\LWE$ as a simple consequence of the \emph{Gaussian-collapsing property} previously shown by Poremba~\cite{10.1007/978-3-030-26951-7_12,Poremba22}, whereas the latter follows immediately from the quantum hardness of $\SIS$. 
 Thus, our main theorem implies that the Ajtai hash function is certified-everlasting target-collapsing (see \Cref{thm:ajtai-certified-everlasting}). Finally, as a simple corollary of our theorem, we obtain a proof of the \emph{strong Gaussian-collapsing conjecture} in \cite{Poremba22}, which we state in \Cref{SGC}. We also note that the aforementioned conjecture considers a weaker notion of certified collapsing which resembles the notion of certified deletion first proposed by Broadbent and Islam~\cite{Broadbent_2020}. Here, the adversary is not computationally unbounded once a valid deletion certificate is produced; instead, the challenger simply reveals additional secret information (in the case of the strong Gaussian-collapsing experiment, this is a short trapdoor vector for the Ajtai hash function). Our notion of certified everlasting target-collapsing is significantly stronger; in particular, it implies the weaker collapsing scenario considered by Poremba~\cite{Poremba22}. This follows from the fact that the security reduction can simply brute-force search for a short trapdoor solution for the Ajtai hash once it enters the phase in which it is allowed to be computationally unbounded. We exploit this fact in the proof of \Cref{SGC}.

\subsection{Weakening Assumptions for Publicly-Verifiable Deletion}
Next, we look for instantiations of the above template from {\em generic} cryptographic assumptions, as opposed to structured specific assumptions such as LWE.
Here, all of our instantiations only require us to consider functions that are target-collision-resistant and target-collapsing w.r.t. binary-outcome measurements (and as discussed above, TCR implies certified-everlasting target-collapsing in this setting).
In addition, for the case of commitments, in order for the commitment to satisfy binding\footnote{We  actually prove that a purification of the template commitment described above satisfies honest-binding~\cite{Yan}.
Namely, the committer generates the state above but leaves registers containing the image $y$ (and the key, if $f$ is a keyed function) unmeasured, and holds on to these registers for the opening phase. It can later either open the commitment by sending these registers to a receiver, or request deletion, by measuring them and publishing $y$ (and any keys for the function).
}, we require that there is a measurement that can distinguish 
\[\sum_{x: f(x) = y, M(x) = 0} \ket{x} + \sum_{x: f(x) = y, M(x) = 1} \ket{x}\] from \[\sum_{x: f(x) = y, M(x) = 0} \ket{x} - \sum_{x: f(x) = y, M(x) = 1} \ket{x}\] with probability $\delta$ for any constant $0 < \delta \leq 1$.
For the case of public-key encryption, we similarly require that a trapdoor be able to recover the phase with constant probability. We then resort to standard amplification techniques to boost correctness error from constant to (negligibly close to) $0$. We note that this amplification would also work if the phase was recoverable with inverse-polynomial $\delta$ (as opposed to constant); however, we focus on constant $\delta$ because of simplicity, and because it suffices for our instantiations.

In the template above, we observe that a measurement can find the phase with inverse polynomial probability whenever the sets 
\[ \sum_{x: f(x) = y, M(x) = 0} \ket{x} \text{ and } \sum_{x: f(x) = y, M(x) = 1} \ket{x} \]
are somewhat ``balanced'', i.e. for a random image $y$, for sets $S_0 = \{x: f(x) = y, M(x) = 0\}$  and $S_1 = \{x: f(x) = y, M(x) = 1\}$, we have that $\frac{|S_0|}{|S_1|}$ is a fixed constant.
We show in \cref{sec:com} and \cref{sec:pke} that commitments and PKE with $\PVD$ can be obtained from appropriate variants of TCR functions following this template.

Now, our goal is to build such TCR functions from generic assumptions. A natural idea would be to start with any one-way function $f$ and compose it with a random two-to-one hash $h$ defined on its range\footnote{The {\em co-domain} of a function $f:\{0,1\}^n \rightarrow \{0,1\}^m$ is $\{0,1\}^m$, and we will also refer to this as the {\em range} of the function in this paper. The {\em image} is the set of all actual output values of $f$, i.e. the set $\{y: \exists x \text{ such that } f(x) = y\}$. The co-domain/range may in general be a superset of the image of a function.}.
Then, any output $y$ of the composed function $(h \circ f)$ is associated with two elements $\{z_0,z_1\} = h^{-1}(y)$ in the range of $f$, and the binary-outcome measurement would measure one of $z_0$ or $z_1$. Recalling that we eventually want to prove target-collision-resistance, the hope would be that just given a superposition over the preimages of, say, $z_0$, the one-wayness of $f$ would imply the difficulty of finding a preimage of $z_1$\footnote{More concretely, a purported reduction to one-wayness when given challenge image $z_1$, can sample a random image $z_0$ with its preimages, then find $h$ s.t. $h(z_0) = h(z_1)$, thereby using a TCR adversary to find a preimage of the given challenge $z_1$.}. This could give the type of TCR property we need.

\paragraph{Technical Bottlenecks, and a Resolution.} Unfortunately, there are two issues with the approach proposed above. 
First, $f$ may be extremely unbalanced, so that the relative sizes of the sets of preimages of two random points $y_1, y_2$, i.e. $|\{x: f(x) = y_1\}|$ and $|\{x: f(x) = y_2\}|$ in its image may have very different sizes, that are not polynomially related with each other. 
There may even be many points in the co-domain/range that have \emph{zero} preimages (for a general OWF, we cannot guarantee that its image is equal to its range). A second related issue is that the above sketched reduction to one-wayness may not work. Let's say we choose $h$ to be a two-to-one function defined by a random shift $\Delta$, i.e. $h(x) = h(x \oplus \Delta)$. Then we are essentially asking that it be hard to invert a random \emph{range} element of $f$, as opposed to $f(x)$ for a random \emph{domain} element $x$, which is the standard one-wayness assumption.

We don't know how to make this approach work from arbitrary one-way functions, which we leave as an open question. Instead, we appeal to a result of~\cite{balancedOWF}, who in the classical context of building statistically hiding commitments, show the following result. 
By appropriately combining an (almost)-\emph{regular}\footnote{An almost regular one-way function generalizes regular one-way functions to require only that for any two images $y_1, y_2$ of the function, the sizes of preimage sets of $y_1, y_2$ are polynomially related. In particular, injective functions, and (standard) regular functions also satisfy almost-regularity.} one-way function with universal hash functions, it {\em is} possible to obtain a function $f$ with exactly the required properties:
sufficiently balanced, and one-way over its \emph{range}. The former property means that an overwhelming fraction of range elements have similar-sized preimage sets, while the latter property says that an element $y$ sampled randomly from the range of the function cannot be inverted except with negligible probability. This resolves both the difficulties above.

Given such a balanced function $f$, we apply a random two-to-one hash $h$ defined by a shift $\Delta$ to the range of this $f$. We prove in Section~\ref{sec:almostreg} that this implies the flavor of target-collision-restistant hash that we need to construct commitments with $\PVD$.


\paragraph{Public-Key Encryption with $\PVD$.}
Next, we note that the construction above {\em also} yields a public-key encryption scheme, as long as there is a trapdoor that allows recovery of the phase $b$ given the state 
\[ y, \sum_{x: f(x) = y, M(x) = 0} \ket{x} + (-1)^b \sum_{x: f(x) = y, M(x) = 1} \ket{x} \]
We call this property ``trapdoor phase-recoverability''. We show that this property is achievable from generic assumptions, even those that are not known to imply classical PKE.  
\begin{itemize}
    \item Specifically, trapdoor phase-recoverability is implied by a trapdoored variant of (almost) regular one-way functions, for which a trapdoor to the function allows recovery of a uniform superposition over all preimages of any given image $y$. This then allows efficient projection onto $\sum_{x: f(x) = y, M(x) = 0} \ket{x} + (-1)^b \sum_{x: f(x) = y, M(x) = 1} \ket{x}$ for any efficient $M$. 
    We also note that this property is satisfied by any (standard) trapdoored injective function.
    But it is also satisfied by functions such as the Ajtai function that are not necessarily injective. Indeed, it is unclear how to build classical public-key encryption, or even PKE with classical ciphertexts, given a general trapdoor phase-recoverable function. Nevertheless, we formalize the above ideas in \cref{sec:pke} and \cref{sec:almostreg} to build PKE schemes with quantum ciphertexts, that also support $\PVD$.
    \item Additionally, we show in \cref{sec:hmy} that a recent public-key encryption scheme of~\cite{HMY} from pseudorandom group actions also satisfies trapdoor phase-recoverability: in fact, the decryption algorithm in~\cite{HMY} relies on recovering the phase from a similar superposition, given a trapdoor.
\end{itemize}

\paragraph{Hybrid Encryption with PVD.} Finally, we observe that we can use any encryption scheme $\Enc$ to encrypt the trapdoor $\td$ associated with the above construction, and security will still hold. That is, if $\Enc$ is semantically-secure, then our techniques extend to show that a ciphertext of the form
\[ y, \sum_{x: f(x) = y, M(x) = 0} \ket{x} + (-1)^b \sum_{x: f(x) = y, M(x) = 1} \ket{x} , \Enc(\td)\]
where $\mathsf{td}$ is the trapdoor for $f$, still supports publicly-verifiable deletion of the bit $b$. Thus, our approach can be seen as a way to \emph{upgrade} cryptographic schemes $\Enc$ with special properties to satisfy $\PVD$. In particular, we prove in \cref{sec:generic} that instantiating $\mathsf{Enc}$ appropriately with attribute-based encryption, fully-homomorphic encryption, witness encryption, or timed-release encryption gives us the same scheme supporting $\PVD$.

\subsection{Discussion and Directions for Future Work}
Our work demonstrates a strong relationship between weak security properties of (trapdoored) one-way functions and publicly-verifiable deletion. In particular, previous work~\cite{Poremba22} conjectured that collapsing functions, which are a quantum strengthening of collision-resistant hashes, lead to cryptosystems with publicly-verifiable deletion. 
Besides proving this conjecture, we also show that collapsing/collision-resistance, which are considered stronger assumptions than one-wayness, are actually not necessary for PVD. 

Indeed, weakenings that we call target-collapsing and generalized-target-collision-resistance, can be obtained from (regular) variants of one-way functions, and do suffice for publicly-verifiable deletion.
Analogously to their classical counterparts, we believe that these primitives will be of independent interest. 
Indeed, a natural question that this work leaves open is whether variants of these primitives that suffice for publicly-verifiable deletion can be based on {\em one-way functions} without the regularity constraint. It is also interesting to further understand relationships and implications between target-collision-resistance and target-collapsing, including when these properties may or may not imply each other. It may also be useful to understand if these weaker properties can suffice in place of stronger properties such as collapsing and collision-resistance in other contexts, including the design of post-quantum protocols.

Finally, note that we rely on trapdoored variants of these primitives to build public-key encryption schemes. Here too, in addition to obtaining PKE with $\PVD$ from any injective trapdoor one-way function (TDF), it becomes possible to relax assumptions to only require (almost)-regularity and trapdoor phase-recoverability -- properties that can plausibly be achieved from weaker concrete assumptions than injective TDFs.
These are new examples of complexity assumptions that yield public-key encryption with quantum ciphertexts, but may be too weak to obtain PKE with classical ciphertexts.
It is an interesting question to further investigate the weakest complexity assumptions that may imply public-key encryption, with or without $\PVD$.

\section*{Acknowledgements}
D.K. was supported in part by NSF CAREER CNS-2238718, NSF CNS-2247727 and DARPA SIEVE. This material is based upon work supported by
the Defense Advanced Research Projects Agency through Award HR00112020024.

A.P. is partially supported by AFOSR YIP (award number FA9550-16-1-0495), the Institute for Quantum Information and Matter (an NSF Physics Frontiers Center; NSF Grant PHY-1733907) and by a grant from the Simons 
Foundation (828076, TV).

\section{Preliminaries}

In this section, we review basic concepts from quantum computing and cryptography.

\subsection{Quantum Computing}

We refer to \cite{NielsenChuang11,Wilde13} for a comprehensive background on quantum computation. 

A finite-dimensional complex Hilbert space is denoted by $\mathcal{H}$, and we use subscripts to distinguish between different systems (or registers); for example, we let $\mathcal{H}_{\mathsf{A}}$ be the Hilbert space corresponding to a system $\mathsf{A}$. 
The tensor product of two Hilbert spaces $\algo H_{\mathsf{A}}$ and $\algo H_{\mathsf{B}}$ is another Hilbert space denoted by $\algo H_{\mathsf{AB}} = \algo H_{\mathsf{A}} \otimes \algo H_{\mathsf{B}}$.  We let $\algo L(\algo H)$
denote the set of linear operators over $\algo H$.
A quantum system over the $2$-dimensional Hilbert space $\mathcal{H} = \mathbb{C}^2$ is called a \emph{qubit}. For $n \in \mathbb{N}$, we refer to quantum registers over the Hilbert space $\mathcal{H} = \big(\mathbb{C}^2\big)^{\otimes n}$ as $n$-qubit states. We use the word \emph{quantum state} to refer to both pure states (unit vectors $\ket{\psi} \in \mathcal{H}$) and density matrices $\rho \in \mathcal{D}(\mathcal{H)}$, where we use the notation $\mathcal{D}(\mathcal{H)}$ to refer to the space of positive semidefinite linear operators of unit trace acting on $\algo H$. 
Occasionally, we consider \emph{subnormalized states}, i.e. states in the space of positive semidefinite operators over $\algo H$ with trace norm not exceeding $1$.

The \emph{trace distance} of two density matrices $\rho,\sigma \in \mathcal{D}(\mathcal{H)}$ is given by
$$
\TD(\rho,\sigma) = \frac{1}{2} \Tr\left[ \sqrt{ (\rho - \sigma)^\dag (\rho - \sigma)}\right].
$$

A quantum channel $\Phi:  \algo L(\algo H_{\mathsf{A}}) \rightarrow \algo L(\algo H_{\mathsf{B}})$ is a linear map between linear operators over the Hilbert spaces $\algo H_{\mathsf{A}}$ and $\algo H_{\mathsf{B}}$. 
We say that a channel $\Phi$ is \emph{completely positive} if, for a reference system $R$ of arbitrary size, the induced map $I_R \otimes \Phi$ is positive, and we call it \emph{trace-preserving} if $\Tr[\Phi(X)] = \Tr[X]$, for all $X\in \algo L(\algo H)$. A quantum channel that is both completely positive and trace-preserving is called a quantum $\CPTP$ channel. 

A polynomial-time \emph{uniform} quantum algorithm (or $\QPT$ algorithm) is a polynomial-time family of quantum circuits given by $\algo C = \{C_\lambda\}_{\lambda \in \N}$, where each circuit $C \in \algo C$ is described by a sequence of unitary gates and measurements; moreover, for each $\lambda \in \N$, there exists a deterministic polynomial-time Turing machine that, on input $1^\lambda$, outputs a circuit description of $C_\lambda$. Similarly, we also define (classical) probabilistic polynomial-time $(\PPT)$ algorithms. A quantum algorithm may, in general, receive (mixed) quantum states as inputs and produce (mixed) quantum states as outputs. Occasionally, we restrict $\QPT$ algorithms implicitly; for example, if we write $\Pr[\mathcal{A}(1^{\lambda}) = 1]$ for a $\QPT$ algorithm $\mathcal{A}$, it is implicit that $\mathcal{A}$ is a $\QPT$ algorithm that outputs a single classical bit.

\paragraph{Quantum Fourier transform.} Let $q \geq 2$ be a modulus and $n \in \N$ and let $\omega_q = e^{ \frac{2 \pi i}{q}} \in \mathbb{C}$ denote the primitive $q$-th root of unity.
The $m$-qudit \emph{$q$-ary quantum Fourier transform} over the ring $\Z_q^m$ is defined by the operation,
$$
\FT_q : \quad \ket{\vec x} \quad \mapsto \quad \sqrt{q^{-m}} \displaystyle\sum_{\vec y \in \Z_q^m} \omega_q^{\langle \vec x,\vec y\rangle} \ket{\vec y}, \quad\quad \forall \vec x \in \Z_q^m.
$$
The $q$-ary quantum Fourier transform is \emph{unitary} and can be efficiently implemented on a quantum computer for any integer modulus $q \geq 2$~\cite{892139}.

\paragraph{Pauli Twirling.}

We use the following unitary operators:
\begin{itemize}
    \item Pauli-$\mathsf{Z}$ operator:

    $$
\mathsf{Z}^z = \sum_{x \in \bit} (-1)^{x \cdot z} \proj{x}, \quad \text{ for } z \in \bit.
$$

\item Multi-qubit Pauli-$\mathsf{Z}$ operator:

    $$
\mathsf{Z}^z = \mathsf{Z}^{z_1} \otimes \dots \otimes \mathsf{Z}^{z_m}, \quad \text{ for } z \in \bit^m.
$$

\item Controlled-$\mathsf{Z}$ operator:

    $$
\mathsf{C}\mathsf{Z}^z = \sum_{c \in \bit} \proj{c} \otimes \mathsf{Z}^{c \cdot z}, \quad \text{ for } z \in \bit^m.
$$

\end{itemize}
Here, we use the notation $\mathsf{Z}^0 = I$ and $\mathsf{Z}^1 = \mathsf{Z}$, as well as $c \cdot z = (c \cdot z_1,\dots,c \cdot z_m)$ for $z \in \bit^m$.

We use the following well-known property of the Pauli-$\mathsf{Z}$ dephasing channel which says that, on average, a random Pauli-Z twirl induces a measurement in the computational basis.

\begin{lemma}[Pauli-$\mathsf{Z}$ Twirl]\label{lem:random-Z}
The Pauli-$\mathsf{Z}$ dephasing channel applied to an $m$-qubit state $\rho$ satsifies
$$
\algo Z(\rho) \,\overset{\mathrm{def}}{=} \,\ 2^{-m} \sum_{z \in \bit^m}  \mathsf{Z}^{z} \rho \left(\mathsf{Z}^{z}\right)^\dag= \sum_{x \in \bit^m} \Tr[\ketbra{x}{x} \rho] \,\ketbra{x}{x}.
$$
\end{lemma}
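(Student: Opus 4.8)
The plan is to expand everything in the computational basis and reduce the identity to a single character-orthogonality sum over the group $\bit^m$. First I would write an arbitrary $m$-qubit state as $\rho = \sum_{x,x' \in \bit^m} \rho_{x,x'}\,\ketbra{x}{x'}$ with coefficients $\rho_{x,x'} = \bra{x}\rho\ket{x'}$, and recall that $\mathsf{Z}^z$ is a real diagonal (hence Hermitian) operator acting by $\mathsf{Z}^z\ket{x} = (-1)^{x\cdot z}\ket{x}$, so that $(\mathsf{Z}^z)^\dagger = \mathsf{Z}^z$. Conjugating a single basis dyad then gives $\mathsf{Z}^z \ketbra{x}{x'}\mathsf{Z}^z = (-1)^{x\cdot z}(-1)^{x'\cdot z}\,\ketbra{x}{x'} = (-1)^{(x\oplus x')\cdot z}\,\ketbra{x}{x'}$, using that $x\cdot z + x'\cdot z \equiv (x\oplus x')\cdot z \pmod 2$.

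Substituting this into the definition of $\algo Z$ and exchanging the (finite) sums over $z$ and over $(x,x')$, the whole computation collapses onto the scalar average $2^{-m}\sum_{z\in\bit^m}(-1)^{w\cdot z}$ with $w = x\oplus x'$. The key step is to evaluate this average as a product of independent one-bit sums $\prod_{i=1}^m \tfrac12\bigl(1+(-1)^{w_i}\bigr)$, where each factor equals $1$ when $w_i=0$ and $0$ when $w_i=1$; hence the average is $1$ if $w=0^m$ (equivalently $x=x'$) and $0$ otherwise. This is the only substantive ingredient, and it is exactly the orthogonality of characters of $\bit^m$.

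Finally I would plug this indicator back in: only the diagonal terms $x=x'$ survive the sum, leaving $\algo Z(\rho) = \sum_{x\in\bit^m}\rho_{x,x}\,\ketbra{x}{x}$, and then rewrite $\rho_{x,x} = \bra{x}\rho\ket{x} = \Tr[\ketbra{x}{x}\rho]$ to match the claimed right-hand side. I do not anticipate any genuine obstacle, since the statement is a direct calculation; the only point requiring care is the parity bookkeeping, ensuring the off-diagonal cross terms with $x\neq x'$ cancel exactly rather than merely approximately.
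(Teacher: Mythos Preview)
Your argument is correct and is the standard computation: expand $\rho$ in the computational basis, use that $\mathsf{Z}^z$ acts diagonally with phases $(-1)^{x\cdot z}$, and invoke character orthogonality on $\bit^m$ to kill the off-diagonal terms. The paper itself does not give a proof of this lemma at all; it is simply stated as a ``well-known property'' of the Pauli-$\mathsf{Z}$ dephasing channel, so there is nothing to compare against and your direct verification is entirely appropriate.
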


\subsection{Cryptography}

Throughout this work, wet $\lambda\in \N$ denote the security parameter. We assume that the reader is familiar with the fundamental cryptographic concepts. 

\paragraph{The Short Integer Solution problem.}

The (inhomogenous) $\SIS$ problem was introduced by Ajtai~\cite{DBLP:conf/stoc/Ajtai96} in his seminal work on average-case lattice problems. The problem is defined as follows. 

\begin{definition}[Inhomogenous SIS problem,\cite{DBLP:conf/stoc/Ajtai96}]\label{def:ISIS} Let $n,m \in \N$ be integers, let $q\geq 2$ be a modulus and let $\beta >0$ be a parameter. The Inhomogenous Short Integer Solution problem $(\ISIS)$ problem is to find a short solution $\vec x \in \Z^m$ with $\|\vec x\| \leq \beta$ such that $\vec A \cdot \vec x = \vec y \Mod{q}$ given as input a tuple $(\vec A \rand \Z_q^{n \times m},\vec y \rand \Z_q^n)$.
The Short Integer Solution $(\SIS)$ problem is a homogenous variant of the $\ISIS$ problem with input $(\vec A \rand \Z_q^{n \times m},\vec 0 \in\Z_q^n)$.
\end{definition}

Micciancio and Regev~\cite{DBLP:journals/siamcomp/MicciancioR07} showed that the $\SIS$ problem is, on the average, as hard as approximating worst-case lattice problems to within small factors. Subsequently, Gentry, Peikert and Vaikuntanathan~\cite{cryptoeprint:2007:432} gave an improved reduction showing that, for parameters $m=\poly(n)$, $\beta=\poly(n)$ and prime $q \geq \beta \cdot \omega(\sqrt{n \log q})$, the average-case $\SIS_{n,q,\beta}^m$ problem is as hard as approximating the shortest independent vector problem $(\mathsf{SIVP})$ problem in the
worst case to within a factor $\gamma = \beta \cdot \tilde{O}(\sqrt{n})$.
We assume that $\SIS_{n,q,\beta}^m$, for $m=\Omega(n \log q)$, $\beta = 2^{o(n)}$ and $q=2^{o(n)}$, is hard against polynomial-time quantum adversaries.

\paragraph{The Learning with Errors problem.}

The \emph{Learning with Errors} problem serves as the primary basis of hardness of post-quantum cryptosystems and was introduced by Regev~\cite{Regev05}. The problem is defined as follows.

\begin{definition}[Learning with Errors problem, \cite{Regev05}]\label{def:decisional-lwe} Let $n,m \in \N$ be integers, let $q\geq 2$ be a modulus and let $\alpha \in (0,1)$ be a noise ratio parameter. The (decisional) Learning with Errors $(\LWE_{n,q,\alpha q}^m)$ problem is to distinguish between the following samples
$$
(\vec A \rand \Z_q^{n \times m},\vec s^\intercal \vec A+ \vec e^\intercal \Mod{q}) \quad \text{ and } \quad (\vec A \rand \Z_q^{n \times m},\vec u \rand \Z_q^m),\,\,
$$
where $\vec s \rand  \Z_q^n$ is a uniformly random vector and where $\vec e \sim D_{\Z^m,\sigma}$ is a discrete Gaussian error vector, where $D_{\Z^m,\sigma}$
assigns probability proportional to
$\rho_\sigma(\vec x) = \exp(-\pi \|\vec x \|^2/ \sigma^2)$ to each $\vec x \in \Z^m$, for $\sigma = \alpha q>0$.

We rely on the quantum $\LWE_{n,q,\alpha q}^m$ assumption which states that the samples above are computationally indistinguishable for any $\QPT$ algorithm.

\end{definition}

It was shown in~\cite{Regev05,cryptoeprint:2017/258} that the $\LWE_{n,q,\alpha q}^m$ problem with parameter $\alpha q \geq 2 \sqrt{n}$ is at least as hard as approximating the shortest independent vector problem $(\mathsf{SIVP})$ to within a factor of $\gamma = \widetilde{O}(n / \alpha)$ in worst case lattices of dimension $n$. In this work we assume the subexponential hardness of $\LWE_{n,q,\alpha q}^m$ which relies on the worst case hardness of approximating short vector problems in
lattices to within a subexponential factor. 
We assume that $\LWE_{n,q,\alpha q}^m$, for $m=\Omega(n \log q)$, $q=2^{o(n)}$, $\alpha=1/2^{o(n)}$,  is hard against polynomial-time quantum adversaries. 

\section{Main Theorem: Certified Everlasting Target-Collapsing}



\subsection{Definitions}

In this section, we present our definitions of target-collapsing and (generalized) target-collision-resistance. We parameterize our definitions by a distribution $\cD$ over preimages and a measurement function $\cM$. Note that when $\cM$ is the identity function, the notion of $(\cD,\cM)$-target-collapsing corresponds to a notion where the entire preimage register is measured in the computational basis. In this case we drop parameterization by $\cM$ and just say $\cD$-target-collapsing. Also, when $\cD$ is the uniform distribution, we drop parameterization by $\cD$ and just say $\cM$-target-collapsing.

\begin{definition}[$(\cD,\cM)$-Target-Collapsing Hash Function]\label{def:target-collapsing}
Let $\lambda \in \N$ be the security parameter.
A hash function family given by $\cH = \{H_\secp : \{0,1\}^{m(\secp)} \to \{0,1\}^{n(\secp)}\}_{\secp \in \bbN}$ is $(\cD,\cM)$-target-collapsing for some distribution $\cD = \{D_\secp\}_{\secp \in \bbN}$ over $\{\{0,1\}^{m(\secp)}\}_{\secp \in \bbN}$ and family of functions $\cM = \{\{M[h] : \{0,1\}^{m(\secp)} \to \{0,1\}^{k(\secp)}\}_{h \in H_\secp}\}_{\secp \in \bbN}$ if, for every QPT adversary $\cA = \{\cA_\secp\}_{\secp \in \bbN}$,
$$
|
\Pr[ \mathsf{TargetCollapseExp}_{\algo H,\algo A,\algo D,\algo M,\lambda}(0)=1] - \Pr[ \mathsf{TargetCollapseExp}_{\algo H,\algo A,\algo D,\algo M,\lambda}(1)=1]
| \leq \negl(\lambda).
$$
Here, the experiment $\mathsf{TargetCollapseExp}_{\algo H,\algo A,\algo D,\algo M,\lambda}(b)$ is defined as follows:

\begin{enumerate}

    \item The challenger prepares the state \[\sum_{x \in \{0,1\}^{m(\secp)}}\sqrt{D_\secp(x)}\ket{x}\] on register $X$, and samples a random hash function $h \rand H_\lambda$. Then, it coherently computes $h$ on $X$ (into a fresh $n(\secp)$-qubit register $Y$) and measures system $Y$ in the computational basis, which results in an outcome $y \in \bit^{n(\lambda)}$.
    \item If $b=0$, the challenger does nothing. Else, if $b=1$, the challenger coherently computes $M[h]$ on $X$ (into a fresh $k(\secp)$-qubit register $V$) and measures system $V$ in the computational basis. Finally, the challenger sends the outcome state in system $X$ to $\algo A_\secp$, together with the string $y \in \bit^{n(\lambda)}$ and a description of the hash function $h$.
    \item $\algo A_\secp$ returns a bit $b'$, which we define as the output of the experiment.
\end{enumerate}
\end{definition}

We also define an analogous notion of $(\cD,\cM)$-target-collision-resistance, as follows. Similarly to above, we drop the parameterization by $\cM$ in the case that it is the identity function, and we drop the parameterization by $\cD$ in the case that it is the uniform distribution. Notice that target-collision-resistance (without parameterization) then coincides with the classical notion where a uniformly random input is sampled, and the adversary must find a collision with respect to this input (this is also sometimes called second-preimage resistance, or weak collision-resistance).


\begin{definition}[$(\cD,\cM)$-Target-Collision-Resistant Hash Function]\label{def:target-CR}
A hash function family $\cH = \{H_\secp : \{0,1\}^{m(\secp)} \to \{0,1\}^{n(\secp)}\}_{\secp \in \bbN}$ is $(\cD,\cM)$-target-collision-resistant for some distribution $\cD = \{D_\secp\}_{\secp \in \bbN}$ over $\{\{0,1\}^{m(\secp)}\}_{\secp \in \bbN}$ and family of functions $\cM = \{\{M[h] : \{0,1\}^{m(\secp)} \to \{0,1\}^{k(\secp)}\}_{h \in H_\secp}\}_{\secp \in \bbN}$ if, for every QPT adversary $\cA = \{\cA_\secp\}_{\secp \in \bbN}$,
$$
|
\Pr[ \mathsf{TargetCollRes}_{\algo H,\algo A,\algo D,\algo M,\lambda}=1]| \leq \negl(\lambda).
$$
Here, the experiment $\mathsf{TargetCollRes}_{\algo H,\algo A,\algo D,\algo M,\lambda}$ is defined as follows:
\begin{enumerate}
    \item The challenger prepares the state \[\sum_{x \in \{0,1\}^{m(\secp)}}\sqrt{D_\secp(x)}\ket{x}\] on register $X$, and samples a random hash function $h \rand H_\lambda$. Next, it coherently computes $h$ on $X$ (into a fresh $n(\secp)$-qubit system $Y$) and measures system $Y$ in the computational basis, which results in an outcome $y \in \bit^{n(\lambda)}$. Next, it coherently computes $M[h]$ on $X$ (into a fresh $k(\secp)$-qubit register $V$) and measures system $V$ in the computational basis, which results in an outcome $v$. Finally, its sends the outcome state in system $X$ to $\algo A_\secp$, together with the string $y \in \{0,1\}^{n(\secp)}$ and a description of the hash function $h$.
    \item $\algo A_\secp$ responds with a string $x \in \{0,1\}^{m(\secp)}$.
    \item The experiment outputs 1 if $h(x) = y$ and $M[h](x) \neq v$.
\end{enumerate}

\end{definition}

Finally, we define the notion of a \emph{certified everlasting} target-collapsing hash.

\begin{definition}\label{def:ev-target-collapsing}
A hash function family $\cH = \{H_\secp : \{0,1\}^{m(\secp)} \to \{0,1\}^{n(\secp)}\}_{\secp \in \bbN}$ is certified everlasting $(\cD,\cM)$-target-collapsing for some distribution $\cD = \{D_\secp\}_{\secp \in \bbN}$ over $\{\{0,1\}^{m(\secp)}\}_{\secp \in \bbN}$ and family of functions $\cM = \{\{M[h] : \{0,1\}^{m(\secp)} \to \{0,1\}^{k(\secp)}\}_{h \in H_\secp}\}_{\secp \in \bbN}$ if for every two-part adversary $\algo A = \{\algo A_{0,\secp},\algo A_{1,\secp}\}_{\secp \in \bbN}$, where $\{\algo A_{0,\secp}\}_{\secp \in \bbN}$ is QPT and $\{\algo A_{1,\secp}\}_{\secp \in \bbN}$ is unbounded, it holds that 

$$
|\Pr\left[\mathsf{EvTargetCollapseExp}_{\algo H,\algo A,\algo D,\algo M,\lambda}(0) = 1 \right] - \Pr\left[\mathsf{EvTargetCollapseExp}_{\algo H,\algo A,\algo D,\algo M,\lambda}(1) = 1\right]| \leq \negl(\lambda).
$$
Here, the experiment $\mathsf{EvTargetCollapseExp}_{\algo H,\algo A,\algo D,\algo M,\lambda}(b)$ is defined as follows:

\begin{enumerate}

    \item The challenger prepares the state \[\sum_{x \in \{0,1\}^{m(\secp)}}\sqrt{D_\secp(x)}\ket{x}\] on register $X$, and samples a random hash function $h \rand H_\lambda$. Then, it coherently computes $h$ on $X$ (into a fresh $n(\secp)$-qubit system $Y$) and measures system $Y$ in the computational basis, which results in an outcome $y \in \bit^{n(\lambda)}$.
    \item If $b=0$, the challenger does nothing. Else, if $b=1$, the challenger coherently computes $M[h]$ on $X$ (into an auxiliary $k(\secp)$-qubit system $V$) and measures system $V$ in the computational basis. Finally, the challenger sends the outcome state in system $X$ to $\algo A_{0,\secp}$, together with the string $y \in \bit^{n(\lambda)}$ and a description of the hash function $h$.
    \item $\algo A_{0,\secp}$ sends a classical certificate $\pi \in \bit^{m(\lambda)}$ to the challenger and initializes $\algo A_{1,\secp}$ with its residual state.
    \item The challenger checks if $h(\pi)=y$. If true, $\algo A_{1,\secp}$ is run until it outputs a bit $b'$. Otherwise, $b' \gets \{0,1\}$ is sampled uniformly at random. The output of the experiment is $b'$.

    
\end{enumerate}
\end{definition}

\subsection{Main Theorem}
\label{sec:maintheorem}
Our main theorem is the following.

\begin{theorem}\label{thm:CETC-generalization}
Let $\cH = \{H_\secp\}_{\secp \in \bbN}$ be a hash function family that is both $(\cD,\cM)$-target-collapsing and $(\cD,\cM)$-target-collision-resistant, for some distribution $\cD$ and efficiently computable family of functions $\cM$. Then, $\cH$ is certified everlasting $(\cD,\cM)$-target-collapsing.
\end{theorem}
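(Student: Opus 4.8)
The plan is to port the deferred-measurement technique of Bartusek and Khurana~\cite{cryptoeprint:2022/1178} from the BB84 setting to the target-collapsing setting, where the information-theoretic BB84 analysis is replaced by one driven by the two hypotheses. Fix $h \rand H_\secp$ and an image $y$, and write $\ket{\psi_y} = \frac{1}{\sqrt{Z}}\sum_{x : h(x) = y}\sqrt{D_\secp(x)}\ket{x}$ for the (normalized) preimage superposition, $P_\mu = \sum_{x : M[h](x)=\mu}\proj{x}$ for the measurement projectors, and $\rho_{\mathrm{mix}} = \sum_\mu P_\mu \proj{\psi_y} P_\mu$ for the $\cM$-dephased state. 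The $b=0$ branch of $\mathsf{EvTargetCollapseExp}$ hands $\cA_0$ the pure state $\proj{\psi_y}$, while the $b=1$ branch hands it $\rho_{\mathrm{mix}}$; I abbreviate this experiment by $\mathsf{Expmt}_0$ and set $\mathsf{Adv}^{\mathsf{Expmt}_0} = |\Pr[\mathsf{Expmt}_0(0)=1]-\Pr[\mathsf{Expmt}_0(1)=1]|$, the quantity to be bounded.

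First I would defer the choice of branch. Introduce a control qubit $\mathsf{C}$, initialized to $\ket{+}$, and an ancilla $\mathsf{V}$: controlled on $\mathsf{C}=1$, the challenger coherently computes $M[h]$ into $\mathsf{V}$ (so that tracing out $\mathsf{V}$ realizes the $\cM$-measurement), and controlled on $\mathsf{C}=0$ it leaves $\mathsf{X}$ alone. It sends $\mathsf{X}$ to $\cA_0$, retains $\mathsf{C}\mathsf{V}$, and only after the certificate $\pi$ is returned does it measure $\mathsf{C}$ to obtain $c$. Since $\cA_0$'s operations on $\mathsf{X}$ commute with the measurement of $\mathsf{C}$, conditioning on $c=0$ (resp.\ $c=1$) retroactively steers the experiment into the superposition (resp.\ mixture) world, and $\Pr[c=0]=\Pr[c=1]=\tfrac12$ independently of the number of outcomes of $\cM$; this yields the deferred experiment $\mathsf{Expmt}_1$ with $\mathsf{Adv}^{\mathsf{Expmt}_1} \geq \tfrac12\,\mathsf{Adv}^{\mathsf{Expmt}_0}$.

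Next I would define $\mathsf{Expmt}_2$, which additionally applies the $\cM$-measurement to $\mathsf{X}$ \emph{before} it is sent. A short calculation shows that conditioning on the outcome $\mu$ leaves $\mathsf{X}$ in the eigenstate $P_\mu\ket{\psi_y}$ and leaves $\mathsf{C}\mathsf{V}$ in the fixed, \emph{disentangled} state $\ket{\chi_\mu} = \tfrac{1}{\sqrt2}(\ket{0}_{\mathsf C}\ket{0}_{\mathsf V} + \ket{1}_{\mathsf C}\ket{\mu}_{\mathsf V})$. The key point is that $(\cD,\cM)$-target-collapsing renders $\mathsf{Expmt}_1$ and $\mathsf{Expmt}_2$ indistinguishable to the QPT algorithm $\cA_0$: a distinguisher yields a reduction that receives a target-collapsing challenge on $\mathsf{X}$ (the superposition $\ket{\psi_y}$ when the challenger did not measure, the mixture $\rho_{\mathrm{mix}}$ when it did), adjoins $\ket{+}_{\mathsf C}\ket{0}_{\mathsf V}$, applies the same controlled $M[h]$-computation, and runs $\cA_0$ up to the point of certificate generation --- reproducing $\mathsf{Expmt}_1$ or $\mathsf{Expmt}_2$ exactly. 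In $\mathsf{Expmt}_2$, $(\cD,\cM)$-target-collision-resistance guarantees that a valid certificate $\pi$ satisfies $M[h](\pi)=\mu$ except with negligible probability, so the \emph{efficient} projection of $\mathsf{C}\mathsf{V}$ onto $\ket{\chi_{M[h](\pi)}}$ succeeds with overwhelming probability whenever $h(\pi)=y$.

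The crux is to transport this projection-success guarantee back to $\mathsf{Expmt}_1$. Because the projector $\proj{\chi_{M[h](\pi)}}$ is efficiently computable from $(\pi,h)$ and everything preceding certificate generation is QPT, any noticeable gap in its success probability between $\mathsf{Expmt}_1$ and $\mathsf{Expmt}_2$ would again break target-collapsing, via the reduction above augmented to output whether the projection succeeded; crucially this reduction never runs the unbounded $\cA_1$. I then define $\mathsf{Expmt}_{\mathsf{alt}}$ as $\mathsf{Expmt}_1$ with this projection performed upon a valid certificate, so that $\mathsf{Adv}^{\mathsf{Expmt}_{\mathsf{alt}}} \geq \mathsf{Adv}^{\mathsf{Expmt}_1} - \negl(\secp)$. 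After a successful projection, $\mathsf{C}\mathsf{V}$ sits in the pure product state $\ket{\chi_{M[h](\pi)}}$, disentangled from $\mathsf{X}$ and from $\cA_0$'s residual register; hence measuring $\mathsf{C}$ produces $c$ that is information-theoretically independent of the eventual output $b'$, even when $b'$ is computed by the unbounded $\cA_1$, giving $\mathsf{Adv}^{\mathsf{Expmt}_{\mathsf{alt}}} = 0$. Chaining the three relations yields $\mathsf{Adv}^{\mathsf{Expmt}_0} \leq 2\,\mathsf{Adv}^{\mathsf{Expmt}_1} \leq 2(\mathsf{Adv}^{\mathsf{Expmt}_{\mathsf{alt}}}+\negl(\secp)) = \negl(\secp)$. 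I expect the main obstacle to be precisely this transfer step: one must confirm that the projection-success event is genuinely efficiently checkable and that its probability is preserved across the only computationally-indistinguishable pair $\mathsf{Expmt}_1$ and $\mathsf{Expmt}_2$ \emph{before} $\cA_1$ is ever invoked, so that no part of the reduction covertly relies on unbounded power, and that the disentangling argument survives a super-constant number of $\cM$-outcomes.
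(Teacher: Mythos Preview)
Your proposal is correct and follows essentially the same deferred-measurement strategy as the paper: purify the branch choice onto a control qubit, use target-collision-resistance to show an efficient projection succeeds in the ``measured'' world, use target-collapsing to transport that success to the ``unmeasured'' world, and conclude that the control bit is information-theoretically uniform after a valid certificate. The one technical difference is that the paper uncomputes the ancilla $V$ via a random Pauli-$\mathsf{Z}^z$ twirl (so the challenger's private state is a single qubit $C$ and the projection is onto $\tfrac{1}{\sqrt 2}(\ket{0}+(-1)^{\langle M[h](\pi),z\rangle}\ket{1})$), whereas you keep $V$ entangled and project $CV$ onto $\ket{\chi_{M[h](\pi)}}=\tfrac{1}{\sqrt 2}(\ket{0}\ket{0}+\ket{1}\ket{M[h](\pi)})$; your variant is slightly more direct since it avoids the twirl lemma, while the paper's keeps the challenger's register to one qubit regardless of $k(\lambda)$.
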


\begin{proof}

Throughout the proof, we will leave the security parameter implicit, defining $H \coloneqq H_\secp, D \coloneqq D_\secp, m \coloneqq m(\secp), n \coloneqq n(\secp)$, $k \coloneqq k(\secp)$, $\cA_0 \coloneqq \cA_{0,\secp}$, and $\cA_1 \coloneqq \cA_{1,\secp}$. Next, we define

\[\ket{\psi}_X \coloneqq \sum_{x \in \{0,1\}^m}\sqrt{D(x)}\ket{x}.\] For $h \in H, y \in \{0,1\}^m$, we define a unit vector \[\ket{\psi_{h,y}}_X \propto  \sum_{x \in \{0,1\}^m : h(x)=y}\sqrt{D(x)}\ket{x}.\] Finally, for $h \in H, y \in \{0,1\}^m, v \in \{0,1\}^k$ we define a unit vector \[\ket{\psi_{h,y,v}}_X \propto  \sum_{x \in \{0,1\}^m : h(x)=y, M[h](x)=v}\sqrt{D(x)}\ket{x}.\]


We consider the following hybrids. 

\begin{itemize}
    \item $\mathsf{Exp}_0(b)$:
    \begin{enumerate}
    \item The challenger prepares $\ket{\psi}_X$, samples a random hash function $h \rand H_\lambda$, coherently computes $h$ on $X$ into a fresh $n$-qubit register $Y$, and measures $Y$ in the computational basis to obtain $y \in \bit^{n}$ and a left-over state $\ket{\psi_{h,y}}_X$.
    \item If $b=0$, the challenger does nothing. Else, if $b=1$, the challenger computes $M[h]$ on $X$ into a fresh $k$-qubit register $V$, and measures $V$ in the computational basis. Finally, the challenger sends the left-over state in system $X$ to $\algo A_0$, together with the string $y \in \bit^{n}$ and a classical description of $h$.

    \item $\algo A_0$ sends a classical certificate $\pi \in \bit^m$ to the challenger and initializes $\algo A_1$ with its residual state.

    \item The challenger checks if $h(\pi)=y$. If true, $\algo A_1$ is run until it outputs a bit $b'$. Otherwise, $b' \gets \{0,1\}$ is sampled uniformly at random. The output of the experiment is $b'$.


\end{enumerate}

    \item $\mathsf{Exp}_1(b)$:
    \begin{enumerate}
    \item The challenger prepares $\ket{\psi}_X$, samples a random hash function $h \rand H_\lambda$, coherently computes $h$ on $X$ into a fresh $n$-qubit register $Y$, and measures $Y$ in the computational basis to obtain $y \in \bit^{n}$ and a left-over state $\ket{\psi_{h,y}}_X$.
    \item The challenger computes $M[h]$ on $X$ into a fresh $k$-qubit register $V$ to obtain a state
    
    \[\propto \sum_{x \in \{0,1\}^m: h(x)=y} \sqrt{D(x)}\ket{x}_X\ket{M[h](x)}_V.\]
    
    Then, the challenger samples a random string $z \rand \bit^k$, prepares a $\ket{+}$ state in system $C$, and applies a controlled-$\mathsf{Z}^{z}$ operation from $C$ to $V$, which results in a state
    
    \begin{align*}
        &\propto \sum_{c \in \{0,1\}} \ket{c}_C \otimes \sum_{x \in \{0,1\}^m: h(x)=y} \sqrt{D(x)}\ket{x}_X\mathsf{Z}^{c \cdot z}\ket{M[h](x)}_V\\ &= \sum_{c \in \{0,1\}} \ket{c}_C \otimes \sum_{x \in \{0,1\}^m: h(x)=y} \sqrt{D(x)}(-1)^{c \cdot \langle M[h](x),z\rangle}\ket{x}_X\ket{M[h](x)}_V.
    \end{align*}
    

    Finally, the challenger uncomputes the $V$ register by again computing $M[h]$ from $X$ to $V$, and sends system $X$ to $\algo A_0$, together with $y \in \bit^n$ and a classical description of $h$.

    \item $\algo A_0$ sends a classical certificate $\pi \in \bit^{m}$ to the challenger and initializes $\algo A_1$ with its residual state.
    \item The challenger checks if $h(\pi)=y$. Then, the challenger measures system $C$ to obtain $c' \in \{0,1\}$ and checks that $c' = b$. If both checks are true, $\algo A_1$ is run until it outputs a bit $b'$. Otherwise, $b' \gets \{0,1\}$ is sampled uniformly at random. The output of the experiment is $b'$.

    \end{enumerate}

\item $\mathsf{Exp}_2(b)$:
    \begin{enumerate}
    \item The challenger prepares $\ket{\psi}_X$, samples a random hash function $h \rand H_\lambda$, coherently computes $h$ on $X$ into a fresh $n$-qubit register $Y$, and measures $Y$ in the computational basis to obtain $y \in \bit^{n}$ and a left-over state $\ket{\psi_{h,y}}_X$.
    \item The challenger computes $M[h]$ on $X$ into a fresh $k$-qubit register $V$. Then, the challenger samples a random string $z \rand \bit^k$, prepares a $\ket{+}$ state in system $C$, applies a controlled-$\mathsf{Z}^{z}$ operation from $C$ to $V$, and finally uncomputes the $V$ register by again computing $M[h]$ from $X$ to $V$. Note that this results in a state
    
    \[\propto \sum_{c \in \{0,1\}}\ket{c}_C \otimes \sum_{x \in \{0,1\}^m : h(x)=y}(-1)^{c \cdot \langle M[h](x),z\rangle}\ket{x}_X.\]
    
    Finally, it sends system $X$ to $\algo A_0$, together with $y \in \bit^n$ and a classical description of $h$.

    \item $\algo A_0$ sends a classical certificate $\pi \in \bit^{m}$ and initializes $\algo A_1$ with its residual state.

    \item The challenger checks if $h(\pi)=y$. Then, the challenger applies the following projective measurement to system $C$:
    \[\Big\{\proj{\phi_\pi^{ z}},I - \proj{\phi_\pi^{ z}}\Big\} \,\quad \text{ where } \quad \ket{\phi_\pi^{ z}} \coloneqq \frac{1}{\sqrt{2}}  \left( \ket{0} +  (-1)^{\langle M[h](\pi), z \rangle } \ket{1}\right),\] and checks that the first outcome is observed. Finally, the challenger measures system $C$ to obtain $c' \in \{0,1\}$ and checks that $c'=b$. If all three checks are true, $\algo A_1$ is run until it outputs a bit $b'$. Otherwise, $b' \gets \{0,1\}$ is sampled uniformly at random. The output of the experiment is $b'$.
    
    \end{enumerate}

Finally, we also use the following hybrid which is convenient for the sake of the proof.

\item $\mathsf{Exp}_3(b)$:
     \begin{enumerate}
     \item The challenger prepares $\ket{\psi}_X$, samples a random hash function $h \rand H_\lambda$, coherently computes $h$ on $X$ into a fresh $n$-qubit register $Y$, and measures $Y$ in the computational basis to obtain $y \in \bit^{n}$ and a left-over state $\ket{\psi_{h,y}}_X$.
     \item The challenger computes $M[h]$ on $X$ into a fresh $k$-qubit register $V$. Then, the challenger measures $V$ in the computational basis to obtain $v \in \{0,1\}^k$. Next, the challenger samples a random string $z \rand \bit^k$, prepares a $\ket{+}$ state in system $C$, applies a controlled-$\mathsf{Z}^{z}$ operation from $C$ to $V$, and finally uncomputes the $V$ register by again computing $M[h]$ from $X$ to $V$. Note that this results in the state
    
    \[\frac{1}{\sqrt{2}}\left(\ket{0}_C + (-1)^{\langle v,z\rangle}\ket{1}_C\right) \otimes \ket{\psi_{h,y,v}}_X.\]
    
    Finally, the challenger sends system $X$ to $\algo A_0$, together with $y \in \bit^n$ and a classical description of $h$.
    
    \item $\algo A_0$ sends a classical certificate $\pi \in \bit^{m}$ to the challenger and initializes $\algo A_1$ with its residual state.

    \item The challenger checks if $h(\pi)=y$. Then, the challenger applies the following projective measurement to system $C$:
    \[\Big\{\proj{\phi_\pi^{ z}},I - \proj{\phi_\pi^{ z}}\Big\} \,\quad \text{ where } \quad \ket{\phi_\pi^{ z}} \coloneqq \frac{1}{\sqrt{2}}  \left( \ket{0} +  (-1)^{\langle M[h](\pi), z \rangle } \ket{1}\right),\] and checks that the first outcome is observed. Finally, the challenger measures system $C$ to obtain $c' \in \{0,1\}$ and checks that $c'=b$. If all three checks are true, $\algo A_1$ is run until it outputs a bit $b'$. Otherwise, $b' \gets \{0,1\}$ is sampled uniformly at random. The output of the experiment is $b'$.
    \end{enumerate}
\end{itemize}

Before we analyze the probability of distinguishing between the consecutive hybrids, we first show that the following statements hold for the final experiment $\mathsf{Exp}_3(b)$.

\begin{claim}\label{claim:identical-certificate}
The probability that the challenger accepts the deletion certificate $\pi$ in Step 4 of $\mathsf{Exp}_3(b)$ and $M[h](\pi) \neq v$ is negligible. That is,

\[\Pr_{h,y,v} \left[
 h(\pi) = y
\,\,\, \wedge \,\,\,
M[h](\pi) \,\neq\, v
 \,\, : \,\, \pi \gets \algo A_0(h,y,\ket{\psi_{h,y,v}})\right] \leq \negl(\lambda),\] where the probability is over the challenger preparing $\ket{\psi}$, sampling $h$, and measuring $y$ and $v$ as described in $\Exp_3(b)$ to produce the left-over state $\ket{\psi_{h,y,v}}$.

\end{claim}

\begin{proof}
This follows directly from the assumed $(\cD,\cM)$-target-collision resistance of $\cH$, since the above probability is exactly $\Pr[\mathsf{TargetCollRes}_{\algo H,\algo A,\algo D,\algo M,\lambda}=1]$.
\end{proof}



    


\begin{claim}\label{claim:measurement-succeeds-wp-1}
The probability that the challenger accepts the deletion certificate $\pi$ in Step $4$ of $\mathsf{Exp}_3(b)$ and
the subsequent projective measurement on system $C$ fails (returns the second outcome) is negligible.
\end{claim}
\begin{proof}
This follows directly from \Cref{claim:identical-certificate}, which implies that except with negligible probability, the register $C$ is in the state
\[\frac{1}{\sqrt{2}}\left(\ket{0} + (-1)^{\langle v,z\rangle}\ket{1}\right)\] at the time the challenger applies the projective measurement.

\end{proof}

For any experiment $\Exp_i(b)$, we define the advantage \[\mathsf{Adv}(\mathsf{\Exp}_i) \coloneqq |\Pr\left[\Exp_i(0) = 1\right]-\Pr\left[\Exp_i(1)=1\right]|.\]

\begin{claim}
$$
\mathsf{Adv}(\mathsf{Exp}_2) = 0.
$$
\end{claim}
\begin{proof}
First note that in the case that the challenger rejects because either the deletion certificate is invalid or their projection fails, the experiment does not involve $b$, and thus the advantage of the adversary is $0$. Second, in the case that the challenger's projection succeeds, the register $C$ is either in the state
$$
\frac{1}{\sqrt{2}}  ( \ket{0} +  (-1)^{\langle \pi, z \rangle } \ket{1}) \quad\,\, \text{ or } \quad\,\,  \frac{1}{\sqrt{2}}  ( \ket{0} -  (-1)^{\langle \pi, z \rangle } \ket{1}) 
$$
for some $z \in \bit^k$, and thereby completely
unentangled from the rest of the system. Notice that the challenger's measurement of system $C$ with outcome $c'$ results in a uniformly random bit, which completely masks $b$. Therefore, the experiment is also independent of $b$ in this case, and thus the adversary's overall advantage in $\mathsf{Exp}_2$ is $0$. 
\end{proof}
Next, we argue the following.

\begin{claim}
$$
|\mathsf{Adv}(\mathsf{Exp}_2) - \mathsf{Adv}(\mathsf{Exp}_1) | \,\leq \, \negl(\lambda).
$$    
\end{claim}
\begin{proof}
Recall that \Cref{claim:measurement-succeeds-wp-1} shows that the projective measurement performed by the challenger in Step $4$ of $\mathsf{Exp}_3$ succeeds with overwhelming probability. We now argue that the same is also true in $\mathsf{Exp}_2$. Suppose for the sake of contradiction that there is a non-negligible difference between the success probabilities of the measurement. 
We now show that this implies the existence of an efficient distinguisher $\algo A'$ that breaks the $(\cD,\cM)$-target-collapsing property of the hash family $\algo H = \{H_\lambda\}_{\lambda \in \N}$. 

$\algo A'$ receives $(y,h)$ and a state on register $X$ from its challenger. Next, it computes $M[h]$ on $X$ into a fresh $k$-qubit register $V$, samples a random string $z \rand \{0,1\}^k$, prepares a $\ket{+}$ state in system $C$, applies a controlled-$\mathsf{Z}^z$ operation from $C$ to $V$, and then uncomputes register $V$ by again applying $M[h]$ from $X$ to $V$. Then, it runs $\algo A$ on $(y,h,X)$, which outputs a certificate $\pi$.


Finally, $\algo A'$ applies the following projective measurement to system $C$:
$$
\Big\{\proj{\phi_\pi^{z}},I - \proj{\phi_\pi^{z}}\Big\} \,\quad \text{ where } \quad
\ket{\phi_\pi^{z}} \coloneqq \frac{1}{\sqrt{2}}  \left( \ket{0} +  (-1)^{\langle \pi, z \rangle } \ket{1}\right),
$$
and outputs $1$ if the measurement succeeds and $0$ otherwise.
If there is a non-negligible difference in success probabilities of this measurement between $\Exp_3(b)$ and $\Exp_2(b)$ (for any $b \in \{0,1\}$), then $\cA'$ breaks $(\cD,\cM)$-target-collapsing of $\cH$.


Now, recall that $\mathsf{Exp}_2(b)$ is identical to $\mathsf{Exp}_1(b)$, except that the challenger applies an additional a measurement in Step 4. Because the measurement succeeds with overwhelming probability, it follows from Gentle Measurement that the advantage of the adversary must remain the same up to a negligible amount. This proves the claim.
\end{proof}



\begin{claim}\label{claim:exp_0-exp_1}
\[\mathsf{Adv}(\Exp_1) = \mathsf{Adv}(\Exp_0)/2.\]   
\end{claim}
\begin{proof}

First note that in $\Exp_1(b)$, we can imagine measuring register $C$ to obtain $c'$ and aborting if $c' \neq b$ before the challenger sends any information to the adversary. This follows because register $C$ is disjoint from the adversary's registers. Next, by \Cref{lem:random-Z}, we have the following guarantees about the state on system $X$ given to the adversary in $\Exp_1(b)$. 

\begin{itemize}
    \item In the case $c' = b = 0$, the reduced state on register $X$ is $\ket{\psi_{h,y}}$.
    \item In the case that $c' = b = 1$, the reduced state on register $X$ is a mixture over $\ket{\psi_{h,y,v}}$ where $v$ is the result of measuring register $V$ in the computational basis.
\end{itemize}

Thus, this experiment is identical to $\Exp_0(b)$, except that we decide to abort and output a uniformly random bit $b'$ with probability 1/2 at the beginning of the experiment.

\end{proof}

Putting everything together, we have that $\mathsf{Adv}(\Exp_0) \leq \negl(\secp)$, which completes the proof.

\end{proof}

\subsection{Auxiliary Information}
\label{sec:mainaux}

Next, we generalize the above theorem statement to handle hash functions that are sampled with some auxiliary information. That is, there is an algorithm $(h,\aux) \gets \Samp(1^\secp)$ that samples the description of a hash function $h$ along with some auxiliary information $\aux$. We will want to allow the adversary to potentially see information about $\aux$ (but not necessarily all of it), so we define a family $\cZ = \{Z_\secp(\aux)\}_{\secp \in \bbN}$ that specifies what information the adversary sees about $\aux$. In the most straightforward case, $\cZ$ could be some distribution over classical or quantum states, parameterized by $\aux$. However, we also consider an \emph{interactive} $Z_\secp(\aux)$. That is, $Z_\secp$ is the description of an interactive machine that is initialized with $\aux$ and interacts with the adversary $\cA_\secp$.

\begin{definition}
A hash function family $\cH = \{H_\secp : \{0,1\}^{m(\secp)} \to \{0,1\}^{n(\secp)}\}_{\secp \in \bbN}$ with an associated sampling algorithm $\Samp$ is $(\cD,\cM,\cZ)$-target-collapsing for some distribution $\cD = \{D_\secp\}_{\secp \in \bbN}$ over $\{\{0,1\}^{m(\secp)}\}_{\secp \in \bbN}$, family of functions $\cM = \{\{M[h] : \{0,1\}^{m(\secp)} \to \{0,1\}^{k(\secp)}\}_{h \in H_\secp}\}_{\secp \in \bbN}$, and family of (static or interactive) distributions $\cZ = \{Z_\secp(\aux)\}_{(\cdot,\aux) \in \Samp(1^\secp), \secp \in \bbN}$ if, for every QPT adversary $\cA = \{\cA_\secp\}_{\secp \in \bbN}$,
$$
|
\Pr[ \mathsf{TargetCollapseExp}_{\algo H,\algo A,\algo D,\algo M,\algo Z,\lambda}(0)=1] - \Pr[ \mathsf{TargetCollapseExp}_{\algo H,\algo A,\algo D,\algo M,\algo Z,\lambda}(1)=1]
| \leq \negl(\lambda),
$$
where the experiment $\mathsf{TargetCollapseExp}_{\algo H,\algo A,\algo D,\algo M,\algo Z,\lambda}(b)$ is defined as in \cref{def:target-collapsing} except that $h$ is sampled by $(h,\aux) \gets \Samp(1^\secp)$, and the adversary is given (or interacts with) $Z_\secp(\aux)$ along with $(X,y,h)$.

\end{definition}

\begin{definition}
A hash function family $\cH = \{H_\secp : \{0,1\}^{m(\secp)} \to \{0,1\}^{n(\secp)}\}_{\secp \in \bbN}$ with an associated sampling algorithm $\Samp$ is $(\cD,\cM,\cZ)$-target-collision-resistant for some distribution $\cD = \{D_\secp\}_{\secp \in \bbN}$ over $\{\{0,1\}^{m(\secp)}\}_{\secp \in \bbN}$, family of functions $\cM = \{\{M[h] : \{0,1\}^{m(\secp)} \to \{0,1\}^{k(\secp)}\}_{h \in H_\secp}\}_{\secp \in \bbN}$, and family of (static or interactive) distributions $\cZ = \{Z_\secp(\aux)\}_{(\cdot,\aux) \in \Samp(1^\secp), \secp \in \bbN}$ if, for every QPT adversary $\cA = \{\cA_\secp\}_{\secp \in \bbN}$,
$$
\Pr[ \mathsf{TargetCollRes}_{\algo H,\algo A,\algo D,\algo M,\algo Z,\lambda}(0)=1] \leq \negl(\secp),
$$
where the experiment $\mathsf{TargetCollRes}_{\algo H,\algo A,\algo D,\algo M,\algo Z,\lambda}(b)$ is defined as in \cref{def:target-CR} except that $h$ is sampled by $(h,\aux) \gets \Samp(1^\secp)$, and the adversary is given (or interacts with) $Z_\secp(\aux)$ along with $(X,y,h)$.

\end{definition}

\begin{definition}
A hash function family $\cH = \{H_\secp : \{0,1\}^{m(\secp)} \to \{0,1\}^{n(\secp)}\}_{\secp \in \bbN}$ with an associated sampling algorithm $\Samp$ is certified everlasting $(\cD,\cM,\cZ)$-target-collapsing for some distribution $\cD = \{D_\secp\}_{\secp \in \bbN}$ over $\{\{0,1\}^{m(\secp)}\}_{\secp \in \bbN}$, family of functions $\cM = \{\{M[h] : \{0,1\}^{m(\secp)} \to \{0,1\}^{k(\secp)}\}_{h \in H_\secp}\}_{\secp \in \bbN}$, and family of (static or interactive) distributions $\cZ = \{Z_\secp(\aux)\}_{(\cdot,\aux) \in \Samp(1^\secp), \secp \in \bbN}$ if, for every two-part adversary $\algo A = \{\algo A_{0,\secp},\algo A_{1,\secp}\}_{\secp \in \bbN}$, where $\{\algo A_{0,\secp}\}_{\secp \in \bbN}$ is QPT and $\{\algo A_{1,\secp}\}_{\secp \in \bbN}$ is unbounded, it holds that 
$$
|\Pr[ \mathsf{EvTargetCollapseExp}_{\algo H,\algo A,\algo D,\algo M,\algo Z,\lambda}(0)=1]-\Pr[ \mathsf{EvTargetCollapseExp}_{\algo H,\algo A,\algo D,\algo M,\algo Z,\lambda}(1)=1]| \leq \negl(\secp),
$$
where the experiment $\mathsf{EvTargetCollapseExp}_{\algo H,\algo A,\algo D,\algo M,\algo Z,\lambda}(b)$ is defined as in \cref{def:target-CR} except that $h$ is sampled by $(h,\aux) \gets \Samp(1^\secp)$, and the first part of the adversary $\algo A_{0,\secp}$ is given (or interacts with) $Z_\secp(\aux)$ along with $(X,y,h)$.

\end{definition}

Now, the following generalization of \cref{thm:CETC-generalization} follows immediately from the proof of \cref{thm:CETC-generalization}, by additionally giving $Z_\secp(\aux)$ to the adversary in each of the experiments.

\begin{theorem}
Let $\cH = \{H_\secp\}_{\secp \in \bbN}$ be a hash function family that is both $(\cD,\cM,\cZ)$-target-collapsing and $(\cD,\cM,\cZ)$-target-collision-resistant, for some distribution $\cD$, efficiently computable family of functions $\cM$, and (static or interactive) distribution $\cZ$. Then, $\cH$ is certified everlasting $(\cD,\cM,\cZ)$-target-collapsing.
\end{theorem}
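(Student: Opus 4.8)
The plan is to rerun the proof of \Cref{thm:CETC-generalization} essentially verbatim, changing only the sampling of the hash and the adversary's view: in every hybrid I would sample $(h,\aux) \gets \Samp(1^\secp)$ and additionally hand the first-stage adversary $\cA_{0,\secp}$ the distribution $Z_\secp(\aux)$ (or place it in interaction with $Z_\secp(\aux)$) at the moment it receives $(X,y,h)$. Concretely, I would redefine the four hybrids $\mathsf{Exp}_0(b),\dots,\mathsf{Exp}_3(b)$ exactly as in the proof of \Cref{thm:CETC-generalization} with this single augmentation, and then revisit each claim to confirm it still goes through.

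First I would observe that the purely structural claims are completely insensitive to the auxiliary information, since they only concern the challenger's internal registers $C$, $V$, and $X$. The claim $\mathsf{Adv}(\mathsf{Exp}_2)=0$ continues to hold because, after a successful projection, register $C$ is unentangled from everything else (including the adversary's state and $Z_\secp(\aux)$), so a computational-basis measurement of $C$ yields a uniform bit that masks $b$. Likewise \Cref{claim:exp_0-exp_1}, i.e.\ $\mathsf{Adv}(\Exp_1)=\mathsf{Adv}(\Exp_0)/2$, is a consequence of the Pauli-$\mathsf{Z}$ twirl (\Cref{lem:random-Z}) applied to the challenger's registers together with the fact that $C$ is disjoint from the adversary's registers; neither fact is affected by giving $Z_\secp(\aux)$ to $\cA_0$. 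Finally, \Cref{claim:measurement-succeeds-wp-1} is derived from \Cref{claim:identical-certificate} by the same state-of-register-$C$ argument and invokes no hardness directly.

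The only two claims that invoke cryptographic hardness require adaptation, and in both cases the fix is simply to route the auxiliary information through the reduction. For \Cref{claim:identical-certificate}, the measured event is now exactly $\Pr[\mathsf{TargetCollRes}_{\cH,\cA,\cD,\cM,\cZ,\lambda}=1]$ in the auxiliary-information game, because $\cA_0$ is handed $Z_\secp(\aux)$ precisely as in that experiment; it therefore follows from $(\cD,\cM,\cZ)$-target-collision-resistance. For the step bounding $|\mathsf{Adv}(\mathsf{Exp}_2)-\mathsf{Adv}(\mathsf{Exp}_1)|$, I would build the same efficient distinguisher $\cA'$, now playing against the $(\cD,\cM,\cZ)$-target-collapsing challenger: $\cA'$ receives $(y,h)$ and register $X$, is given (or placed in interaction with) $Z_\secp(\aux)$ by its challenger, relays all such messages to and from the QPT algorithm $\cA_0$, performs the controlled-$\mathsf{Z}^{z}$ preparation and uncomputation, runs $\cA_0$ to obtain $\pi$, and applies $\{\proj{\phi_\pi^{z}},I-\proj{\phi_\pi^{z}}\}$ to $C$, outputting the first outcome. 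Since $\cA'$ only runs the bounded $\cA_0$ together with an efficient measurement, it is QPT, so a non-negligible gap in the projection's success probability between the superposition and mixture challenges would break $(\cD,\cM,\cZ)$-target-collapsing; gentle measurement then transfers the bound from $\mathsf{Exp}_2$ to $\mathsf{Exp}_1$ exactly as before.

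The one point requiring care, though it is not really an obstacle, is the faithful mediation of a \emph{possibly interactive} $Z_\secp(\aux)$ by the reductions. The key observation is that in both the $(\cD,\cM,\cZ)$-target-collapsing and $(\cD,\cM,\cZ)$-target-collision-resistance games it is the external challenger who samples $(h,\aux)$ and runs $Z_\secp(\aux)$, so the reduction never needs to learn $\aux$ and merely forwards the interaction between $Z_\secp(\aux)$ and $\cA_0$. Because $\cA_0$ is the only party that ever touches $Z_\secp(\aux)$ — the unbounded $\cA_1$ is invoked only after a valid certificate is produced and plays no role in either reduction — this relaying is immediate. Chaining the (now auxiliary-information) claims exactly as in \Cref{thm:CETC-generalization} yields $\mathsf{Adv}(\Exp_0)\le\negl(\secp)$, completing the proof.
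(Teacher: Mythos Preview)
Your proposal is correct and takes essentially the same approach as the paper: the paper's own proof of this theorem is a one-sentence remark that it ``follows immediately from the proof of \Cref{thm:CETC-generalization}, by additionally giving $Z_\secp(\aux)$ to the adversary in each of the experiments.'' You have simply spelled out that remark in full, carefully checking (more carefully than the paper does) that the structural claims are insensitive to $\cZ$ and that the two hardness-based claims reduce to the auxiliary-information versions of target-collision-resistance and target-collapsing; your extra paragraph on mediating an interactive $Z_\secp(\aux)$ is a sound observation the paper leaves implicit.
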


\subsection{Target-Collision-Resistance implies Target-Collapsing for Polynomial-Outcome Measurements}
\label{sec:tcr-implies}

In this section, we show that recent techniques from the collapsing hash function / collapsing commitment literature \cite{cryptoeprint:2022/786,crypto-2022-32202,crypto-2022-32124} imply that when $\cM$ is a function with polynomial number of outcomes, then $(\cD,\cM,\cZ)$-target-collision-resistance implies $(\cD,\cM,\cZ)$-target-collapsing. In this paper, we will only need to use this claim for \emph{two-outcome} measurements, but we show it for the more general case of polynomial-outcome measurements.

\begin{lemma}\label{thm:TC-from-TCR}
Let $\cH = \{H_\secp : \{0,1\}^{m(\secp)}\to \{0,1\}^{n(\secp)}\}_{\secp \in \bbN}$ be a hash function family that is $(\cD,\cM,\cZ)$-target-collision-resistant for some distribution $\cD = \{D_\secp\}_{\secp \in \bbN}$ over $\{\{0,1\}^{m(\secp)}\}_{\secp \in \bbN}$, family of functions $\cM = \{\{M[h] : \{0,1\}^{m(\secp)} \to \{0,1\}^{k(\secp)}\}_{h \in H_\secp}\}_{\secp \in \bbN}$ for $k(\secp)=O(\log \secp)$, and family of (static or interactive) distributions $\cZ = \{Z_\secp(\aux)\}_{(\cdot,\aux) \in \Samp(1^\secp), \secp \in \bbN}$. Then, $\cH$ is $(\cD,\cM,\cZ)$-target-collapsing.
\end{lemma}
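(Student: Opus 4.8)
The plan is to prove the contrapositive: assuming $\cH$ is \emph{not} $(\cD,\cM,\cZ)$-target-collapsing, I would exhibit a QPT adversary breaking $(\cD,\cM,\cZ)$-target-collision-resistance. So fix a QPT distinguisher $\cA$ with non-negligible advantage $\epsilon$ in $\mathsf{TargetCollapseExp}$. First I would put $\cA$ in canonical form: it applies a unitary $U$ to the received register $X$ together with private ancillas (and the auxiliary information $Z_\secp(\aux)$, which I carry along passively throughout), and then reads off its guess via a two-outcome projective measurement $\{P, I-P\}$ on a designated output qubit. Writing $\Pi_v^{h,y}$ for the projector onto the span of $\{\ket{x} : h(x)=y,\ M[h](x)=v\}$, the two worlds of the experiment hand $\cA$ either the coherent state $\proj{\psi_{h,y}}$ (when $b=0$) or its $\cM$-dephasing $\sum_v \Pi_v^{h,y}\proj{\psi_{h,y}}\Pi_v^{h,y}$ (when $b=1$). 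Hence $\epsilon = \big|\E_{h,y}\Tr[P(\proj{\psi_{h,y}} - \sum_v \Pi_v^{h,y}\proj{\psi_{h,y}}\Pi_v^{h,y})]\big|$ is precisely the sensitivity of $\{P,I-P\}$ to the off-diagonal coherences between distinct $\cM$-sectors $v \neq v'$.

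The core of the argument is the swapping technique of \cite{cryptoeprint:2022/786,crypto-2022-32202,crypto-2022-32124}: a measurement sensitive to inter-sector coherence can be converted into a procedure that, starting inside one sector, moves weight into a different sector. Concretely, I would build a TCR adversary $\cB$ that, on input $(h,y,Z_\secp(\aux))$ and the collapsed state $\ket{\psi_{h,y,v}}$ (exactly the state the TCR challenger produces after measuring $\cM$ and obtaining outcome $v$), applies $U$, measures the output qubit with $\{P,I-P\}$, applies $U^\dagger$, and finally measures $X$ in the computational basis to obtain a candidate preimage $x'$, which it outputs. Since $U^\dagger U = I$, a trivial intermediate measurement would return the state to $\ket{\psi_{h,y,v}}$, which is supported entirely on preimages of $y$ in sector $v$; the only way $x'$ can leave sector $v$ is through the disturbance caused by the $\{P,I-P\}$ measurement. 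The swapping lemma quantifies this disturbance: because $\cA$ detects inter-sector coherence with advantage $\epsilon$, this measure--disturb--measure loop lands on a preimage $x'$ of $y$ with $M[h](x') \neq v$ with probability polynomially related to $\epsilon$. The two adaptations I would spell out are exactly those flagged in the overview: (i) the analysis never uses that $y$ is adversarial, so it applies verbatim when $y$ is sampled honestly from the distribution induced by $\cD$ and $h$; and (ii) one replaces ``swap between singleton preimages'' by ``swap between the sets $\{x : M[h](x)=v\}$,'' so the disturbed outcome is any preimage in a different sector.

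Because $k(\secp) = O(\log\secp)$, the measurement $\cM$ has only $2^{k(\secp)} = \poly(\secp)$ outcomes, and this is where the polynomial-outcome hypothesis is essential: expanding $\epsilon$ as a sum over the $\poly(\secp)$ ordered sector-pairs $(v,v')$ and averaging, some pair contributes at least $\epsilon/\poly(\secp)$, so the loss in converting ``total detected coherence'' into ``a collision between two fixed sectors'' is only polynomial, and a non-negligible $\epsilon$ yields a non-negligible TCR success probability for $\cB$. I would finish by checking that $\cB$ is QPT (it runs $U$, $U^\dagger$, and a handful of measurements), that it faithfully simulates the TCR experiment, and that it forwards $Z_\secp(\aux)$ passively, so that its success contradicts $(\cD,\cM,\cZ)$-target-collision-resistance.

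The step I expect to be the main obstacle is the swapping lemma itself --- making rigorous that a binary measurement with off-diagonal sensitivity $\epsilon$ forces the measure--disturb--measure loop to change sectors with probability $\poly(\epsilon)$, \emph{while keeping the output within the preimages of $y$}. The subtlety is that $U$ can map the preimage subspace anywhere, so one must argue that the round trip $U^\dagger(\text{measure})U$ confines the disturbance to transitions between genuine preimage sectors rather than leaking onto non-preimages, and that the transferred amplitude is lower-bounded by the detected coherence instead of being cancelled by interference. Pinning down the correct polynomial dependence here, and checking that the per-pair averaging interacts cleanly with the renormalization after the intermediate measurement, is the delicate part; the remaining bookkeeping (canonical form of $\cA$, carrying $\cZ$, and the union bound over sectors) is routine.
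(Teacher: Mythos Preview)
Your proposal is correct and follows essentially the same route as the paper: the paper packages your measure--disturb--measure adversary as the single projector $D=U^\dagger P U$ and then directly invokes the inequality \cite[Claim~3.5]{cryptoeprint:2022/786}, which is exactly the quantitative ``swapping lemma'' you defer to, with the $1/N$ factor there being the only place $k(\secp)=O(\log\secp)$ enters (so no separate averaging over sector pairs is needed). Your flagged obstacle about leaking onto non-preimages of $y$ dissolves once you apply that inequality with the projectors $\Pi_v^{h,y}$ onto $\{x:h(x)=y,\ M[h](x)=v\}$ rather than onto all of $\{x:M[h](x)=v\}$: since $\ket{\psi_{h,y}}$ lies in the image of $\sum_v \Pi_v^{h,y}$, the hypothesis of the inequality is met, and its left-hand side is then \emph{exactly} the TCR success probability of $\cB$.
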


\begin{proof}
We will make use of the following fact \cite[Claim 3.5]{cryptoeprint:2022/786}.

\begin{fact}\label{fact:distinguish-map}
Let $D$ be a projector, $\{\Pi_i\}_{i \in [N]}$ be pairwise orthogonal projectors, and $\ket{\psi}$ be any state such that $\ket{\psi} \in \mathsf{im}(\sum_{i \in [N]}\Pi_i)$. Then,
    \[\sum_{i \in [N]}\bigg\| \left(\sum_{j \neq i}\Pi_j\right)D\Pi_i\ket{\psi}\bigg\|^2  \geq \frac{1}{N}\left(\|D\ket{\psi}\|^2-\left(\sum_{i \in [N]}\|D\Pi_i\ket{\psi}\|^2\right)\right)^2.\]
\end{fact}

Now, suppose there exists an adversary $\{\cA_\secp\}_{\secp \in \bbN}$ that breaks the $(\cD,\cM,\cZ)$-target-collapsing of $\cH$. Dropping parameterization by $\secp$ for convenience, we can write such an adversary as a binary outcome projective measurement $(D,I-D)$ applied to a state received from the challenger. For any $h \in H_\secp, y \in \{0,1\}^n$, let $\ket{\psi_{h,y}}$ be the normalized state such that \[\ket{\psi_{h,y}} \propto \ket{h,y}\otimes\sum_{x \in \{0,1\}^m: h(x)=y}\sqrt{D(x)}\ket{x},\] and for $i \in \{0,1\}^k$, let \[\Pi_{i,h} \coloneqq \sum_{x \in \{0,1\}^m : M[h](x) = i}\dyad{x}{x}.\]

Then, the adversary's advantage in the $(\cD,\cM,\cZ)$-target-collapsing game can be written as 

\[\E_{h,y}\left[\|D\ket{\psi_{h,y}}\|^2 - \sum_{i \in \{0,1\}^k}\|D\Pi_{i,h}\ket{\psi_{h,y}}\|^2 \right] = \nonnegl(\secp),\] where the expectation is over the sampling of $h \gets H_\secp$ and the challenger's measurement of $y$.

Thus, by \cref{fact:distinguish-map}, it follows that

\[\E_{h,y}\left[\sum_{i \in \{0,1\}^k}\bigg\|\left(\sum_{j \neq i}\Pi_{j,h}\right)D\Pi_{i,h}\ket{\psi_{h,y}}\bigg\|^2\right] = \nonnegl(\secp),\] since $2^k = 2^{O(\log \secp)} = \poly(\secp)$. This completes the proof, as this expression exactly corresponds to the adversary's probability of winning the $(\cD,\cM,\cZ)$-target-collision-resistance game by applying $D$ and then measuring in the computational basis.

\end{proof}


\section{Publicly-Verifiable Deletion from Dual-Regev Encryption}\label{sec:Dual-Regev}

In this section, we recall the constructions of Dual-Regev public-key encryption as well as fully homomorphic encryption with publicly-verifiable deletion introduced by Poremba~\cite{Poremba22}. Using our main result on certified-everlasting target-collapsing hashes in \Cref{thm:CETC-generalization}, we prove the \emph{strong Gaussian-collapsing conjecture} in~\cite{Poremba22}, and then conclude that the aforementioned constructions achieve certified-everlasting security assuming the quantum hardness of $\LWE$ and $\SIS$.

First, let us recall the definition of public-key encryption with publicly-verifiable deletion. 

\subsection{Definition: Encryption with Publicly-Verifiable Deletion}

A public-key encryption (PKE) scheme with publicly-verifiable deletion (PVD) has the following syntax.

\begin{itemize}
    \item $\KeyGen(1^\secp) \to (\pk,\sk)$: the key generation algorithm takes as input the security parameter $\secp$ and outputs a public key $\pk$ and secret key $\sk$.
    \item $\Enc(\pk,m) \to (\vk,\ket{\ct})$: the encryption algorithm takes as input the public key $\pk$ and a plaintext $m$, and outputs a (public) verification key $\vk$ and a ciphertext $\ket{\ct}$.
    \item $\Dec(\sk,\ket{\ct}) \to m$: the decryption algorithm takes as input the secret key $\sk$ and a ciphertext $\ket{\ct}$ and outputs a plaintext $m$.
    \item $\Del(\ket{\ct}) \to \pi$: the deletion algorithm takes as input a ciphertext $\ket{\ct}$ and outputs a deletion certificate $\pi$.
    \item $\Vrfy(\vk,\pi) \to \{\top,\bot\}$: the verify algorithm takes as input a (public) verification key $\vk$ and a proof $\pi$, and outputs $\top$ or $\bot$.
\end{itemize}

\begin{definition}[Correctness of deletion]\label{def:correctness-deletion}
A PKE scheme with PVD satisfies \emph{correctness of deletion} if for any $m$, it holds with $1-\negl(\secp)$ probability over $(\pk,\sk) \gets \Gen(1^\secp), (\vk,\ket{\ct}) \gets \Enc(\pk,m),\pi \gets \Del(\ket{\ct}),\mu \gets \Vrfy(\vk,\pi)$ that $\mu = \top$.
\end{definition}

\begin{definition}[Certified deletion security]\label{def:security-deletion}
A PKE scheme with PVD satisfies \emph{certified deletion security} if it satisfies standard semantic security, and moreover, for any QPT adversary $\{\cA_\secp\}_{\secp \in \bbN}$, it holds that 
\[\TD\left(\mathsf{EvPKE}_{\cA,\secp}(0),\mathsf{EvPKE}_{\cA,\secp}(1)\right) = \negl(\secp),\] where the experiment $\mathsf{EvPKE}_{\cA,\secp}(b)$ is defined as follows.
\begin{itemize}
    \item Sample $(\pk,\sk) \gets \Gen(1^\secp)$ and $(\vk,\ket{\ct}) \gets \Enc(\pk,b)$.
    \item Run $\cA_\secp(\pk,\vk,\ket{\ct})$, and parse their output as a deletion certificate $\pi$ and a left-over quantum state $\rho$.
    \item If $\Vrfy(\vk,\pi) = \top$, output $\rho$, and otherwise output $\bot$.
\end{itemize}
\end{definition}

\ \\
Before we introduce the
Dual-Regev public-key schemes proposed by Poremba~\cite{Poremba22}, let us first recall some basic facts about
Gaussian superpositions.

\subsection{Gaussian 
Superpositions}

Let $m \in \N$. The \emph{Gaussian measure} $\rho_\sigma$ with parameter $\sigma > 0$ is defined as
\begin{align*}
\rho_\sigma(\vec x) = \exp(-\pi \|\vec x \|^2/ \sigma^2), \quad \,\, \forall \vec x \in \mathbb{R}^m.   
\end{align*}
Given a modulus $q \in \N$ and $\sigma \in (\sqrt{2m},q/\sqrt{2m})$, the \emph{truncated} discrete Gaussian distribution $D_{\Z_q^m,\sigma}$ over the finite set $\Z^m \cap (-\frac{q}{2},\frac{q}{2}]^m$ with support $\{\vec x \in \Z_q^m : \|\vec x\| \leq \sigma \sqrt{m}\}$ is defined as
$$
D_{\Z_q^m,\sigma}(\vec x) = \frac{\rho_\sigma(\vec x)}{\displaystyle\sum_{\vec z \in \Z_q^m,\|\vec z\| \leq \sigma\sqrt{m} } \rho_\sigma(\vec z)}.
$$
In this section, we consider
Gaussian superposition states over $\Z^m \cap (-\frac{q}{2},\frac{q}{2}]^m$ of the form
    $$
 \ket{\psi} =    \sum_{\vec x \in \Z_q^m} \rho_\sigma(\vec x) \ket{\vec x}.
    $$
The state $\ket{\psi}$ is not normalized for convenience. A standard tail bound~\cite[Lemma 1.5 (ii)]{Banaszczyk1993} implies that (the normalized variant of) $\ket{\psi}$ is within negligible trace distance of a \emph{truncated} discrete Gaussian superposition $ \ket{\tilde{\psi}}$
with support $\{\vec x \in \Z_q^m : \|\vec x\| \leq \sigma \sqrt{\frac{m}{2}}\}$, where
$$
\ket{\tilde{\psi}}
= 
\left(\sum_{\vec z \in \Z_q^m,\|\vec z\| \leq \sigma \sqrt{\frac{m}{2}} } \rho_{\frac{\sigma}{\sqrt{2}}}(\vec z) \right)^{-\frac{1}{2}}\sum_{\vec x \in \Z_q^m : \|\vec x\| \leq \sigma \sqrt{\frac{m}{2}}}
\rho_\sigma(\vec x) \ket{\vec x}.
$$
Note that a measurement of $\ket{\tilde{\psi}}$ results in a sample from the truncated discrete Gaussian distribution $D_{\Z_q^m,\frac{\sigma}{\sqrt{2}}}$.
We remark that Gaussian superpositions with parameter $\sigma = \Omega(\sqrt{m})$ can be efficiently implemented using standard quantum state preparation techniques; for example using \emph{quantum rejection sampling} and the \emph{Grover-Rudolph algorithm}~\cite{Grover2002CreatingST,Regev05,Brakerski18,brakerski2021cryptographic}.

Let $\vec A \in \Z_q^{n \times m}$. We use the following algorithm, denoted by $\mathsf{GenGauss}(\vec A,\sigma)$ which prepares a partially measured Gaussian superposition of pre-images of a randomly generated image.
\begin{enumerate}
\item Prepare a Gaussian superposition in system $X$ with parameter $\sigma > 0$:
    $$
 \ket{\psi} =    \sum_{\vec x \in \Z_q^m} \rho_\sigma(\vec x) \ket{\vec x} \otimes \ket{\vec 0}.
    $$
\item Apply the unitary $U_{\vec A}: \ket{\vec x}\ket{\vec 0} \rightarrow \ket{\vec x} \ket{\vec A \cdot \vec x \Mod{q}}$, which results in the state
$$
 \ket{ \psi} =   \sum_{\vec x \in \Z_q^m} \rho_\sigma(\vec x) \ket{\vec x} \otimes \ket{\vec A \cdot \vec x \Mod{q}}.
  $$
  \item Measure the second register in the computational basis, which results in $\vec y \in \Z_q^n$ and a state
    $$
    \ket{\psi_{\vec y}} = \sum_{\substack{\vec x \in \Z_q^m:\\ \vec A \vec x= \vec y \Mod{q}}} \rho_\sigma(\vec x) \ket{\vec x}.
    $$
\end{enumerate}

Finally, we use the following lemma which characterizes the Fourier transform of a partially measured Gaussian superposition.

\begin{lemma}[\cite{Poremba22}, Lemma 16]\label{lem:duality}
Let $m \in \N$, $q \geq 2$ be a prime and $\sigma \in (\sqrt{8m},q/\sqrt{8m})$.
Let $\vec A \in \Z_q^{n \times m}$ be a matrix whose columns generate $\Z_q^n$ and let $\vec y \in \Z_q^n$ be arbitrary. Then, the $q$-ary quantum Fourier transform of the (normalized variant of the) Gaussian coset state
$$
 \ket{\psi_{\vec y}} = \sum_{\substack{\vec x \in \Z_q^m\\ \vec A \vec x = \vec y \Mod{q}}}\rho_{\sigma}(\vec x) \ket{\vec x}
$$
is within negligible (in $m \in \N$) trace distance of the (normalized variant of the) Gaussian state
$$
 \ket{\hat\psi_{\vec y}} = \sum_{\vec s \in \Z_q^n} \sum_{\vec e \in \Z_q^m} \rho_{q/\sigma}(\vec e) \, \omega_q^{-\langle\vec s,\vec y \rangle} \ket{\vec s^\intercal \vec A + \vec e^\intercal \Mod{q}}.
$$
\end{lemma}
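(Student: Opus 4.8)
The plan is to compute the $q$-ary quantum Fourier transform of $\ket{\psi_{\vec y}}$ directly and then recognize the resulting amplitudes as a discrete Gaussian supported on the ``dual'' $q$-ary lattice. Writing out the transform, $\FT_q\ket{\psi_{\vec y}}$ is proportional to $\sum_{\vec w \in \Z_q^m}\big(\sum_{\vec x : \vec A\vec x = \vec y}\rho_\sigma(\vec x)\,\omega_q^{\langle \vec x,\vec w\rangle}\big)\ket{\vec w}$, so the entire content of the lemma is the evaluation of the inner character-weighted Gaussian sum over the coset $\{\vec x : \vec A\vec x = \vec y \Mod q\}$. Fixing any particular solution $\vec x_0$ with $\vec A\vec x_0 = \vec y \Mod q$, this coset is $\vec x_0 + \Lambda_q^\perp(\vec A)$, where $\Lambda_q^\perp(\vec A) = \{\vec v \in \Z^m : \vec A\vec v = \vec 0 \Mod q\}$, so the inner sum is a shifted lattice Gaussian sum twisted by an additive character.

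Next I would apply the Poisson summation formula to $\Lambda_q^\perp(\vec A)$. Since the (continuous) Fourier transform of $\rho_\sigma$ is proportional to $\rho_{1/\sigma}$, and since the ambient scaling by $q$ converts the primal width $\sigma$ into the dual width $q/\sigma$, Poisson summation turns the sum over $\vec x_0 + \Lambda_q^\perp(\vec A)$ into a sum over the dual lattice $\big(\Lambda_q^\perp(\vec A)\big)^\ast = \tfrac{1}{q}\Lambda_q(\vec A)$, where $\Lambda_q(\vec A) = \{\vec s^\intercal \vec A + q\vec z : \vec s \in \Z^n, \vec z \in \Z^m\}$. The shift by $\vec x_0$ produces a phase $\omega_q^{-\langle \vec s,\vec y\rangle}$ on the dual summand indexed by $\vec s$ (the sign being fixed by the Fourier convention), using $\langle \vec x_0, \vec s^\intercal \vec A\rangle = \langle \vec A\vec x_0, \vec s\rangle = \langle \vec y, \vec s\rangle \Mod q$, while the Gaussian weight becomes $\rho_{q/\sigma}$ evaluated at the offset of $\vec w$ from the nearest point of $\Lambda_q(\vec A)$. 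Writing each such point as $\vec s^\intercal \vec A + \vec e^\intercal \Mod q$ with $\vec e$ the small offset reproduces exactly the claimed $\ket{\hat\psi_{\vec y}} = \sum_{\vec s}\sum_{\vec e}\rho_{q/\sigma}(\vec e)\,\omega_q^{-\langle \vec s,\vec y\rangle}\ket{\vec s^\intercal \vec A + \vec e^\intercal}$. The hypothesis that the columns of $\vec A$ generate $\Z_q^n$ guarantees that $\Lambda_q(\vec A)$ has full rank and that $(\vec s,\vec e)\mapsto \vec s^\intercal \vec A + \vec e^\intercal$ covers the relevant support cleanly.

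Finally I would handle truncation and normalization. The states in the statement are the normalized \emph{truncated} Gaussian superpositions, whereas the Poisson identity is cleanest for the untruncated sums over $\Z^m$; here I would invoke the standard tail bound (\cite{Banaszczyk1993}, exactly as used for $\ket{\tilde\psi}$ above) to argue that in the regime $\sigma \in (\sqrt{8m}, q/\sqrt{8m})$ both $\ket{\psi_{\vec y}}$ and its transform are within negligible trace distance of their truncated counterparts, and that the tails discarded on the dual side are likewise negligible. The condition $\sigma > \sqrt{8m}$ (and symmetrically $q/\sigma > \sqrt{8m}$) places both $\sigma$ and $q/\sigma$ above the smoothing parameter of the relevant lattices, which is precisely what makes the Poisson-summed Gaussian concentrate on genuine lattice-plus-small-error points with error negligible in $m$.

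The main obstacle I anticipate is the bookkeeping in the Poisson-summation step: obtaining the phase $\omega_q^{-\langle \vec s,\vec y\rangle}$ with the correct sign from the coset shift, tracking the $q$-dependent rescaling that sends the primal width $\sigma$ to the dual width $q/\sigma$, and ensuring that truncating \emph{both} the primal and dual sums contributes only a single negligible trace-distance term. None of these steps is conceptually deep—each is a routine manipulation of discrete Gaussians and $q$-ary lattices—but making the approximations line up so that the final bound is genuinely $\negl(m)$ requires applying the smoothing-parameter bounds uniformly across the support.
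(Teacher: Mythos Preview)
The paper does not actually prove this lemma; it is stated with a citation to \cite{Poremba22}, Lemma 16, and no proof appears in the body. So there is no ``paper's own proof'' to compare against here.

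That said, your proposal is the standard argument and is essentially how the cited result in \cite{Poremba22} is obtained: write the coset as $\vec x_0 + \Lambda_q^\perp(\vec A)$, apply Poisson summation to pass to the dual $q$-ary lattice $\tfrac{1}{q}\Lambda_q(\vec A)$, read off the phase $\omega_q^{-\langle \vec s,\vec y\rangle}$ from the shift, and use Banaszczyk's tail bound together with the smoothing-parameter regime $\sigma, q/\sigma \in (\sqrt{8m}, q/\sqrt{8m})$ to control truncation on both sides. Your identification of the bookkeeping hazards (sign of the phase, the $\sigma \leftrightarrow q/\sigma$ rescaling, and uniform negligibility of both primal and dual tails) is accurate, and none of them represents a genuine gap.
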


\subsection{(Strong) Gaussian-Collapsing Property.}

We use the following result which says that the Ajtai hash function is target-collapsing with respect to the truncated discrete Gaussian distribution.

\begin{theorem}[Gaussian-collapsing property, \cite{Poremba22}, Theorem 4]\label{thm:Gauss-collapsing}
Let $n\in \N$ and $q$ be a prime with $m \geq 2n \log q$, each parameterized by $\lambda \in \N$. Let  $\sigma \in (\sqrt{8m},q/\sqrt{8m})$.
Then,
the following samples are computationallyindistinguishable assuming the quantum hardness of decisional $\mathsf{LWE}_{n,q,\alpha q}^m$, for any noise ratio $\alpha \in (0,1)$ with relative noise magnitude $1/\alpha= \sigma \cdot 2^{o(n)}:$
$$
\Bigg(\vec  A \rand \Z_q^{n \times m},\,\, \ket{\psi_{\vec y}}=\sum_{\substack{\vec x \in \Z_q^m\\ \vec A \vec x = \vec y}}\rho_{\sigma}(\vec x) \,\ket{\vec x}, \,\,\vec y\in \Z_q^n \Bigg)\,\, \approx_c \,\,\,\, \Bigg(\vec  A \rand \Z_q^{n \times m}, \,\,\ket{\vec x_0},\,\, \vec A \cdot \vec x_0 \,\in \Z_q^n\Bigg)
$$
where $(\ket{\psi_{\vec y}},\vec y) \leftarrow \mathsf{GenGauss}(\vec A,\sigma)$ and where $\vec x_0 \sim D_{\Z_q^m,\frac{\sigma}{\sqrt{2}}}$ is a discrete Gaussian error.
\end{theorem}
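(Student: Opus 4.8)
The plan is to recognize that this statement is exactly the \emph{collapsing} property of the Ajtai hash $\vec x \mapsto \vec A \vec x$ restricted to Gaussian-weighted preimages: the right-hand side $(\vec A, \ket{\vec x_0}, \vec A \vec x_0)$ is precisely what one obtains by measuring the coset superposition $\ket{\psi_{\vec y}}$ of the left-hand side in the computational basis, since a measurement of the full Gaussian superposition yields a sample $\vec x_0 \sim D_{\Z_q^m, \sigma/\sqrt 2}$ together with its image $\vec A \vec x_0$. So it suffices to show that measuring the $X$ register of $\ket{\psi_{\vec y}}$ is computationally undetectable given $(\vec A, \vec y)$. First I would invoke the $q$-ary analogue of the Pauli-$\mathsf{Z}$ twirl (\Cref{lem:random-Z}): collapsing $X$ in the computational basis is identical to applying a uniformly random phase $\omega_q^{\langle \vec z, \vec x\rangle}$ for $\vec z \rand \Z_q^m$. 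Hence it is enough to prove that the ensemble $\{(\vec A, \vec y, \ket{\psi_{\vec y}})\}$ is computationally indistinguishable from $\{(\vec A, \vec y, \mathsf{Z}^{\vec z}\ket{\psi_{\vec y}})\}_{\vec z}$ with $\vec z$ hidden.

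Next I would move to the Fourier domain. Conjugating by the efficient unitary $\FT_q$ preserves computational indistinguishability and turns the random phase into a random shift $\ket{\vec w} \mapsto \ket{\vec w + \vec z}$. By the duality lemma (\Cref{lem:duality}), $\FT_q \ket{\psi_{\vec y}}$ is negligibly close to the dual Gaussian state $\ket{\hat\psi_{\vec y}} = \sum_{\vec s, \vec e} \rho_{q/\sigma}(\vec e)\, \omega_q^{-\langle \vec s, \vec y\rangle} \ket{\vec s^\intercal \vec A + \vec e^\intercal}$, supported on noisy lattice points. The task therefore reduces to showing that a random shift of $\ket{\hat\psi_{\vec y}}$ is undetectable. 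The crucial structural observation is that a shift by a \emph{lattice} vector $\vec s'^\intercal \vec A$ merely reindexes the sum over $\vec s$ and contributes a global phase $\omega_q^{\langle \vec s', \vec y\rangle}$, hence is invisible; the danger comes only from the coset (non-lattice) part of $\vec z$.

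The reduction to decisional $\LWE$ then proceeds in two sub-steps. (i) A \emph{noise-flooding} step: shifting $\ket{\hat\psi_{\vec y}}$ by a short $\LWE$-shaped vector $\vec v = \vec s'^\intercal \vec A + \vec e'^\intercal$, with $\vec e'$ a discrete Gaussian whose width is much smaller than the state's Gaussian width $q/\sigma$, is \emph{statistically} close to applying only the global phase $\omega_q^{\langle \vec s', \vec y\rangle}$, because the reindexed amplitudes $\rho_{q/\sigma}(\vec e + \vec e')$ are within negligible trace distance of $\rho_{q/\sigma}(\vec e)$. (ii) A \emph{decisional $\LWE$} step: by the assumed quantum hardness of $\LWE_{n,q,\alpha q}^m$, such a sample $\vec v = \vec s'^\intercal \vec A + \vec e'^\intercal$ is computationally indistinguishable from a uniform $\vec z \rand \Z_q^m$. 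Chaining the two, shifting by a uniform $\vec z$ (which realizes the twirl, i.e.\ the collapsed distribution) is indistinguishable from applying no shift (the superposition). Concretely, given an $\LWE$ challenge $(\vec A, \vec v)$, the reduction runs $\mathsf{GenGauss}(\vec A, \sigma)$ to obtain $(\ket{\psi_{\vec y}}, \vec y)$ with a correctly distributed $\vec y$ using $\vec A$ alone, applies $\FT_q$, shifts by $\vec v$, applies $\FT_q^{-1}$, and feeds $(\vec A, \vec y, \cdot)$ to the purported distinguisher, so that any distinguishing advantage transfers to $\LWE$.

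The main obstacle I expect is the noise-flooding step (i) together with its interaction with the parameter regime: one must choose the flooding width so that the overlap between $\ket{\hat\psi_{\vec y}}$ and its shifted copy is $1 - \negl(\lambda)$ while keeping the resulting $\LWE$ instance hard, which is exactly what the hypothesis $1/\alpha = \sigma \cdot 2^{o(n)}$ together with subexponential $\LWE$ hardness is engineered to balance. A secondary subtlety is that the argument must tolerate the \emph{coherent} superposition over the secret $\vec s$ rather than a single classical sample; this is what makes passing to the dual domain, where the shift acts uniformly across the entire superposition and lattice shifts collapse to a harmless global phase, the right move.
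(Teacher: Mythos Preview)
The paper does not prove this statement at all: it is quoted as an imported result from \cite{Poremba22} (Theorem~4 there) and used as a black box to conclude that the Ajtai hash is $D_{\Z_q^m,\sigma/\sqrt{2}}$-target-collapsing. So there is no proof in the present paper to compare your proposal against.

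That said, your plan is a faithful reconstruction of the argument in the cited reference. The chain you outline---realize the computational-basis collapse as a uniform $q$-ary phase twirl, pass to the Fourier domain where the twirl becomes a uniform additive shift on the dual Gaussian state $\ket{\hat\psi_{\vec y}}$ from \Cref{lem:duality}, observe that shifts by lattice vectors $\vec s'^\intercal \vec A$ contribute only a global phase, and then use noise flooding plus decisional $\LWE$ to equate an $\LWE$-shaped shift with both ``no shift'' (statistically) and ``uniform shift'' (computationally)---is exactly the structure of the original proof. Your identification of the parameter balance $1/\alpha = \sigma \cdot 2^{o(n)}$ as governing the flooding step (the $\LWE$ noise width $\alpha q$ must be swallowed by the dual Gaussian width $q/\sigma$) is also correct.
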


Using our main theorem on certified-everlasting target-collapsing hashes in \Cref{thm:CETC-generalization}, we can now prove a stronger variant of \Cref{thm:Gauss-collapsing}. We show the following:

\begin{theorem}\label{thm:ajtai-certified-everlasting}
Let $\lambda \in \N$ be the security parameter, $n(\lambda) \in \N$, $q(\lambda) \in \N$ be a modulus, $m \geq 2n \log q$ and $\sigma \in (\sqrt{2m},q/\sqrt{2m})$. Then, the Ajtai hash function family
$\algo H = \{H_\lambda\}_{\lambda \in \N}$ with
$$
H_\lambda = \left\{ h_{\vec A}: \big\{\vec x \in \Z_q^m : \|\vec x\| \leq \sigma \sqrt{m/2}\big\} \rightarrow \Z_q^n \, \text{ s.t. } \, h_{\vec A}(\vec x) = \vec A \cdot \vec x \Mod{q}; \, \vec A \in \Z_q^{n \times m} \right\}.
$$
is certified everlasting $D_{\Z_q^m,\frac{\sigma}{\sqrt{2}}}$-target-collapsing assuming the quantum hardness of $\SIS_{n,q,\sigma\sqrt{2m}}^m$ and $\mathsf{LWE}_{n,q,\alpha q}^m$, for any noise ratio $\alpha \in (0,1)$ with relative noise magnitude $1/\alpha= \sigma \cdot 2^{o(n)}$.
\end{theorem}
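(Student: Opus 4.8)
The plan is to instantiate \Cref{thm:CETC-generalization} with the Ajtai hash family $\cH$, the preimage distribution $\cD = D_{\Z_q^m,\sigma/\sqrt{2}}$, and the measurement family $\cM$ taken to be the identity (so that the $b=1$ branch collapses the entire preimage register in the computational basis, and we drop the parameterization by $\cM$). By \Cref{thm:CETC-generalization} it then suffices to establish two facts: that $\cH$ is $\cD$-target-collapsing and that $\cH$ is $\cD$-target-collision-resistant. Note that since $\cM$ is the identity it has exponentially many outcomes, so \Cref{thm:TC-from-TCR} does not apply; I would instead establish target-collapsing directly from $\LWE$.

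First I would observe that $\cD$-target-collapsing is essentially a restatement of the Gaussian-collapsing property of \Cref{thm:Gauss-collapsing}. The key bookkeeping point is that the target-collapsing experiment prepares a state with amplitudes $\sqrt{D(\vec x)}$, so the Gaussian superposition $\sum_{\vec x}\rho_\sigma(\vec x)\ket{\vec x}$ corresponds precisely to the distribution $D(\vec x) \propto \rho_\sigma(\vec x)^2 = \rho_{\sigma/\sqrt 2}(\vec x)$, i.e. $\cD = D_{\Z_q^m,\sigma/\sqrt{2}}$. With $\cM$ the identity, the $b=0$ branch hands the adversary the coset superposition $\ket{\psi_{\vec y}}$, while the $b=1$ branch measures the preimage register and hands over a single Gaussian-distributed preimage $\ket{\vec x_0}$ with $\vec x_0 \sim D_{\Z_q^m,\sigma/\sqrt 2}$ and $\vec y = \vec A\cdot\vec x_0$. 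These are exactly the two sides of \Cref{thm:Gauss-collapsing}, which are computationally indistinguishable under $\LWE$. I would also note that truncating the Gaussian superposition to the hash domain $\{\vec x : \|\vec x\| \leq \sigma\sqrt{m/2}\}$ only incurs negligible trace distance, so it does not affect the reduction.

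Next I would prove $\cD$-target-collision-resistance by a direct reduction to $\SIS_{n,q,\sigma\sqrt{2m}}^m$. With $\cM$ the identity, the experiment measures the preimage register to obtain a single short preimage $\vec x_0$ of $\vec y$, and the adversary must output a distinct $\vec x \neq \vec x_0$ with $\vec A\cdot\vec x = \vec y$. Given a $\SIS$ matrix $\vec A$, the reduction samples $\vec x_0 \sim D_{\Z_q^m,\sigma/\sqrt 2}$, sets $\vec y = \vec A\cdot\vec x_0$, and hands $(\vec A,\vec y,\ket{\vec x_0})$ to the adversary; this perfectly simulates the experiment, since sampling $\vec x_0$ first and setting $\vec y = \vec A\cdot\vec x_0$ reproduces the joint distribution of $(\vec y,\vec x_0)$ obtained by measuring $\vec y$ and then $\vec x_0$. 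If the adversary returns a valid $\vec x \neq \vec x_0$, then $\vec x - \vec x_0$ is a nonzero kernel vector of $\vec A$ with $\|\vec x - \vec x_0\| \leq \|\vec x\| + \|\vec x_0\| \leq 2\sigma\sqrt{m/2} = \sigma\sqrt{2m}$, contradicting the hardness of $\SIS$. Combining the two facts via \Cref{thm:CETC-generalization} then yields certified everlasting $\cD$-target-collapsing.

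I expect the main difficulty to be bookkeeping rather than a conceptual hurdle: reconciling the amplitude-versus-probability convention (the factor of $\sqrt 2$ relating the superposition width $\sigma$ to the measured width $\sigma/\sqrt 2$), confirming that the truncated support defining the hash domain agrees up to negligible trace distance with the (untruncated) Gaussian superpositions of \Cref{thm:Gauss-collapsing}, and reconciling the admissible parameter ranges for $\sigma$ — the statement assumes $\sigma \in (\sqrt{2m},q/\sqrt{2m})$ whereas \Cref{thm:Gauss-collapsing} is stated for the narrower range $\sigma \in (\sqrt{8m},q/\sqrt{8m})$, so I would either restrict to the intersection or verify that the imported statement extends to the wider interval.
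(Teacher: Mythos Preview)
Your proposal is correct and follows exactly the same approach as the paper: establish $\cD$-target-collapsing from the Gaussian-collapsing property (\Cref{thm:Gauss-collapsing}) under $\LWE$, establish $\cD$-target-collision-resistance from $\SIS$, and then invoke \Cref{thm:CETC-generalization}. Your detailed bookkeeping (the $\sigma$ versus $\sigma/\sqrt{2}$ amplitude-to-probability conversion, the explicit $\SIS$ reduction, and the observation about the mismatch between the $\sigma$-ranges in the theorem and in \Cref{thm:Gauss-collapsing}) in fact goes beyond what the paper spells out, and the parameter-range discrepancy you flag is a genuine gap in the paper's statement that the authors do not address.
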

\begin{proof}
By the Gaussian-collapsing property in \Cref{thm:Gauss-collapsing}, it follows that $\algo H$ is $D_{\Z_q^m,\frac{\sigma}{\sqrt{2}}}$-target-collapsing assuming the quantum hardness of $\mathsf{LWE}_{n,q,\alpha q}^m$, for any noise ratio $\alpha \in (0,1)$ with relative noise magnitude $1/\alpha= \sigma \cdot 2^{o(n)}$. Moreover, from the quantum hardness of $\SIS_{n,q,\sigma\sqrt{2m}}^m$ it follows that $\algo H$ is $D_{\Z_q^m,\frac{\sigma}{\sqrt{2}}}$-target-collision-resistant. Therefore, the claim follows from \Cref{thm:CETC-generalization}.
\end{proof}

As a corollary, we immediately recover the so-called strong Gaussian-collapsing property of the Ajtai hash function which was previously stated as a conjecture by Poremba~\cite{Poremba22}.

\begin{corollary}[Strong Gaussian-collapsing property]\label{SGC}\ \\
Let $\lambda \in \N$ be the security parameter, $n(\lambda) \in \N$, $q(\lambda) \in \N$ be a modulus and $m > 2n \log q$. Let $\sigma = \Omega(\sqrt{m})$ be a parameter. Then, the Ajtai hash function satisfies the strong Gaussian-collapsing property assuming the quantum hardness of $\SIS_{n,q,\sigma\sqrt{2m}}^m$ and $\mathsf{LWE}_{n,q,\alpha q}^m$, for any noise ratio $\alpha \in (0,1)$ with relative noise magnitude $1/\alpha= \sigma \cdot 2^{o(n)}$. In other words, for every $\QPT$ adversary $\algo A$,
$$
|\Pr[\mathsf{StrongGaussCollapseExp}_{\algo A,n,m,q,\sigma}(0)=1] - \Pr[\mathsf{StrongGaussCollapseExp}_{\algo A,n,m,q,\sigma}(1)=1] \leq \negl(\lambda)
$$
where $\mathsf{StrongGaussCollapseExp}_{\algo A,n,m,q,\sigma}(b)$ is the experiment from \Cref{fig:SGC}.
\end{corollary}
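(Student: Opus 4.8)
The plan is to reduce the strong Gaussian-collapsing experiment of \Cref{fig:SGC} to the certified everlasting $D_{\Z_q^m,\frac{\sigma}{\sqrt 2}}$-target-collapsing property of the Ajtai hash, which we have already established in \Cref{thm:ajtai-certified-everlasting}. The two experiments line up almost perfectly: for the Ajtai hash the measurement family $\cM$ is the identity, so the $b=1$ branch of $\mathsf{EvTargetCollapseExp}$ measures the \emph{entire} preimage register in the computational basis, exactly as in Step~3 of \Cref{fig:SGC}; moreover the prepared superposition $\sum_{\vec x}\sqrt{D_{\Z_q^m,\sigma/\sqrt2}(\vec x)}\ket{\vec x}\propto\sum_{\vec x}\rho_\sigma(\vec x)\ket{\vec x}$ is the same Gaussian state, and since the Ajtai hash is only defined on the short domain $\{\vec x : \|\vec x\|\le\sigma\sqrt{m/2}\}$, the certificate check $h_{\vec A}(\pi)=\vec y$ simultaneously enforces $\vec A\cdot\vec w=\vec y\Mod{q}$ and $\|\vec w\|\le\sigma\sqrt{m/2}$, matching the acceptance condition of Step~5. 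Given a $\QPT$ adversary $\cA$ for the strong Gaussian-collapsing game, I would therefore build a two-part adversary $(\cA_0,\cA_1)$, splitting $\cA$ at the moment it submits its witness $\vec w$: the $\QPT$ part $\cA_0$ runs $\cA$ up to the production of $\vec w$, sets the deletion certificate $\pi\coloneqq\vec w$, and forwards $\cA$'s residual state together with $\vec A$ and $\vec y$ to $\cA_1$; the unbounded part $\cA_1$ then supplies $\cA$ with a trapdoor and runs it to its final output $b'$.

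The two discrepancies I must reconcile are the distribution of $\vec A$ and the revelation of the trapdoor. For the first, the certified everlasting challenger hands $\cA_0$ a uniform $\vec A\rand\Z_q^{n\times m}$, whereas $\cA$ expects the structured $\vec A=[\bar{\vec A}\,\|\,\bar{\vec A}\cdot\bar{\vec x}]$. Since $m\ge 2n\log q$, the leftover hash lemma applied to the universal family $\bar{\vec x}\mapsto\bar{\vec A}\cdot\bar{\vec x}$ (nonzero $\{-1,0,1\}$-differences always contain a unit entry) shows that the structured distribution of $\vec A$ is within negligible statistical distance of uniform, so running $\cA$ on the challenger's uniform $\vec A$ is indistinguishable from the real game. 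For the second, instead of receiving a trapdoor from a challenger, the \emph{unbounded} $\cA_1$ computes one itself: it enumerates the set $\{\bar{\vec x}'\in\bit^{m-1}:\bar{\vec A}\cdot\bar{\vec x}'=\vec a_m\}$, where $\vec a_m$ is the last column of $\vec A$, samples $\bar{\vec x}'$ uniformly from it, and feeds $\vec t'=(\bar{\vec x}',-1)$ to $\cA$. This is precisely the point anticipated in the overview: in the unbounded phase the reduction can brute-force a short trapdoor.

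The crucial step, and the one I expect to require the most care, is verifying that the simulated trapdoor $\vec t'$ is distributed correctly. In the real experiment of \Cref{fig:SGC}, conditioned on $\vec A$ the planted vector $\bar{\vec x}$ is uniform over exactly the set $\{\bar{\vec x}':\bar{\vec A}\cdot\bar{\vec x}'=\vec a_m\}$; since $\cA_1$ samples from precisely this conditional distribution, the joint distribution of $(\vec A,\vec t')$ produced by the reduction is within negligible statistical distance of the joint $(\vec A,\vec t)$ in the real game (the only gaps being the marginal statistical distance on $\vec A$ and the negligible-probability event that a uniform $\vec A$ has its last column outside the image of $\bar{\vec x}'\mapsto\bar{\vec A}\cdot\bar{\vec x}'$, in which case $\cA_1$ aborts with a uniform guess). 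Consequently $\cA$'s entire view in the simulation is statistically close to its view in the real strong Gaussian-collapsing game, and since both experiments output a uniformly random bit on a rejected solution, $\mathsf{EvTargetCollapseExp}(b)$ and $\mathsf{StrongGaussCollapseExp}_{\cA,n,m,q,\sigma}(b)$ agree up to $\negl(\secp)$ for each $b\in\bit$. Invoking \Cref{thm:ajtai-certified-everlasting} to bound the advantage of $(\cA_0,\cA_1)$ by $\negl(\secp)$ then yields the claimed bound on $\cA$'s strong Gaussian-collapsing advantage.
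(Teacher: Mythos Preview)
Your proposal is correct and follows essentially the same approach as the paper: reduce to the certified everlasting $D_{\Z_q^m,\sigma/\sqrt{2}}$-target-collapsing property of the Ajtai hash from \Cref{thm:ajtai-certified-everlasting}, use the leftover hash lemma to replace the structured $\vec A$ by a uniform one, and have the unbounded part of the reduction brute-force a binary trapdoor $(\bar{\vec x}',-1)$ sampled uniformly from the correct conditional set. If anything, you are more careful than the paper's proof, which is quite terse: you explicitly check that the simulated trapdoor has the right conditional distribution given $\vec A$, and you handle the negligible-probability event that a uniform $\vec A$ admits no binary trapdoor, whereas the paper simply asserts that such vectors exist ``because of how the matrix $\vec A$ is constructed.''
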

\begin{proof} 
To prove the statement, we can simply reduce the certified everlasting $D_{\Z_q^m,\frac{\sigma}{\sqrt{2}}}$-target-collapsing security of the Ajtai hash $\vec A = [\bar{\vec A} \, \| \, \bar{\vec A} \cdot \bar{\vec x} \Mod{q}] \in \Z_q^{n \times m}$ with $\bar{\vec x} \rand \bit^{m-1}$ to the
strong Gaussian-collapsing security, and invoke \Cref{thm:ajtai-certified-everlasting}. Here we rely on the fact that the distribution of $\vec A$ is statistically close to uniform by the leftover hash lemma whenever $m > 2n \log q$. Now consider the unbounded reduction that given $\vec A \in \Z_q^{n \times m}$, samples a uniformly random vector $\vec t = (\vec x,-1) \in \Z^{m}$ with $\vec x \in \bit^{m-1}$ such that $\bar{\vec A} \vec x = \bar{\vec A} \bar{\vec x} \Mod{q}$, and then runs the second part of the strong Gaussian-collapsing adversary on input $\vec t$ in order to predict the challenger's bit. Note that such vectors $\vec t$ exist because of how the matrix $\vec A$ is constructed in the experiment. If the strong Gaussian-collapsing adversary has noticeable advantage, then so does the reduction, which would break the certified everlasting $D_{\Z_q^m,\frac{\sigma}{\sqrt{2}}}$-target-collapsing security of the Ajtai hash. 
\end{proof}

\subsection{Dual-Regev Public-Key Encryption with Publicly-Verifiable Deletion}

We now consider the following Dual-Regev encryption scheme introduced by Poremba~\cite{Poremba22}.

\begin{construction}[Dual-Regev $\mathsf{PKE}$ with Publicly-Verifiable Deletion]\label{cons:dual-regev-cd}
Let $n \in \N$ be the security parameter, $m \in \N$ and $q$ be a prime. Let $\alpha \in (0,1)$ and $\sigma = 1/\alpha$ be parameters.
The Dual-Regev $\mathsf{PKE}$ scheme $\mathsf{DualPKECD} = (\KeyGen,\Enc,\Dec,\Del,\Vrfy)$ with certified deletion is defined as follows:
\begin{description}
\item $\KeyGen(1^\lambda) \rightarrow (\pk,\sk):$ sample a random matrix $\bar{\vec A} \rand \Z_q^{n\times m}$ and a vector $\bar{\vec x} \rand \bit^{m}$
and choose $\vec A = [\bar{\vec A} \| \bar{\vec A} \cdot \bar{\vec x} \Mod{q}]$.
Output $(\pk,\sk)$, where $\pk=\vec A \in \Z_q^{n \times (m+1)}$ and $\sk = (-\bar{\vec x}, 1) \in \Z_q^{m+1}$.
\item $\Enc(\pk,b) \rightarrow (\vk,\ket{\ct})$: parse the public key as $\vec A \leftarrow \pk$. To encrypt a single bit $b \in \bit$, generate the following pair for a random $\vec y \in \Z_q^n$:
$$
\vk \leftarrow (\vec A, \vec y), \quad \ket{\ct} \leftarrow \sum_{\vec s \in \Z_q^n} \sum_{\vec e \in \Z_q^{m+1}} \rho_{q/\sigma}(\vec e) \, \omega_q^{-\langle\vec s,\vec y \rangle}\ket{\vec s^\intercal \vec A + \vec e^\intercal +b \cdot (0,\dots,0, \lfloor\frac{q}{2} \rfloor)},
$$
where $\vk$ is the public verification key and $\ket{\ct}$ is an $(m+1)$-qudit quantum ciphertext.

\item $\Dec(\sk,\ket{\ct}) \rightarrow \bit:$ to decrypt, measure $\ket{\ct}$ in the computational basis with outcome $\vec c \in \Z_q^m$. Compute $\vec c^\intercal \cdot \sk \in \Z_q$ and output $0$, if it is closer to $0$ than to $\lfloor\frac{q}{2}\rfloor$, and output $1$, otherwise.

\item $\Del(\ket{\ct}) \rightarrow \pi:$ Measure $\ket{\ct}$ in the Fourier basis and output the measurement outcome $\pi \in \Z_q^{m+1}$.
\item $\Vrfy(\vk,\pi) \rightarrow \{\top,\bot\}:$ to verify a deletion certificate $\pi \in \Z_q^{m+1}$, parse $(\vec A,\vec y) \leftarrow \vk$ and output $\top$, if $\vec A \cdot \pi = \vec y \Mod{q}$ and $\| \pi \| \leq \sqrt{m+1}/\sqrt{2}\alpha$, and output $\bot$, otherwise.
\end{description}
\end{construction}

Let us now illustrate how the deletion procedure takes place. Recall from \Cref{lem:duality} that the Fourier transform of the ciphertext $\ket{\ct}$ results in the \emph{dual} quantum state
\begin{align}\label{eq:dual-with-phase}
\ket{\widehat{\ct}}=\sum_{\substack{\vec x \in \Z_q^{m+1}:\\ \vec A \vec x = \vec y \Mod{q}}}\rho_{\sigma}(\vec x) \, \omega_q^{\langle\vec x,b \cdot (0,\dots,0,  \lfloor\frac{q}{2} \rfloor)\rangle} \,\ket{\vec x}.
\end{align}
In other words, a Fourier basis measurement of $\ket{\ct}$ necessarily erases all information about the plaintext $b \in \bit$ and results in a \emph{short} vector $\pi \in \Z_q^{m+1}$ such that $\vec A \cdot \pi = \vec y \Mod{q}$. Hence, to publicly verify a deletion certificate we can simply check whether it is a solution to the $\ISIS$ problem specified by the verification key $\vk=(\vec A,\vec y)$. Using \Cref{thm:ajtai-certified-everlasting}, we obtain the following:

\begin{theorem}
Let $n\in \N$ and let $q \geq 2$ be a prime modulus such that $q=2^{o(n)}$ and $m \geq 2n \log q$. Let $\sigma \in (\sqrt{8m},q/\sqrt{8m})$ and $\alpha \in (0,1)$ be a noise ratio with $1/\alpha= 2^{o(n)} \cdot \sigma$.
Then, the Dual-Regev public-key encryption scheme in \Cref{cons:dual-regev-cd} has everlasting certified deletion security assuming the quantum (subexponential) hardness of $\mathsf{LWE}_{n,q,\alpha q}^m$ and $\SIS_{n,q,\sigma\sqrt{2m}}^m$.
\end{theorem}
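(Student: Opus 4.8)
The plan is to verify the two requirements of \Cref{def:security-deletion} separately: standard semantic security, and the everlasting trace-distance condition $\TD(\mathsf{EvPKE}_{\cA,\secp}(0),\mathsf{EvPKE}_{\cA,\secp}(1)) = \negl(\secp)$. Both will reduce to properties of the Ajtai hash established earlier, once we pass to the \emph{dual picture}. By \Cref{lem:duality}, the ciphertext $\ket{\ct}$ for bit $b$ is within negligible trace distance of $\FT^{-1}$ applied to the phase-modulated Gaussian coset state $U_b\ket{\psi_{\vec y}}$, where $U_b$ is the diagonal unitary acting as $\ket{\vec x}\mapsto \omega_q^{b\cdot\langle\vec x,(0,\dots,0,\lfloor q/2\rfloor)\rangle}\ket{\vec x}$ and $\ket{\psi_{\vec y}}=\sum_{\vec A\vec x=\vec y}\rho_\sigma(\vec x)\ket{\vec x}$. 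Since $\FT$ is a fixed efficient unitary, and since the honest deletion (a Fourier-basis measurement of $\ket{\ct}$) coincides with a computational-basis preimage extraction on $U_b\ket{\psi_{\vec y}}$, I would push $\FT$ into the adversary and work throughout with the adversary receiving $U_b\ket{\psi_{\vec y}}$ together with $\vk=(\vec A,\vec y)$ and $\pk=\vec A$. Crucially, these are exactly the inputs $(X,y,h)$ of the target-collapsing experiments, so no auxiliary-information generalization is needed. I would also note that $\vec A=[\bar{\vec A}\,\|\,\bar{\vec A}\bar{\vec x}]$ is statistically close to uniform by the leftover hash lemma (as $m\geq 2n\log q$), so the uniform-key guarantees of \Cref{thm:Gauss-collapsing} and \Cref{thm:ajtai-certified-everlasting} apply.

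For semantic security, I would reduce to the (computational) Gaussian-collapsing property of \Cref{thm:Gauss-collapsing}, which holds under $\LWE_{n,q,\alpha q}^m$. The key observation is that $U_b$ is diagonal in the computational basis, so $U_b$ applied to a \emph{single} measured preimage $\ket{\vec x_0}$ produces only a global phase. The reduction receives register $X$ holding either $\ket{\psi_{\vec y}}$ or a measured $\ket{\vec x_0}\sim D_{\Z_q^m,\sigma/\sqrt 2}$, applies $U_b$ and $\FT^{-1}$, and runs the semantic-security adversary on the resulting state together with $(\pk,\vk)$. In the superposition case this reproduces the real challenge ciphertext, while in the measured case the state is independent of $b$ (the phase is global), so the adversary has zero advantage. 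Gaussian-collapsing thus forces the real challenge to computationally hide $b$.

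For the everlasting condition, I would run the hybrid from the technical overview, now instantiated with $\cM$ equal to the identity (i.e.\ a computational-basis measurement of the whole preimage register), matching the $D_{\Z_q^m,\sigma/\sqrt 2}$-target-collapsing guarantee of \Cref{thm:ajtai-certified-everlasting}. Let $\mathsf{H}_0=\mathsf{EvPKE}(0)$ hand the adversary $U_0\ket{\psi_{\vec y}}=\ket{\psi_{\vec y}}$, let $\mathsf{H}_1=\mathsf{EvPKE}(1)$ hand it $U_1\ket{\psi_{\vec y}}$, and let $\mathsf{H}_{\mathsf{mid}}$ hand it the computational-basis measurement of $\ket{\psi_{\vec y}}$. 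Because $U_b$ commutes with that measurement and acts as a global phase on each basis state, $\mathsf{H}_{\mathsf{mid}}$ is \emph{identical} for $b=0$ and $b=1$; intuitively, encoding the bit in the phase is invisible once the register is measured. It remains to bound $\TD(\mathsf{H}_0,\mathsf{H}_{\mathsf{mid}})$ and $\TD(\mathsf{H}_1,\mathsf{H}_{\mathsf{mid}})$. Each of these is exactly the superposition-versus-measured transition controlled by certified everlasting $D_{\Z_q^m,\sigma/\sqrt 2}$-target-collapsing: the certificate-validity check $\Vrfy(\vk,\pi)=\top$, namely $\vec A\pi=\vec y$ with $\|\pi\|\leq \sqrt{m+1}/(\sqrt2\,\alpha)=\sigma\sqrt{(m+1)/2}$, coincides (up to the routine $m$-versus-$m+1$ and truncation bookkeeping) with the hash-preimage condition $h_{\vec A}(\pi)=\vec y$ on the Gaussian support. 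Since the second-stage adversary in \Cref{def:security-deletion} is implicitly unbounded, the negligible distinguishing advantage guaranteed by \Cref{thm:ajtai-certified-everlasting} is equivalent to a negligible trace distance between the conditional leftover states. Applying this twice and using the triangle inequality with $\mathsf{H}_{\mathsf{mid}}(0)=\mathsf{H}_{\mathsf{mid}}(1)$ yields $\TD(\mathsf{EvPKE}(0),\mathsf{EvPKE}(1))=\negl(\secp)$.

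The main obstacle is conceptual rather than computational: justifying that a bit encoded in the \emph{phase} of the ciphertext can be analyzed through the \emph{measure-or-not} target-collapsing experiment. The bridge is the middle hybrid, where the diagonal phase $U_b$ becomes a global phase under the computational-basis measurement and therefore drops out entirely; this is what lets a single measured hybrid serve both $b=0$ and $b=1$. The remaining care is in the plumbing — invoking \Cref{lem:duality} only up to negligible trace distance, replacing $\vec A$ by a uniform matrix via the leftover hash lemma, translating between the distinguishing and trace-distance formulations using unboundedness of the late-stage adversary, and checking that the honest Fourier-basis certificate always lands in the truncated-Gaussian support so that it passes both $\Vrfy$ and the hash-preimage check. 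None of these is difficult, but they must be aligned so that the two applications of \Cref{thm:ajtai-certified-everlasting} apply verbatim.
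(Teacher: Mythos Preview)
Your proposal is correct and follows essentially the same approach as the paper: the paper's proof simply states that it follows the template of \cite[Theorem 7]{Poremba22} with the strong Gaussian-collapsing conjecture replaced by the certified everlasting $D_{\Z_q^m,\sigma/\sqrt{2}}$-target-collapsing property of the Ajtai hash (\Cref{thm:ajtai-certified-everlasting}). You have independently spelled out what that template amounts to---passing to the dual Gaussian-coset picture via \Cref{lem:duality}, exploiting that the diagonal phase $U_b$ becomes global after a computational-basis measurement so a single measured hybrid serves both $b=0$ and $b=1$, and then invoking certified everlasting target-collapsing twice---which is exactly the intended argument, just made explicit.
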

\begin{proof}
The proof is identical to the template used in~\cite[Theorem 7]{Poremba22}, except that the adversary is allowed to be computationally unbounded once the deletion certificate is submitted. This is in contrast with the original proof who considered forwarding the \emph{secret key} during the security experiment. We remark that we do not invoke the strong Gaussian-collapsing property to prove the indistinguishability of the hybrids; instead we use the (stronger) notion of certified everlasting $D_{\Z_q^m,\frac{\sigma}{\sqrt{2}}}$-target-collapsing property of the Ajtai hash shown in \Cref{thm:ajtai-certified-everlasting}. This results in the stronger notion of everlasting certified
deletion security.
\end{proof}

\subsection{Dual-Regev (Leveled) Fully Homomorphic Encryption with Publicly-Verifiable Deletion}\label{sec:dualfhe-pvd}

A homomorphic encryption scheme with certified deletion~\cite{Broadbent_2020,Poremba22,BBK22} is a scheme that supports both homomorphic operations as well as certified deletion of quantum ciphertexts. Here, the two properties are thought of as \emph{separate} features that may or may not be mutually exclusive. Several works~\cite{Poremba22,BBK22,BGGKMRR} have also considered the possibility of realizing both homomorphic evaluation and certified deletion \emph{simultaneously} within a single (possibly interactive) protocol. For example, Poremba~\cite{Poremba22} proposed a four-message protocol that allows a client to learn the outcome of a homomorphic evaluation performed by an untrusted quantum server, while simultaneously certifying that the server has subsequently deleted all data. Bartusek and Khurana~\cite{BBK22} subsequently defined the notion of a four-message protocol for \emph{blind delegation with certified deletion}, which can be instantiated using any $\FHE$ scheme with certified deletion. Crucially, both of the aforementioned four-message protocols require that the server is \emph{honest} during the evaluation phase of the protocol. Finally, in a subsequent follow-up work, Bartusek et al.~\cite{BGGKMRR} constructed a \emph{maliciously} secure bind delegation protocol with certified deletion which relied on succinct non-interactive arguments (SNARGs) for polynomial-time computation.

In this section, we recall the Dual-Regev (leveled) fully homomorphic encryption scheme with publicly-verifiable deletion introduced by Poremba~\cite{Poremba22}. The scheme is based on the \emph{dual variant} of of the (leveled) $\FHE$ scheme by Gentry, Sahai and Waters~\cite{GSW2013,mahadev2018classical}. 
Using our main result on certified-everlasting target-collapsing hashes in \Cref{thm:CETC-generalization}, we then prove the scheme achieves certified-everlasting security assuming the quantum hardness of $\LWE$ and $\SIS$.
Contrary to related works~\cite{Poremba22,BBK22,BGGKMRR}, we only consider the basic notion of $\FHE$ with publicly-verifiable deletion which treats both properties as separate features.

\paragraph{Parameters.} Let $\lambda \in \N$ be the security parameter and let $n \in \N$. Let $L$ be an upper bound on the depth of the polynomial-sized Boolean circuit which is to be evaluated. We choose the following set of parameters for the Dual-Regev leveled $\FHE$ scheme (each parameterized by $\lambda$).
\begin{itemize}
    \item a prime modulus $q \geq 2$.
    \item an integer $m \geq 2n \log q$.
    \item an integer $N = (m+1) \cdot \lceil \log q \rceil$.
     
     \item a noise ratio $\alpha\in (0,1)$ such that
$$
\sqrt{8(m+1)}\leq \alpha q \leq \frac{q}{\sqrt{8}(m+1)\cdot (N+1)^L}.
$$
\end{itemize}

\begin{construction}[Dual-Regev leveled $\mathsf{FHE}$ scheme with certified deletion]\label{cons:FHE-cd}
Let $\lambda \in \N$ be the security parameter.
The Dual-Regev (leveled) $\mathsf{FHE}$ scheme $\mathsf{DualFHECD} = (\KeyGen,\Enc,\Dec,\Eval,\Del,\Vrfy)$ with certified deletion consists of the following algorithms.
\begin{description}
\item $\KeyGen(1^\lambda,1^L) \rightarrow (\pk,\sk):$ sample $\bar{\vec A} \rand \Z_q^{n\times m}$ and vector $\bar{\vec x} \rand \bit^{m}$
and let $\vec A = [\bar{\vec A} \| \bar{\vec A} \cdot \bar{\vec x} \Mod{q}]^\intercal$.
Output $(\pk,\sk)$, where $\pk=\vec A \in \Z_q^{(m+1) \times n}$ and $\sk = (-\bar{\vec x}, 1) \in \Z_q^{m+1}$.
\item $\Enc(\pk,x) \rightarrow (\vk,\ket{\ct}):$ to encrypt a bit $x\in \bit$, parse the public key as $\vec A \in \Z_q^{(m+1) \times n} \leftarrow \pk$ and generate the following pair consisting of a verification key and ciphertext for a random $\vec Y \in \Z_q^{n \times N}$ with columns $\vec y_1,\dots,\vec y_N \in \Z_q^{n}$:
$$
\vk \leftarrow (\vec A,\vec Y), \quad\,\,
\ket{\ct} \leftarrow \sum_{\vec S \in \Z_q^{n \times N}} \sum_{\vec E \in \Z_q^{(m+1)\times N}} \rho_{q/\sigma}(\vec E) \, \omega_q^{-\Tr[\vec S^\intercal \vec Y]} \ket{\vec A\cdot \vec S + \vec E + x \cdot \vec G},
$$
where $\vec G = [\vec I \, \| \, 2 \vec I \, \| \, \dots \, \| \, 2^{\lceil \log q \rceil -1} \vec I] \in \Z_q^{(m+1) \times N}$  denotes the \emph{gadget matrix} and where $\sigma = 1/\alpha$.

\item $\Eval(\mathsf{C}_0,\mathsf{C}_1) \rightarrow \mathsf{C}_0 \mathsf{C}_1 \mathsf{C}$: to apply a $\mathsf{NAND}$ gate onto two registers $\mathsf{C}_0$ and $\mathsf{C}_1$ (possibly part of a larger ciphertext), append an ancilla system $\ket{\vec 0}_{\mathsf{C}}$, and apply the unitary $U_{\mathsf{NAND}}$, defined by
$$
U_{\mathsf{NAND}}: \quad \ket{\vec X}_{\mathsf{C}_0} \otimes \ket{\vec Y}_{\mathsf{C}_1} \otimes \ket{\vec Z}_\mathsf{C} \quad \rightarrow \quad  \ket{\vec X}_{\mathsf{C}_0} \otimes \ket{\vec Y}_{\mathsf{C}_1} \otimes \ket{\vec Z + \vec G - \vec X \cdot \vec G^{-1}(\vec Y) \Mod{q}}_{\mathsf{C}},
$$
where $\vec X,\vec Y,\vec Z \in \Z_q^{(m+1)\times N}$
and $\vec G^{-1}$ is the (non-linear) inverse operation such that $\vec G \circ \vec G^{-1} = \vec I$.
Output the resulting registers $\mathsf{C}_0 \mathsf{C}_1 \mathsf{C}$.

\item $\Dec(\sk,\mathsf{C}) \rightarrow \bit \, \mathbf{or} \, \bot:$ measure the register $\mathsf{C}$ in the computational basis to obtain $\vec C \in \Z_q^{(m+1)\times N}$ and compute $c = \sk^\intercal \cdot \vec c_N \in \Z \cap (-\frac{q}{2},\frac{q}{2}]$, where $\vec c_N \in \Z_q^{m+1}$ is the $N$-th column of $\vec C$; output $0$, if $c$
is closer to $0$ than to $\lfloor\frac{q}{2}\rfloor$,
and output $1$, otherwise.

\item $\Del(\ket{\ct}) \rightarrow \pi:$ measure $\ket{\ct}$ in the Fourier basis with outcomes $\pi = (\pi_1|\dots|\pi_N) \in \Z_q^{(m+1)\times N}$.

\item $\Vrfy(\vk,\pk,\pi) \rightarrow \bit:$ to verify the deletion certificate $\pi = (\pi_1\|\dots\|\pi_N) \in \Z_q^{(m+1)\times N}$, parse $(\vec A \in \Z_q^{(m+1) \times n},(\vec y_1 \|\dots \|\vec y_N)  \in \Z_q^{n \times N}) \leftarrow \vk$ and output $\top$, if both $\vec A^\intercal \cdot \pi_i = \vec y_i \Mod{q}$ and $\| \pi_i \| \leq \sqrt{m+1}/\sqrt{2}\alpha$ for every $i \in [N]$, and output $\bot$, otherwise.
\end{description}
\end{construction}

For additional details on the correctness of the scheme, we refer to Section $9$ of \cite{Poremba22}.

\begin{theorem}\label{thm:FHE-CD-security} Let $\lambda \in \N$ be the security parameter and let $L$ be an upper bound on the size of the Boolean circuit which is to be evaluated. Let $n \in \N$, let $q\geq 2$ be a prime modulus and let $m \geq 2 n \log q$. Let $N = (m+1) \cdot \lceil \log q \rceil$. Let $\alpha \in (0,1)$ be a noise ratio such that$$
\sqrt{8(m+1)N}\leq \alpha q \leq \frac{q}{\sqrt{8}(m+1)\cdot (N+1)^L}.
$$
Then, $\mathsf{DualFHECD}$ in \Cref{cons:FHE-cd} has everlasting certified deletion security assuming the quantum (subexponential) hardness of $\SIS_{n,q,\sigma\sqrt{2m}}^m$ and $\mathsf{LWE}_{n,q,\alpha q}^m$.
\end{theorem}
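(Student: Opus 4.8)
The plan is to follow the template of the $\FHE$-with-certified-deletion proof in~\cite[Section 9]{Poremba22}, upgrading its key-revealing guarantee to \emph{everlasting} certified deletion by substituting the certified everlasting $D_{\Z_q^m,\frac{\sigma}{\sqrt 2}}$-target-collapsing property of the Ajtai hash (\Cref{thm:ajtai-certified-everlasting}) for the conjectured strong Gaussian-collapsing property. The argument splits into two parts. First, standard semantic security of $\mathsf{DualFHECD}$ reduces to the quantum hardness of $\LWE_{n,q,\alpha q}^m$ exactly as for the dual-GSW scheme, since $\vk=(\vec A,\vec Y)$ together with a computational-basis sample of $\ket{\ct}$ is a dual-GSW ciphertext. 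Second, and this is the heart of the matter, I must show that the two branches ($b=0$ and $b=1$) of the $\mathsf{DualFHECD}$-instantiation of the experiment in \Cref{def:security-deletion} are within negligible trace distance conditioned on $\Vrfy$ accepting.

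For the deletion branch I first move to the Fourier-dual picture. Applying \Cref{lem:duality} to each of the $N$ columns of $\ket{\ct}$ shows that $\ket{\ct}$ is within negligible trace distance of the state obtained by applying $\FT_q$ to each column of a product $\bigotimes_{i \in [N]} \ket{\psi_{\vec y_i}}$, where each $\ket{\psi_{\vec y_i}} \propto \sum_{\vec A^\intercal \vec x = \vec y_i} \rho_\sigma(\vec x)\, \omega_q^{x \cdot \langle \vec x, \vec g_i\rangle}\ket{\vec x}$ is a Gaussian superposition over short preimages of $\vec y_i$ under the Ajtai hash $h_{\vec A^\intercal}$ carrying the plaintext $x$ in its phase ($\vec g_i$ denotes the $i$-th column of $\vec G$). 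In this dual basis $\Del$ is exactly a column-wise computational-basis measurement producing $(\pi_1,\dots,\pi_N)$, and $\Vrfy$ accepts iff every $\pi_i$ is a short preimage of $\vec y_i$. The crucial observation, as in the public-key case, is that once a column's preimage register is \emph{fully} measured its phase $\omega_q^{x\cdot\langle \cdot, \vec g_i\rangle}$ becomes a global phase and thus carries no information about $x$; consequently the hybrid in which \emph{all} $N$ columns have been measured is identical in the $b=0$ and $b=1$ experiments (the measurement outcomes are distributed independently of $x$, and the accumulated global phases $\omega_q^{x\cdot\sum_i\langle \pi_i,\vec g_i\rangle}$ cancel). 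I call this the middle hybrid.

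It then remains to argue that each of the two experiments is statistically indistinguishable from the middle hybrid conditioned on $\Vrfy$ accepting. I would do this via a hybrid over the $N$ columns, collapsing one column at a time and invoking the certified everlasting $D_{\Z_q^m,\frac{\sigma}{\sqrt 2}}$-target-collapsing of $h_{\vec A^\intercal}$ (\Cref{thm:ajtai-certified-everlasting}) at each step. To fit a single step into \Cref{def:ev-target-collapsing}, I use the auxiliary-information generalization of \Cref{sec:mainaux}: the challenger's target-collapsing instance plays the role of column $i$, while the remaining $N-1$ Gaussian superpositions are folded into the auxiliary distribution $\cZ$, and the reduction itself runs the full $\Vrfy$ check, forwarding to the unbounded second stage only when \emph{all} columns verify. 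The $x$-dependent phases are handled by pushing the diagonal phase unitaries $\ket{\vec x}\mapsto \omega_q^{x\cdot \langle \vec x, \vec g_i\rangle}\ket{\vec x}$ into the (arbitrary) target-collapsing adversary, so that collapsing column $i$ renders its phase global exactly as above; the everlasting guarantee then ensures the unbounded second stage cannot detect the collapse.

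The main obstacle I anticipate is precisely this reduction from the single-instance certified everlasting target-collapsing theorem to the $N$-fold parallel (matrix) structure equipped with a \emph{joint} verification predicate: the everlasting guarantee of \Cref{def:ev-target-collapsing} is triggered by validity of the \emph{single} certificate column handed to the challenger, whereas here indistinguishability must hold conditioned on \emph{all} columns verifying. Since ``all columns verify'' implies ``column $i$ verifies,'' each hybrid step is sound provided the reduction performs the remaining $N-1$ checks itself; making this bookkeeping precise — and confirming that the per-column global phases indeed cancel so that the middle hybrid is genuinely $x$-independent — is where the care lies. With these pieces in place, telescoping the two column-hybrids to the common middle hybrid yields that the $b=0$ and $b=1$ outputs of the experiment are within $\negl(\secp)$ trace distance, which together with semantic security establishes everlasting certified deletion security and completes the proof.
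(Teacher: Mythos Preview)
Your proposal is correct and follows essentially the same approach as the paper: the paper's proof simply states that it is identical to the template in \cite[Theorem 10]{Poremba22}, with the strong Gaussian-collapsing conjecture replaced by the certified everlasting $D_{\Z_q^m,\frac{\sigma}{\sqrt{2}}}$-target-collapsing property of the Ajtai hash (\Cref{thm:ajtai-certified-everlasting}). You have spelled out the column-wise hybrid and the joint-verification bookkeeping that the paper leaves implicit by deferring to \cite{Poremba22}, but the underlying strategy is the same.
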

\begin{proof}
The proof is identical to the template in~\cite[Theorem 10]{Poremba22}, except that the adversary is allowed to be computationally unbounded once the deletion certificate is submitted. This is in contrast with the original proof who considered forwarding the \emph{secret key} during the security experiment. We remark that we do not invoke the strong Gaussian-collapsing property to prove the indistinguishability of the hybrids; instead we use the (stronger) notion of certified everlasting $D_{\Z_q^m,\frac{\sigma}{\sqrt{2}}}$-target-collapsing property of the Ajtai hash function shown in \Cref{thm:ajtai-certified-everlasting}. This results in the stronger notion of everlasting certified
deletion security.
\end{proof}

\section{Publicly-Verifiable Deletion from Balanced Binary-Measurement TCR}\label{sec:regularOWF}

In this section, we show how to build a variety of cryptographic primitives with $\PVD$ from a specific type of hash function that we call \emph{balanced binary-measurement target-collision-resistant}.

\begin{definition}[Balanced Binary-Measurement TCR Hash]\label{def:BBMhash}
A hash function family $\cH = \{H_\secp : \{0,1\}^{m(\secp)} \to \{0,1\}^{n(\secp)}\}_{\secp \in \bbN}$ is \emph{balanced binary-measurement target-collision-resistant} if:
\begin{enumerate}
    \item There exists a family of efficiently computable \emph{single-output-bit} measurement functions $\cM = \{\{M[h] : \{0,1\}^{m(\secp)} \to \{0,1\}\}_{h \in H_\secp}\}_{\secp \in \bbN}$ such that $\cH$ is $\cM$-target-collision-resistant (\cref{def:target-CR}).
    \item There exists a constant $\delta > 0$ such that\footnote{It is also straightforward to generalize our results to any $\delta(\secp) = 1/\poly(\secp)$.} \[\Pr_{h \gets H_\secp, x \gets \{0,1\}^{m(\secp)}}\left[\bigg|\frac{A_{h,x,0} - A_{h,x,1}}{A_{h,x,0} + A_{h,x,1}}\bigg| \leq 1-\delta\right] = 1-\negl(\secp),\]
    
    where $A_{h,x,b} \coloneqq |\{x' \in h^{-1}(h(x)) : M[h](x') = b\}|.$
\end{enumerate}
\end{definition}

\begin{remark}
By \cref{thm:TC-from-TCR} and \cref{thm:CETC-generalization}, any balanced binary-measurement TCR $\cH$ with associated measurement function $\cM$ is also $\cM$-target-collapsing and certified everlasting $\cM$-target-collapsing.
\end{remark}

\subsection{Commitments}
\label{sec:com}

A \emph{canonical quantum bit commitment} \cite{Yan} consists of a family of pairs of unitaries $\{(Q_{\secp,0},Q_{\secp,1})\}_{\secp \in \bbN}$. To commit to a bit $b$, the committer applies $Q_{\secp,b}$ to the all-zeros state $\ket{0}$ to obtain a state on registers $C$ and $R$, and sends register $C$ to the receiver. To open, the committer sends the bit $b$ and the remaining state on register $R$. The receiver applies $Q_{\secp,b}^\dagger$ to registers $(C,R)$, measures the result in the standard basis, and accepts if all zeros are observed.

\begin{definition}[Computational Hiding]
A canonical quantum bit commitment $\{(Q_{\secp,0},Q_{\secp,1})\}_{\secp \in \bbN}$ satisfies \emph{computational hiding} if for any QPT adversary $\{\cA_\secp\}_{\secp \in \bbN}$,

\[\left|\Pr\left[\cA_\secp(\Tr_R\left(Q_{\secp,0}\ket{0}\right)) = 1\right] - \Pr\left[\cA_\secp(\Tr_R\left(Q_{\secp,1}\ket{0}\right)) = 1\right] \right| = \negl(\secp).\]
\end{definition}

\begin{definition}[Honest Binding]
A canonical quantum bit commitment $\{(Q_{\secp,0},Q_{\secp,1})\}_{\secp \in \bbN}$ satisfies \emph{honest binding} if for any auxiliary family of states $\{\ket{\psi_\secp}\}_{\secp \in \bbN}$ on register $Z$ and any family of physically realizable unitaries $\{U_\secp\}_{\secp \in \bbN}$ on registers $R,Z$, 

\[\left\| \left(Q_{\secp,1}\dyad{0}{0}Q_{\secp,1}^\dagger\right)U\left(Q_{\secp,0}\ket{0}\ket{\psi}\right)\right\| = \negl(\secp).\]
\end{definition}

\begin{definition}[Publicly-Verifiable Deletion]
A canonical quantum bit commitment $\{(Q_{\secp,0},Q_{\secp,1})\}_{\secp \in \bbN}$ has \emph{publicly-verifiable deletion} if there exists a measurement $\{V_{\secp}\}_{\secp \in \bbN}$ on register $R$, a measurement $\{D_{\secp}\}_{\secp \in \bbN}$ on register $C$, and a classical predicate $\Ver(\cdot,\cdot) \to \{\top,\bot\}$ that satisfy the following properties.

\begin{itemize}
    \item \textbf{Correctness of deletion.} For any $b \in \{0,1\}$, it holds that
    \[\Pr\left[\Ver(\vk,\pi) = \top : (\vk,\pi) \gets (V_\secp \otimes D_\secp)Q_{\secp,b}\ket{0}\right] = 1-\negl(\secp).\]
    \item \textbf{Certified everlasting hiding.} For any QPT adversary $\cA = \{\cA_\secp\}_{\secp \in \bbN}$, it holds that 
    \[\TD\left(\mathsf{EvExp}_{\cA}(\secp,0), \mathsf{EvExp}_{\cA}(\secp,1)\right) = \negl(\secp),\] where $\mathsf{EvExp}_{\cA}(\secp,b)$ is the following experiment.
    \begin{itemize}
        \item Prepare $Q_{\secp,b}\ket{0}$, measure register $R$ with $V_\secp$ to obtain $\vk$, and send $(\vk,C)$ to $\cA_\secp$.
        \item Parse $\cA_{\secp}$'s output as a deletion certificate $\pi$ and a left-over state $\rho$. If $\Ver(\vk,\pi) = \bot$, output $\bot$, and otherwise output $\rho$.
    \end{itemize}
\end{itemize}
\end{definition}


\paragraph{Construction.} We construct a quantum canonical bit commitment with $\PVD$ as follows. Let $\cH = \{H_\secp : \{0,1\}^{m(\secp)} \to \{0,1\}^{n(\secp)}\}_{\secp \in \bbN}$ be a balanced binary-measurement TCR hash with associated measurement function $\cM = \{\{M[h]\}_{h \in H_\secp}\}_{\secp \in \bbN}$, and let $m = m(\secp)$, $n = n(\secp)$. For any $h \in H_\secp,y \in \{0,1\}^n$, and $b \in \{0,1\}$, we will define the state 

\[\ket{\psi_{h,y,b}} \coloneqq \frac{1}{\sqrt{|h^{-1}(y)|}}\sum_{x : h(x)= y}(-1)^{M[h](x)}\ket{x}.\]

\begin{itemize}
    \item Consider the following procedure $S_{\secp,b}$. Sample $h \gets H_\secp$ and for $i \in [\secp]$, prepare the state \[\frac{1}{\sqrt{2^m}}\sum_{x \in \{0,1\}^m}(-1)^{b \cdot M[h](x)}\ket{x}\ket{h(x)},\] and measure the second register to obtain $y_i$ and left-over state $\ket{\psi_{h,y_i,b}}$. Then, output \[(h,y_1,\dots,y_\secp),
    \bigotimes_{i \in [\secp]}\ket{\psi_{h,y_i,b}}.\]

    Now, $Q_{\secp,b}$ will be the purification of $S_{\secp,b}$, where the output register is $C$ and the auxiliary register is $R$. That is, $Q_{\secp, b}$ prepares the state
    \[\frac{1}{\sqrt{|H_\secp|2^{\secp m}}}\sum_{h,x_1,\dots,x_\secp}(-1)^{b \cdot \bigoplus_{i \in [\secp]}M[h](x_i)}\ket{h,h(x_1),\dots,h(x_\secp)}_R\ket{h,h(x_1),\dots,h(x_\secp),x_1,\dots,x_\secp}_C.\]
    \item $V_\secp$ measures register $R$ in the standard basis to obtain $\vk = (h,y_1,\dots,y_\secp)$. $D_\secp$ measures register $C$ in the standard basis to obtain $(h,y_1,\dots,y_\secp,x_1,\dots,x_\secp)$, and outputs $\pi = (x_1,\dots,x_\secp)$.
    \item $\Ver((h,y_1,\dots,y_\secp),(x_1,\dots,x_\secp))$ outputs $\top$ iff $h(x_i) = y_i$ for all $i \in [\secp]$.
\end{itemize}

\begin{theorem}\label{thm:commitment}
The above construction satisfies computational hiding, honest binding, and publicly-verifiable deletion. Thus, assuming the existence of a balanced binary-measurement TCR hash, there exists a quantum canonical bit commitment with $\PVD$.
\end{theorem}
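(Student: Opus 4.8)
The plan is to establish the three properties separately, in each case reducing to the target-collapsing and TCR guarantees of $\cH$ recorded in \cref{def:BBMhash} and the remark following it (so that $\cH$ is both $\cM$-target-collapsing, via \cref{thm:TC-from-TCR}, and certified everlasting $\cM$-target-collapsing, via \cref{thm:CETC-generalization}). The unifying technical device is a \emph{phase trick}: the single-copy states $\ket{\psi_{h,y,0}}$ and $\ket{\psi_{h,y,1}}$ differ only by the efficiently computable diagonal unitary $P_{M[h]} \coloneqq \sum_x (-1)^{M[h](x)}\dyad{x}{x}$, whereas the mixture obtained by measuring $M[h]$ is invariant under $P_{M[h]}$. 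Hence any indistinguishability between a uniform superposition of preimages and its $M[h]$-measured mixture transfers, by applying the efficient $P_{M[h]}$, to an indistinguishability between $\ket{\psi_{h,y,1}}$ and the same mixture, and a triangle inequality then relates $\ket{\psi_{h,y,0}}$ to $\ket{\psi_{h,y,1}}$.

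For computational hiding I would first trace out $R$: since $R$ holds copies of $h$ and the images $y_1,\dots,y_\secp$, tracing it out decoheres the classical part of $C$, leaving a mixture over $(h,\vec y)$ of the classical description together with $\bigotimes_i \ket{\psi_{h,y_i,b}}$. It then suffices to show that $\bigotimes_i \ket{\psi_{h,y_i,0}}$ and $\bigotimes_i \ket{\psi_{h,y_i,1}}$ are computationally indistinguishable given $(h,\vec y)$. I would do this with a hybrid over the $\secp$ copies whose endpoints are these two product states and whose midpoint is the all-mixtures state $\bigotimes_i \rho^{(i)}_{\mathrm{mix}}$: each single-copy step is an instance of $\cM$-target-collapsing of $\cH$ (the reduction, knowing $h$, prepares all other copies itself), and the $b=1$ leg uses the phase trick above.

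For honest binding I would invoke Uhlmann's theorem: $Q_{\secp,0}\ket{0}$ and $Q_{\secp,1}\ket{0}$ purify the reduced commitment states $\rho_0,\rho_1$ on $C$ with purifying register $R$, so the honest-binding quantity, maximized over the committer's unitary and auxiliary state, is bounded by the root fidelity $\|\sqrt{\rho_0}\sqrt{\rho_1}\|_1$. Because $\rho_0$ and $\rho_1$ are classical-quantum with the \emph{same} classical distribution over $(h,\vec y)$, this root fidelity factorizes as $\E_{h,\vec y}\big[\prod_i |\langle \psi_{h,y_i,0}\,|\,\psi_{h,y_i,1}\rangle|\big]$, and a direct calculation gives $\langle \psi_{h,y_i,0}\,|\,\psi_{h,y_i,1}\rangle = (A_{h,y_i,0}-A_{h,y_i,1})/(A_{h,y_i,0}+A_{h,y_i,1})$. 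The balancedness condition of \cref{def:BBMhash} bounds each factor by $1-\delta$ except with negligible probability; averaging over $h$ (Markov) and then a Chernoff bound over the independent images $y_1,\dots,y_\secp$ shows that at least half the copies are balanced except with negligible probability, so the product is at most $(1-\delta)^{\secp/2}=\negl(\secp)$ and the root fidelity is negligible.

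For publicly-verifiable deletion, correctness is immediate: measuring $R$ and then $C$ in the standard basis yields $\vk=(h,\vec y)$ and $\pi=(x_1,\dots,x_\secp)$ with $h(x_i)=y_i$ by construction, so $\Ver$ always accepts. The substance is certified everlasting hiding, which I would reduce to certified everlasting $\cM$-target-collapsing of $\cH$ through a hybrid over the $\secp$ copies, interpolating from the $b=0$ product state to the all-mixtures state and then to the $b=1$ product state (again using the phase trick on the $b=1$ leg). In each single-copy step the reduction embeds its certified-everlasting target-collapsing challenge (\cref{def:ev-target-collapsing}) in one coordinate, prepares the remaining coordinates itself as superpositions or mixtures as dictated by the hybrid, runs the deletion adversary to obtain $\pi=(x_1,\dots,x_\secp)$, and forwards $x_k$ as its own certificate. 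I expect the hard part to be aligning the two conditioning events: the commitment verifier accepts only when \emph{all} coordinates are valid preimages, whereas the single-copy experiment conditions only on the $k$-th coordinate. The fix is to have the reduction efficiently check validity of the other coordinates (it knows every $y_i$ and all of $\pi$) and collapse its output to $\bot$ unless they all pass; composed with the single-copy validity check, this reproduces exactly the commitment's all-coordinates-valid condition, after which the certified everlasting guarantee forces the residual state to be statistically independent of the hybrid index. Summing the negligible single-copy advantages over the $\secp$ steps completes the argument.
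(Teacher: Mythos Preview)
Your proposal is correct and follows essentially the same route as the paper. For hiding and certified everlasting hiding, the paper runs exactly your hybrid (superpositions $\to$ mixtures $\to$ superpositions) and invokes the same phase trick to handle the $b=1$ leg; your treatment of the conditioning issue in the PVD reduction (checking validity of the other $\secp-1$ coordinates inside the reduction so that the single-copy validity check coincides with the all-coordinates verifier) fills in a detail the paper only sketches with the one-line ``the same proof establishes publicly-verifiable deletion.''

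The one genuine presentational difference is honest binding. The paper exhibits an explicit measurement on $C$---project onto $\bigotimes_i\dyad{\psi_{h,y_i,0}}{\psi_{h,y_i,0}}$---and observes that it accepts $Q_{\secp,0}\ket{0}$ with probability $1$ and $Q_{\secp,1}\ket{0}$ with probability $\E_{h,\vec y}\prod_i|\langle\psi_{h,y_i,0}|\psi_{h,y_i,1}\rangle|^2\leq (1-\delta)^{2\secp}+\negl(\secp)$ (a union bound over the $\secp$ coordinates suffices here, rather than your Markov-then-Chernoff argument). You instead bound the honest-binding quantity directly by the Uhlmann fidelity $F(\rho_0,\rho_1)=\E_{h,\vec y}\prod_i|\langle\psi_{h,y_i,0}|\psi_{h,y_i,1}\rangle|$. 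These are equivalent: the paper's measurement is an optimal distinguisher certifying small fidelity, and your Uhlmann bound is the abstract statement that no unitary on $R,Z$ can do better than the fidelity. Either way the inner product is exactly the balancedness ratio, so property~(2) of \cref{def:BBMhash} gives negligibility.
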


\begin{proof}

First we argue computational hiding. On a commitment to $b$, the receiver sees the mixed state 

\[\E_{h,y_1,\dots,y_\secp}\left[\bigotimes_{i \in [\secp]}\ket{\psi_{h,y_i,b}}\right],\] where the expectation is over sampling $h \gets \cH$ and measuring random $y_1,\dots,y_\secp$. Note the following two facts.

\begin{enumerate}
	\item Given any state $\ket{\psi_{h,y,b}}$, let $M[h]\left(\ket{\psi_{h,y,b}}\right)$ be the mixed state that results from measuring the bit $M[h](\cdot)$ on $\ket{\psi_{h,y,b}}$. By the $\cM$-target-collapsing of $\cH$, we have that for any $b \in \{0,1\}$, \[\E_{h,y}\left[\ket{\psi_{h,y,b}}\right] \approx_c \E_{h,y}\left[M[h]\left(\ket{\psi_{h,y,b}}\right)\right],\] where $\approx_c$ denotes computational indistinguishability. The case of $b=0$ follows directly by definition of $\cM$-target-collapsing and the case of $b=1$ follows because a reduction can efficently map $\ket{\psi_{h,y,0}}$ to $\ket{\psi_{h,y,1}}$ using the fact that $M[h]$ is efficiently computable.
	\item For any $h,y$, $M[h]\left(\ket{\psi_{h,y,0}}\right)$ and $M[h]\left(\ket{\psi_{h,y,1}}\right)$ are equivalent states, which follows by definition.
\end{enumerate}

Thus, we can run the following hybrid argument.

\begin{itemize}
	\item $\Hyb_0$: The receiver is given a commitment to 0.
	\item $\Hyb_1\dots\Hyb_\secp$: In $\Hyb_i$, we switch $\ket{\psi_{h,y,0}}$ to $M[h]\left(\ket{\psi_{h,y,0}}\right)$. This is computationally indistinguishable from $\Hyb_{i-1}$ by the first fact above.
	\item $\Hyb_{\secp+1}$: Switch $M[h]\left(\ket{\psi_{h,y_i,0}}\right)$ to $M[h]\left(\ket{\psi_{h,y_i,1}}\right)$ for all $i \in [\secp]$. This is perfectly indistinguishable from $\Hyb_\secp$ by the second fact above.
	\item $\Hyb_{\secp+2}\dots\Hyb_{2\secp + 1}$: In $\Hyb_{i + \secp + 1}$, we switch $M[h]\left(\ket{\psi_{h,y_i,1}}\right)$ to $\ket{\psi_{h,y_i,1}}$. This is computationally indistinguishable from $\Hyb_{i+\secp}$ by the first fact above.
\end{itemize}

This completes the proof of computational hiding. Next, since $\cH$ satisfies \emph{certified everlasting} $\cM$-target-collapsing, we see that each hybrid is \emph{statistically close} when the receiver outputs a valid deletion certificate. Thus, the same proof establishes publicly-verifiable deletion.



Finally, we show honest binding. For this, it suffices to demonstrate a measurement on register $C$ that accepts with probability 1 on the output of $Q_{\secp,0}$ and with probability $\negl(\secp)$ on the output of $Q_{\secp,1}$. This suffices because any $U$ that breaks honest binding must then necessarily affect the result of this measurement by a $\nonnegl(\secp)$ amount, which is impossible since $U$ does not operate on $C$.

The measurement takes the classical part of the output $(h,y_1,\dots,y_\secp)$ and attempts to project the quantum part onto \[\dyad{\psi_{h,y_1,0}}{\psi_{h,y_1,0}} \otimes \dots \otimes \dyad{\psi_{h,y_\secp,0}}{\psi_{h,y_\secp,0}}.\] Clearly this accepts the output of $S_{\secp,0}$ with probability 1, so it suffices to show that the output of $S_{\secp,1}$ is accepted with probability $\negl(\secp)$. To see this, we bound

\begin{align*}
    &\E_{h,y_1,\dots,y_\secp}\left[\prod_{i \in [\secp]}|\braket{\psi_{h,y_i,1}|\psi_{h,y_i,0}}|^2\right] \\
    &\E_{h,y_1,\dots,y_\secp}\left[\prod_{i \in [\secp]}\left(\frac{1}{|h^{-1}(y_i)|}\left(\sum_{x:h(x)=y_i}(-1)^{M[h]}\bra{x}\right)\left(\sum_{x:h(x)=y_i}\ket{x}\right)\right)^2\right] \\
    &= \E_{h,y_1,\dots,y_\secp}\left[\prod_{i \in [\secp]}\left(\frac{1}{|h^{-1}(y_i)|}\left(|\{x : h(x)=y_i,M[h](x)=0\}|-|\{x : h(x)=y_i,M[h](x)=1\}|\right)\right)^2\right] \\
    &\leq \left(1-\delta\right)^{2\secp} + \negl(\secp) \\
    &=\negl(\secp),
\end{align*}

where the inequality follows from property (2) of \cref{def:BBMhash}.
  
\end{proof}

\subsection{Public-Key Encryption}
\label{sec:pke}

\begin{definition}[Trapdoor Phase-Recoverability]
We say that a balanced binary-measurement TCR hash has \emph{trapdoor phase-recoverability} if there exist algorithms $\Samp,\Recover$ with the following properties.
\begin{itemize}
    \item $\Samp(1^\secp)$: The sampling algorithm samples a uniformly random function $h \in H_\secp$ along with a trapdoor $\td$.
    \item $\Recover(\td,y,X)$: There exist constants $c,\epsilon$ such that with probability $1-\negl(\secp)$ over $(h,\td) \gets \Samp(1^\secp)$, 
    
    \begin{align*}
        &\Pr_{x \gets \{0,1\}^m}\left[\Recover(\td,h(x),\ket{\psi_{h,h(x),0}}) \to 0\right] \geq c + \epsilon,\\
        &\Pr_{x \gets \{0,1\}^m}\left[\Recover(\td,h(x),\ket{\psi_{h,h(x),1}}) \to 0\right] \leq c - \epsilon,
    \end{align*}
    
    
    
    where \[\ket{\psi_{h,y,b}} \coloneqq \frac{1}{\sqrt{|h^{-1}(y)|}}\sum_{x:h(x) = y}(-1)^{M[h](x)}\ket{x}.\]
\end{itemize}
\end{definition}

\begin{theorem}\label{thm:PKE}
Assuming the existence of a binary-measurement TCR hash $\cH$ with trapdoor phase-recoverability, there exists public-key encryption with $\PVD$.
\end{theorem}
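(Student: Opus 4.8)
The plan is to instantiate the template from \cref{sec:com} with $\ell = \poly(\secp)$ independent copies, using the trapdoor to decrypt and a computational-basis measurement to delete. Concretely, $\KeyGen(1^\secp)$ samples $(h,\td) \gets \Samp(1^\secp)$ and sets $\pk = h$, $\sk = \td$. To encrypt $b$, for each $i \in [\ell]$ the encryptor prepares $\frac{1}{\sqrt{2^m}}\sum_{x}(-1)^{b \cdot M[h](x)}\ket{x}\ket{h(x)}$, measures the second register to get $y_i$ and residual state $\ket{\psi_{h,y_i,b}}$, and outputs $\vk = (h,y_1,\dots,y_\ell)$ together with $\ket{\ct} = \big((y_1,\dots,y_\ell),\bigotimes_{i \in [\ell]}\ket{\psi_{h,y_i,b}}\big)$ (the $y_i$ are carried classically in $\ket{\ct}$ so that $\Dec$ has access to them). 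Decryption runs $z_i \gets \Recover(\td,y_i,\cdot)$ on the $i$th copy and outputs $0$ iff $|\{i : z_i = 0\}| \geq c\ell$; deletion measures each preimage register to obtain $x_i$ and outputs $\pi = (x_1,\dots,x_\ell)$, and $\Vrfy(\vk,\pi)$ accepts iff $h(x_i) = y_i$ for all $i$.

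First I would verify the two correctness properties. Correctness of deletion is immediate: measuring a preimage register of $\ket{\psi_{h,y_i,b}}$ yields some $x_i$ with $h(x_i)=y_i$ by construction, so $\Vrfy$ accepts with probability $1$. For decryption correctness, trapdoor phase-recoverability guarantees that for a $1-\negl(\secp)$ fraction of $(h,\td)$, each $z_i$ equals $0$ with probability $\geq c+\epsilon$ when $b=0$ and $\leq c-\epsilon$ when $b=1$, over the choice of $y_i$ and the internal randomness of $\Recover$. Since the $\ell$ copies are independent, a Hoeffding bound shows the empirical fraction of zeros lies within $\epsilon$ of its mean except with probability $e^{-\Omega(\ell \epsilon^2)}$, so thresholding at $c$ recovers $b$ with overwhelming probability once $\ell = \Theta(\secp/\epsilon^2)$.

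Next I would argue semantic security and certified-everlasting deletion together, reusing the two-sided hybrid from the proof of \cref{thm:commitment}. One walks from the $b=0$ ciphertext to the $b=1$ ciphertext by switching one copy at a time: $\ket{\psi_{h,y_i,0}} \rightsquigarrow M[h]\!\left(\ket{\psi_{h,y_i,0}}\right) = M[h]\!\left(\ket{\psi_{h,y_i,1}}\right) \rightsquigarrow \ket{\psi_{h,y_i,1}}$. Each single-copy switch is computationally indistinguishable by $\cM$-target-collapsing---which applies since $\Samp$ samples $h$ uniformly and the adversary never receives $\td$---giving semantic security; and each switch is statistically indistinguishable \emph{conditioned on a valid certificate} by certified-everlasting $\cM$-target-collapsing (which $\cH$ enjoys by the remark following \cref{def:BBMhash}), giving $\TD(\mathsf{EvPKE}_{\cA}(0),\mathsf{EvPKE}_{\cA}(1)) = \negl(\secp)$.

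The hard part will be the reduction in the certified-everlasting step, because a valid PKE certificate requires \emph{all} $\ell$ sub-certificates to verify whereas the single-copy game in \cref{def:ev-target-collapsing} verifies only one image. The resolution is to have the QPT stage $\cA_0$ of the reduction embed its challenge register as copy $i$, generate the remaining $\ell-1$ copies itself in the appropriate hybrid states (when the target is the $b=1$ variant it applies the efficiently computable phase $(-1)^{M[h](\cdot)}$ to its challenge, which maps a superposition to $\ket{\psi_{h,y_i,1}}$ and leaves a mixture unchanged up to a global phase), run the PKE adversary to obtain all $\ell$ certificates, and forward copy $i$'s certificate to the challenger while handing the remaining certificates and images to the unbounded stage $\cA_1$. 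Then $\cA_1$ itself checks validity of those $\ell-1$ certificates before invoking the assumed distinguisher, so that the challenger's single-image check combined with $\cA_1$'s checks exactly reproduces the ``all certificates valid'' event. Any distinguisher separating two adjacent hybrids therefore breaks single-copy certified-everlasting $\cM$-target-collapsing, completing the proof.
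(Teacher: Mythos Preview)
Your proposal is correct and matches the paper's own proof: the same construction (with $\ell=\secp$ copies), Hoeffding for decryption correctness, and the two-sided hybrid of \cref{thm:commitment} invoking $\cM$-target-collapsing for semantic security and certified everlasting $\cM$-target-collapsing for the PVD guarantee. You in fact spell out more than the paper does---the paper simply points back to the commitment hybrids---and your treatment of the ``hard part'' (embedding the challenge in slot $i$, having $\cA_1$ verify the remaining $\ell-1$ sub-certificates so that the conjunction of the challenger's single-image check and $\cA_1$'s checks reproduces the all-valid event) together with the phase-kick trick for the $b=1$ branch is exactly the right way to make that reference precise.
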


\begin{proof}
This follows from essentially the same construction as commitments. Let $\cM$ be the measurement function associated with $\cH$ and let $(\Samp,\Invert)$ be the associated trapdoor algorithms. Then, the PKE with $\PVD$ is defined as follows.
\begin{itemize}
    \item $\Gen(1^\secp)$: Sample $(h,\td) \gets \Samp(1^\secp)$ and set $\pk \coloneqq h, \sk \coloneqq \td$.
    \item $\Enc(\pk,b)$: For $i \in [\secp]$, prepare the state \[\frac{1}{\sqrt{2^m}}\sum_{x \in \{0,1\}^m}(-1)^{b \cdot M[h](x)}\ket{x}\ket{h(x)},\] and measure the second register to obtain $y_i$ and left-over state $\ket{\psi_{h,y_i,b}}$. Then, set  \[\ket{\ct} \coloneqq \left(y_1,\dots,y_\secp,
    \bigotimes_{i \in [\secp]}\ket{\psi_{h,y_i,b}}\right), ~~ \vk \coloneqq (h,y_1,\dots,y_\secp).\]
    \item $\Dec(\sk,\ket{\ct})$: Parse $\ket{\ct}$ as $(y_1,\dots,y_\secp,X_1,\dots,X_\secp)$, for $i \in [\secp]$ run \[b_i \gets \Recover(\td,y_i,X_i),\]
    and output 0 if $|\{i : b_i = 0\}|/\secp > c$, and output 1 otherwise.

    
    \item $\Del(\ket{\ct})$: Parse $\ket{\ct}$ as $(y_1,\dots,y_\secp,X_1,\dots,X_\secp)$ and measure $X_i$ in the standard basis to obtain $\pi \coloneqq (x_1,\dots,x_\secp)$.
    \item $\Vrfy(\vk,\pi)$: Output $\top$ iff $h(x_i) = y_i$ for all $i \in [\secp]$.
\end{itemize}

Correctness follows from a standard Hoeffding inequality and correctness of deletion (\cref{def:correctness-deletion}) is immediate. Certified deletion security (\cref{def:security-deletion}) follows from the $\cM$-target-collapsing and certified everlasting $\cM$-target-collapsing of $\cH$, using the same hybrid argument as in the proof of \cref{thm:commitment}.
\end{proof}

\subsection{A Generic Compiler}
\label{sec:generic}
Let $(\Gen,\Enc,\Dec,\Del,\Vrfy)$ be the encryption scheme defined last section, let $\cA = \{\cA_\secp\}_{\secp \in \bbN}$ be an adversary, let $p(\secp)$ be a polynomial, and let $\cZ = \{Z_\secp(\aux)\}_{\aux \in \{0,1\}^{p(\secp)}, \secp \in \bbN}$ be a (static or interactive) family of distributions that is semantically-secure against $\cA$ with respect to $\aux$. That is, in the static case, it holds that for any $\aux \in \{0,1\}^{p(\secp)}$,

\[\bigg| \Pr\left[\cA_\secp(Z_\secp(\aux)) = 1\right] - \Pr\left[\cA_\secp(Z_\secp(0^{p(\secp)})) = 1\right]\bigg| \leq \negl(\secp),\] and in the interactive case,

\[\bigg| \Pr\left[\cA_\secp^{Z_\secp(\aux)} = 1\right] - \Pr\left[\cA_\secp^{Z_\secp(0^{p(\secp)})} = 1\right]\bigg| \leq \negl(\secp),\] where $\cA_\secp^{Z_\secp(\aux)}$ indicates that $\cA_\secp$ can interact with $Z_\secp(\aux)$, which is the description of an interactive machine initialized with $\aux$.


\begin{lemma}\label{lemma:compiler}
Given any $\cA,\cZ$ as described above, define the experiment $\mathsf{EvEnc}_{\cA,\cZ,\secp}(b)$ as follows.

\begin{itemize}
    \item Sample $(h,\td) \gets \Gen(1^\secp)$ and $(\ket{\ct},\vk) \gets \Enc(h,b)$.
    \item Run $\cA_\secp(h,\vk,\ket{\ct},Z_\secp(\td))$, and parse their output as a deletion certificate $\pi$ and a left-over quantum state $\rho$.
    \item If $\Vrfy(\vk,\pi) = \top$, output $\rho$, and otherwise output $\bot$.
\end{itemize}
Then it holds that
\[\TD\left(\mathsf{EvEnc}_{\cA,\cZ,\secp}(0),\mathsf{EvEnc}_{\cA,\cZ,\secp}(1)\right) = \negl(\secp).\]
\end{lemma}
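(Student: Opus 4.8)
The plan is to deduce the lemma from the auxiliary-information generalization of \cref{thm:CETC-generalization}, instantiated with the distribution family $\cZ$ playing the role of the (static or interactive) auxiliary information derived from $\aux = \td$. First I would observe that the conclusion $\TD(\mathsf{EvEnc}_{\cA,\cZ,\secp}(0),\mathsf{EvEnc}_{\cA,\cZ,\secp}(1)) = \negl(\secp)$ is, by the variational characterization of trace distance, equivalent to the statement that no (unbounded) binary measurement applied to the leftover state $\rho$ can distinguish $b=0$ from $b=1$ with non-negligible advantage. Packaging any such measurement together with $\cA$ as a two-part adversary $(\cA_0,\cA_1)$, where $\cA_0 \coloneqq \cA$ is QPT and produces the certificate $\pi$ and $\cA_1$ is the unbounded distinguisher acting on $\rho$, the desired statement is precisely that $\cH$ (with sampler $\Samp$) is certified everlasting $(\cM,\cZ)$-target-collapsing — modulo the $\secp$-fold parallel repetition used by $\Enc$, which I would handle exactly as in the proofs of \cref{thm:commitment} and \cref{thm:PKE} by a hybrid over the $\secp$ copies, each step invoking the single-instance property (the remaining copies are prepared by the reduction from the public $h$, and $Z_\secp(\td)$ is supplied as the auxiliary input of the single-instance game).

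By the auxiliary-information generalization of \cref{thm:CETC-generalization}, it then suffices to verify that $\cH$ is both $(\cM,\cZ)$-target-collapsing and $(\cM,\cZ)$-target-collision-resistant. Both follow from the base properties of the balanced binary-measurement TCR hash (\cref{def:BBMhash}) together with the semantic security of $\cZ$: since $\cM$ is a single-output-bit (hence polynomial-outcome) measurement, \cref{thm:TC-from-TCR} gives that the base $\cM$-target-collision-resistance of $\cH$ already implies base $\cM$-target-collapsing, and in each of the two QPT games the auxiliary input $Z_\secp(\td)$ can be replaced by $Z_\secp(0^{p(\secp)})$ at a cost of $\negl(\secp)$ in advantage, after which the auxiliary input is independent of $(h,\td)$ and can be simulated by the adversary itself, reducing to the base ($\td$-free) properties. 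Each replacement is justified by semantic security of $\cZ$, applied to the efficient distinguisher obtained by wrapping $\cA_0$ together with the (efficient) challenger operations of the respective game.

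The main obstacle is reconciling the purely computational semantic security of $\cZ$ with the information-theoretic (everlasting) conclusion we want: one cannot simply swap $Z_\secp(\td)$ for $Z_\secp(0^{p(\secp)})$ at the level of the output state $\rho$, since $\rho$ may carry $Z_\secp(\td)$ forward and the trace distance quantifies over unbounded distinguishers that could tell the two apart. The resolution — and the reason the argument must route through certified everlasting target-collapsing rather than a direct hybrid on the output state — is that the trapdoor is stripped only inside the two QPT subgames, where the distinguisher is efficient and semantic security legitimately applies. The unbounded $\cA_1$ enters only after these subgames have forced the challenger's control register (register $C$ in the proof of \cref{thm:CETC-generalization}) to be information-theoretically independent of $b$ conditioned on a valid certificate; at that point there is nothing left about $b$ for $\cA_1$ to learn, so its ability to break the semantic security of $\cZ$ is harmless. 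I expect the only remaining bookkeeping to be checking that the $\secp$-fold hybrid composes cleanly with the auxiliary input, so that each single-instance reduction still receives a correctly distributed $Z_\secp(\td)$.
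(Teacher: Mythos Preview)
Your proposal is correct and follows essentially the same approach as the paper: establish $(\cM,\cZ)$-target-collision-resistance by using the semantic security of $\cZ$ to strip $Z_\secp(\td)$ down to $Z_\secp(0^{p(\secp)})$ and reduce to the base TCR property, invoke \cref{thm:TC-from-TCR} (for the single-bit $\cM$) and the auxiliary-information version of \cref{thm:CETC-generalization} to obtain certified everlasting $(\cM,\cZ)$-target-collapsing, and finish with the $\secp$-fold hybrid from \cref{thm:commitment}. The only cosmetic difference is that the paper applies \cref{thm:TC-from-TCR} directly in the $(\cM,\cZ)$-setting (so it only needs to strip $\cZ$ once, in the TCR game), whereas you strip $\cZ$ separately in both the TCR and the target-collapsing games before invoking the base properties; both routes are valid and amount to the same argument.
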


\begin{proof}

First, we confirm that $\cH$ is $(\cM,\cZ)$-target-collision-resistant.  To see this, we first use the semantic security of $\cZ$ to switch to a hybrid where $\cA_\secp$ receives $Z_\secp(0)$ rather than $Z_\secp(\td)$, and then appeal directly to the fact that $\cH$ is $\cM$-target-collision-resistant (\cref{thm:target-CR}). Then by \cref{thm:TC-from-TCR} and \cref{thm:CETC-generalization}, we have that $\cH$ is certified everlasting $(\cM,\cZ)$-target-collapsing. Using the same hybrid argument as in the proof of \cref{thm:commitment} then completes the proof.

\end{proof}

By instantiating $\cZ$ with various crytographic primitives, we immediately gives the following applications. We do not write formal definitions of each of these primitives, and instead refer the reader to \cite{cryptoeprint:2022/1178} for these.

\begin{corollary}\label{cor:compiler}
Assuming the existence of a balanced binary-measurement TCR hash with trapdoor phase-recoverability, and post-quantum \[X \in \left\{\begin{array}{r}\text{quantum fully-homormophic encryption, attribute-based encryption}, \\ \text{witness encryption, timed-release encryption}\end{array}\right\},\] there exists $X$ with $\PVD$.
\end{corollary}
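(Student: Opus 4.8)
The plan is to instantiate the generic compiler of \cref{lemma:compiler}, taking $\cZ$ to be an encryption of the trapdoor $\td$ under the advanced scheme $X$. Concretely, for each of the four primitives $X$, I would define the $X$-with-$\PVD$ scheme by hybrid encryption: to encrypt a plaintext bit (under whatever label, attribute, statement, or release-time $X$ requires), sample $(h,\td) \gets \Samp(1^\secp)$, produce the quantum ciphertext and verification key exactly as in the public-key construction of \cref{thm:PKE}, which encodes the plaintext bit in the phase of the balanced binary-measurement TCR states, and then additionally append an $X$-ciphertext of $\td$. The verification key $\vk = (h,y_1,\dots,y_\secp)$ together with the algorithms $\Del$ and $\Vrfy$ are inherited verbatim from \cref{thm:PKE}, so public verifiability and correctness of deletion are immediate.

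For decryption correctness, a party authorized to decrypt under $X$ (one holding the $X$ secret key, a matching attribute key, a valid witness, or one that has waited out the time delay, as appropriate) first recovers $\td$ from the $X$-ciphertext, and then runs $\Recover(\td,y_i,X_i)$ together with the majority rule from \cref{thm:PKE} to recover the plaintext bit. Correctness then follows from trapdoor phase-recoverability and a Hoeffding bound exactly as in the proof of \cref{thm:PKE}.

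The heart of the argument is certified deletion security, which I would reduce directly to \cref{lemma:compiler}. For the reduction I set $\aux = \td$ and let $Z_\secp(\aux)$ be the (interactive) machine that plays the role of the $X$-challenger toward the deletion adversary $\cA$: it produces the $X$-encryption of $\td$ and, where the primitive demands it, answers the adversary's oracle queries (for instance, $\mathsf{ABE}$ key-generation queries on non-matching attributes). The only hypothesis \cref{lemma:compiler} places on $\cZ$ is that it be semantically secure with respect to $\aux$ against $\cA$, i.e.\ that $Z_\secp(\td)$ is indistinguishable from $Z_\secp(0^{p(\secp)})$. This is precisely the ordinary post-quantum semantic security of $X$ in its own security game, since in the certified-deletion experiment the adversary is by definition not authorized to decrypt the challenge $X$-ciphertext. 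Granting this, \cref{lemma:compiler} yields $\TD(\mathsf{EvEnc}_{\cA,\cZ,\secp}(0),\mathsf{EvEnc}_{\cA,\cZ,\secp}(1)) = \negl(\secp)$, which is exactly the certified-everlasting deletion security of the $X$-with-$\PVD$ scheme.

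The main obstacle is matching the abstract ``semantic security of $\cZ$ with respect to $\aux$'' hypothesis of \cref{lemma:compiler} to the concrete security definition of each primitive $X$, which differ in their challenge phases and oracle interfaces. For witness encryption and timed-release encryption, the relevant indistinguishability holds only under the scheme's precondition (no witness for a false statement; the release time not yet reached), and for attribute-based encryption it holds only for queries on attributes that do not satisfy the challenge policy. In each case one must check that a certified-deletion adversary for the $X$-with-$\PVD$ scheme necessarily respects these preconditions, so that the induced $\cZ$ genuinely falls under the (interactive) semantic-security clause of \cref{lemma:compiler}. Once this bookkeeping is carried out per primitive, the entire quantum certified-deletion component of the analysis is discharged by \cref{lemma:compiler} and requires no further work.
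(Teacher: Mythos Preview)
Your proposal is correct and matches the paper's approach: hybrid-encrypt the trapdoor $\td$ under $X$, inherit $\Del$ and $\Vrfy$ from the PKE-with-$\PVD$ construction, and invoke \cref{lemma:compiler} with $\cZ$ instantiated as the (possibly interactive) $X$-challenger to obtain certified deletion security. The paper's treatment is in fact terser than yours on the security side, and your discussion of the per-primitive bookkeeping (matching the abstract semantic-security hypothesis to each $X$'s game) is exactly right.

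One point you gloss over that the paper explicitly flags: for the FHE case, decryption correctness alone is not enough---you must also explain how homomorphic \emph{evaluation} works on the compiled ciphertext. The paper's answer is that the evaluator homomorphically runs $\Recover(\td,y_i,\cdot)$ on the quantum registers $X_i$ under the QFHE, using the QFHE-encrypted $\td$, thereby producing a QFHE encryption of the plaintext bit that can then be operated on further. This is precisely why the assumption is \emph{quantum} FHE rather than classical FHE: the inner ciphertext is a quantum state, so the homomorphic decryption step is a quantum computation. Your write-up treats all four $X$ uniformly under ``decryption correctness,'' but for FHE you should add this evaluation step.
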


The implications to witness encryption and timed-release encryption follow immediately by encrypting $\td$ with the appropriate encryption scheme (and in the case of timed-release encryption, considering the class of parallel-time-bounded adversaries). We briefly remark on the other two implications.

\begin{itemize}
    \item \textbf{Fully-homomorphic encryption.} If we encrypt $\td$ using a \emph{quantum} fully-homomorphic encryption (QFHE) scheme, then we obtain (Q)FHE with publicly-verifiable deletion. The reason we need QFHE for the compiler is for evaluation correctness: we need to decrypt $\ket{\ct}$ homomorphically under the QFHE (using $\td$) in order to obtain a (Q)FHE encryption of the plaintext, which can then be operated on.
    \item \textbf{Attribute-based encryption.} If we encrypt $\td$ using an attribute-based encryption (ABE) scheme, we immediately obtain a correct ABE scheme with certified deletion. In order to argue that this scheme has certified deletion security, we appeal to \cref{lemma:compiler} with an interactive $Z_\secp$ that runs the ABE security game, encrypting its input $\td$ into the challenge ciphertext.
\end{itemize}

\subsection{Balanced Binary-Measurement TCR from Almost-Regular OWFs}
\label{sec:almostreg}

\begin{definition}[Almost-Regular Function]\label{def:almost-regular}
A function $\cF = \{f_\secp : \{0,1\}^{m(\secp)} \to \{0,1\}^{n(\secp)}\}$ is almost-regular if there exists efficiently computable polynomials $r(\secp)$ and $p(\secp)$ such that for all $\secp \in \bbN$ and $x \in \{0,1\}^{m(\secp)}$, \[\frac{1}{p(\secp)} \cdot 2^{r(\secp)} \leq \big| \{x' \in \{0,1\}^{n(\secp)} : f_\secp(x') = f_\secp(x)\}\big| \leq p(\secp) \cdot 2^{r(\secp)}.\]
\end{definition}

Note that we assume $r(\secp)$ is efficiently computable, which means that the regularity of $\cF$ is \emph{known}. This is often contrasted with the more general class of functions that are \emph{unknown} regular. Throughout this work, we always means \emph{known} regular.

\begin{definition}[Balanced Function]\label{def:balanced}
A function $\cF = \{f_\secp : \{0,1\}^{m(\secp)} \to \{0,1\}^{n(\secp)}\}_{\secp \in \bbN}$ is $\delta$-\emph{balanced} for some constant $\delta \in [0,1)$ if there exists a family of sets $\{\mathsf{BAD}_\secp \subset \{0,1\}^{n(\secp)}\}_{\secp \in \bbN}$ such that
\begin{enumerate}
    \item $|\mathsf{BAD}_\secp|/2^{n(\secp)} = \negl(\secp)$.
    \item $\Pr_{x \gets \{0,1\}^{m(\secp)}}[f_\secp(x) \in \mathsf{BAD}_\secp] = \negl(\secp)$.
    \item For every $z \notin \mathsf{BAD}_\secp$, $\Pr_{x \gets \{0,1\}^{m(\secp)}}[f_\secp(x) = z] \cdot 2^{n(\secp)} \in [1-\delta,1+\delta]$.
\end{enumerate}
\end{definition}

\begin{definition}[One-Way Function]\label{def:OWF}
A function $\cF = \{f_\secp : \{0,1\}^{m(\secp)} \to \{0,1\}^{\ell(\secp)}\}$ is one-way if for any QPT adversary $\cA = \{\cA_\secp\}_{\secp \in \bbN}$,

\[\Pr\left[f_\secp(x')=f(x) : \begin{array}{r}x \gets \{0,1\}^{m(\secp)} \\ x' \gets \cA_\secp(f(x))\end{array}\right] = \negl(\secp).\]

We say that $\cF$ is one-way \emph{over its range} if for any QPT adversary $\cA = \{\cA_\secp\}_{\secp \in \bbN}$,
\[\Pr\left[f_\secp(x) = y : \begin{array}{r}y \gets \{0,1\}^{\ell(\secp)} \\ x \gets \cA_\secp(y)\end{array}\right] = \negl(\secp).\]
\end{definition}

\begin{definition}[Universal Hash]
A hash function family $\cH = \{H_\secp : \{0,1\}^{m(\secp)} \to \{0,1\}^{n(\secp)}\}_{\secp \in \bbN}$ is called $t(\secp)$-universal if for each distinct $x_1,\dots,x_{t(\secp)} \in \{0,1\}^{m(\secp)}$ and $y_1,\dots,y_{t(\secp)} \in \{0,1\}^{n(\secp)}$, it holds that 
\[\Pr_{h \gets H_\secp}\left[h(x_1) = y_1 \wedge \dots \wedge h(x_{t(\secp)}) = y_{t(\secp)}\right] = 2^{-n(\secp) \cdot t(\secp)}.\]
\end{definition}

\begin{importedtheorem}[\cite{balancedOWF}]\label{impthm:balancedOWF}
Let $\cF = \{f_\secp : \{0,1\}^{m(\secp)} \to \{0,1\}^{\ell(\secp)}\}$ be an almost-regular one-way function. Then there exists $n(\secp) < m(\secp)$ and $\delta \in [0,1)$ such that for any $3\secp$-universal hash family $\cH = \{H_\secp : \{0,1\}^{\ell(\secp)} \to \{0,1\}^{n(\secp)}\}_{\secp \in \bbN}$ where each $h \in H_\secp$ can be described by $s(\secp)$ bits, the function \[\cF' = \left\{f'_\secp : \{0,1\}^{s(\secp) + m(\secp)} \to \{0,1\}^{s(\secp) + n(\secp)}\right\}_{\secp \in \bbN}, ~~ \text{where} ~~ f'_\secp(h,x) \coloneqq (h,h(f_\secp(x))),\] is $\delta$-balanced and one-way over its range.
\end{importedtheorem}

Now consider any balanced function $\cF = \{f_\secp : \{0,1\}^{m(\secp)} \to \{0,1\}^{n(\secp)}\}_{\secp \in \bbN}$ that is one-way over its range, and define the family of hash functions \[\cH^\cF = \left\{H_\secp : \{0,1\}^{m(\secp)} \to \{0,1\}^{n(\secp)}\right\}_{\secp \in \bbN}\] as follows. For each $\Delta \in \{0,1\}^{n(\secp)}$, define $f_\Delta : \{0,1\}^{n(\secp)} \to \{0,1\}^{n(\secp)}$ to, on input $z$, output the lexicographically first element of $\{z,z \oplus \Delta\}$.\footnote{We don't need to worry about the case when $\Delta = 0^{n(\secp)}$ since we'll be sampling $\Delta$ uniformly, but one could define $f_\Delta$ to be the identity in that case.} Then we define

\[H_\secp \coloneqq \left\{h_{\secp,\Delta} \coloneqq f_\Delta \circ f_\secp\right\}_{\Delta \in \{0,1\}^{n(\secp)}}.\]

We will also define the family of measurement functions  \[\cM = \left\{\left\{M[h_{\secp,\Delta}]\right\}_{h_{\secp,\Delta} \in H_\secp}\right\}_{\secp \in \bbN}\] as follows. The predicate $M[h_{\secp,\Delta}] : \{0,1\}^m \to \{0,1\}$ takes $x$ as input, computes $z \coloneqq f_\secp(x)$, and outputs $0$ if $z < z \oplus \Delta$ and $1$ if $z > z \oplus \Delta$ (where ordering is lexicographical).

\begin{theorem}\label{thm:target-CR}
Let $\delta \in [0,1)$ be a constant and $\cF = \{f_\secp : \{0,1\}^{m(\secp)} \to \{0,1\}^{n(\secp)}\}_{\secp \in \bbN}$ be a $\delta$-balanced function that is one-way over its range. Let $\cH^\cF$ and $\cM$ be as defined above. Then, $\cH^\cF$ is a balanced binary-measurement TCR hash with associated measurement function $\cM$.
\end{theorem}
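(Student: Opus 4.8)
The plan is to pin down the preimage structure of $h_{\secp,\Delta} = f_\Delta \circ f_\secp$ and then verify the two defining conditions of \cref{def:BBMhash} separately. First I would observe that for any nonzero shift $\Delta$ and any image $y$ of $h_{\secp,\Delta}$ (so that $y$ is the lexicographically smaller element of $\{y, y\oplus\Delta\}$), the fiber decomposes as $h_{\secp,\Delta}^{-1}(y) = f_\secp^{-1}(y) \cup f_\secp^{-1}(y\oplus\Delta)$, and by construction $M[h_{\secp,\Delta}]$ assigns outcome $0$ to exactly the elements of $f_\secp^{-1}(y)$ and outcome $1$ to those of $f_\secp^{-1}(y\oplus\Delta)$. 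Hence, writing $w_1 \coloneqq f_\secp(x)$ and $w_2 \coloneqq w_1 \oplus \Delta$, the unordered pair $\{A_{h,x,0}, A_{h,x,1}\}$ equals $\{|f_\secp^{-1}(w_1)|, |f_\secp^{-1}(w_2)|\}$. The case $\Delta = 0^{n(\secp)}$ occurs with probability $2^{-n(\secp)} = \negl(\secp)$ and can be discarded throughout.

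For the balance condition (item 2 of \cref{def:BBMhash}), note that sampling $x \gets \{0,1\}^{m(\secp)}$ makes $w_1$ distributed as $f_\secp$ on a uniform input, while the uniform shift $\Delta$ makes $w_2 = w_1 \oplus \Delta$ uniform over $\{0,1\}^{n(\secp)}$. By property (2) of $\delta$-balance, $\Pr[w_1 \in \mathsf{BAD}_\secp] = \negl(\secp)$, and by property (1), $\Pr[w_2 \in \mathsf{BAD}_\secp] = |\mathsf{BAD}_\secp|/2^{n(\secp)} = \negl(\secp)$; a union bound gives $w_1, w_2 \notin \mathsf{BAD}_\secp$ with overwhelming probability. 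Conditioned on that event, property (3) yields $|f_\secp^{-1}(w_i)| \in [(1-\delta)2^{m(\secp)-n(\secp)}, (1+\delta)2^{m(\secp)-n(\secp)}]$ for $i \in \{1,2\}$, so $|A_{h,x,0}-A_{h,x,1}|/(A_{h,x,0}+A_{h,x,1}) \le \delta$. Taking the balance parameter of \cref{def:BBMhash} to be the constant $1-\delta > 0$ (using $\delta < 1$) establishes item 2.

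For target-collision-resistance (item 1), the plan is a reduction to one-wayness over the range of $\cF$ (\cref{def:OWF}). I would first re-express the $\cM$-target-collision game in the equivalent form: prepare $\sum_{x}\ket{x}$, apply $f_\secp$ and measure its image to obtain $w_1$ together with the fiber state $\propto \sum_{x' : f_\secp(x') = w_1}\ket{x'}$; sample $\Delta$ uniformly; hand the adversary this state, $h_{\secp,\Delta}$, and $y = \min(w_1, w_1\oplus\Delta)$; the adversary wins precisely when it returns $x''$ with $f_\secp(x'') = w_1 \oplus \Delta$ (the partner fiber), since that is exactly the condition $h_{\secp,\Delta}(x'') = y$ with $M[h_{\secp,\Delta}](x'') \neq v$. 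This matches the original game because, for $\Delta \neq 0$, the measured outcome $v$ merely records which of the two disjoint fibers the state occupies, and $w_1$ is distributed as $f_\secp$ on a uniform input independently of the uniform $\Delta$. Given a one-wayness-over-range challenge $y^* \gets \{0,1\}^{n(\secp)}$, the reduction runs the fiber preparation to obtain $w_1$ and the state, then sets $\Delta \coloneqq w_1 \oplus y^*$, so that $w_1 \oplus \Delta = y^*$. Because $y^*$ is uniform and independent of $w_1$, the induced $\Delta$ is uniform and independent of $w_1$, making the simulation perfect (up to the negligible $\Delta = 0$ event); a winning $x''$ then satisfies $f_\secp(x'') = y^*$, inverting the challenge and contradicting one-wayness over the range.

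The main obstacle I expect is the target-collision-resistance reduction, specifically justifying that the re-expressed two-fiber game reproduces the exact distribution of the original collision experiment and that the embedding $\Delta \coloneqq w_1 \oplus y^*$ is distribution-preserving. The conceptual crux is that the uniformly random shift $\Delta$ randomizes the partner fiber $w_1 \oplus \Delta$ \emph{independently} of $w_1$, which is exactly what lets a cross-fiber collision be reinterpreted as the inversion of a fresh uniform range element; this is also where one-wayness over the range (rather than ordinary one-wayness) is essential, since the adversary is asked to produce a preimage of a uniformly sampled partner value rather than of $f_\secp$ on a uniform domain point. By contrast, the balance computation is routine once the preimage decomposition above is in hand.
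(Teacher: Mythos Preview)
Your proposal is correct and follows essentially the same approach as the paper: the fiber decomposition $h_{\secp,\Delta}^{-1}(y)=f_\secp^{-1}(w_1)\cup f_\secp^{-1}(w_2)$ together with the embedding $\Delta\coloneqq w_1\oplus y^*$ is exactly the paper's reduction to one-wayness over the range, and your balance argument coincides with the paper's (you are in fact slightly more careful in explicitly invoking property~(1) of \cref{def:balanced} to handle $w_2$, which the paper's writeup elides).
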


\cref{impthm:balancedOWF} and \cref{thm:commitment} immediately give the following corollary.

\begin{corollary}
Assuming almost-regular one-way functions, there exists a quantum canonical bit commitment with $\PVD$.
\end{corollary}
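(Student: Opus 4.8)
The plan is to prove the corollary by chaining together the three results assembled in this section, with no new technical ingredient required. Starting from an almost-regular one-way function $\cF = \{f_\secp : \{0,1\}^{m(\secp)} \to \{0,1\}^{\ell(\secp)}\}$, I would first instantiate a $3\secp$-universal hash family $\cH = \{H_\secp : \{0,1\}^{\ell(\secp)} \to \{0,1\}^{n(\secp)}\}_{\secp \in \bbN}$, which exists unconditionally via standard constructions (for instance, random affine maps over $\bbF_2$) and whose members are efficiently describable and sampleable. Feeding $\cF$ and $\cH$ into \cref{impthm:balancedOWF} then yields a function $\cF'$, given by $f'_\secp(h,x) = (h, h(f_\secp(x)))$, that is simultaneously $\delta$-balanced for some constant $\delta \in [0,1)$ and one-way over its range.

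Next, I would apply \cref{thm:target-CR} to $\cF'$. Since $\cF'$ is a $\delta$-balanced function that is one-way over its range, the theorem guarantees that the derived hash family $\cH^{\cF'}$ — obtained by composing $f'_\secp$ with the shift-based two-to-one map $f_\Delta$ — together with its associated binary-outcome measurement function $\cM$ is a balanced binary-measurement TCR hash in the sense of \cref{def:BBMhash}.

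Finally, I would invoke \cref{thm:commitment}, which states that the existence of any balanced binary-measurement TCR hash implies a quantum canonical bit commitment satisfying computational hiding, honest binding, and publicly-verifiable deletion. Composing these three steps establishes the corollary.

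The main point to check — rather than a genuine obstacle — is that the hypotheses match cleanly across the chain. Concretely, one must confirm that the universal hash family required by \cref{impthm:balancedOWF} can be taken efficiently describable so that $\cF'$ remains an efficiently computable family, and that the $\delta$-balance guarantee output by \cref{impthm:balancedOWF} is exactly the balance condition (2) of \cref{def:BBMhash} consumed by \cref{thm:target-CR}. Both are immediate from the stated forms of the theorems, so the argument reduces to bookkeeping of the parameters $\ell(\secp), n(\secp), \delta$ rather than any new analysis.
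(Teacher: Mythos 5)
Your proposal is correct and matches the paper's own (one-line) proof exactly: the paper derives this corollary by chaining \cref{impthm:balancedOWF}, \cref{thm:target-CR}, and \cref{thm:commitment}, precisely as you do. Your additional bookkeeping remarks (efficient describability of the universal hash family and the match between the $\delta$-balance guarantee and condition (2) of \cref{def:BBMhash}) are sound and only make explicit what the paper leaves implicit.
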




\begin{proof}(Of \cref{thm:target-CR})
First, we check property (2) of \cref{def:BBMhash}. By properties (2) and (3) of \cref{def:balanced}, it holds that with $1-\negl(\secp)$ probability over the sampling of $h \gets H_\secp$ and $x \gets \{0,1\}^m$, \[\bigg|\frac{|\{x' \in h^{-1}(h(x))\} : M[h] = 0| - |\{x' \in h^{-1}(h(x))\} : M[h] = 1|}{|\{x' \in h^{-1}(h(x))\} : M[h] = 0| + |\{x' \in h^{-1}(h(x))\} : M[h] = 1|}\bigg| \leq \delta.\]

Next, we check property (1). Throughout this proof, we will drop indexing by $\secp$ for convenience. Suppose there exists a QPT adversary $\cA$ that breaks the $\cM$-target-collision-resistance of $\cH$. That is, the following experiment outputs 1 with $\nonnegl(\secp)$ probability.\\

\noindent\underline{$\Exp_{\mathsf{TCR}}$}
\begin{itemize}
    \item The challenger samples $\Delta \gets \{0,1\}^n$ and prepares the state $1/\sqrt{2^m}\sum_{x \in \{0,1\}^m}\ket{x}$ on register $X$. It applies $h_\Delta$ on $X$ to a fresh register $Y$ and measures $y \in \{0,1\}^n$, and then measures $P[h_\Delta]$ on $X$ to obtain a bit $b$ and left-over state on register $X$. The challenger sends $(\Delta,y,b)$ and register $X$ to $\cA$.
    \item $\cA$ outputs a string $x' \in \{0,1\}^n$.
    \item Output 1 if $h_\Delta(x') = y$ and $M[h_\Delta](x') = 1-b$.
\end{itemize}

We now define an adversary $\cA'$ that breaks the one-wayness of $\cF$ over its range.\\

\noindent\underline{$\Exp_{\mathsf{OW}}$}

\begin{itemize}
    \item The challenger samples $z \gets \{0,1\}^n$ and sends $z$ to $\cA'$.
    \item $\cA'$ prepares the state $1/\sqrt{2^m}\sum_{x \in \{0,1\}^m}\ket{x}$ on register $X$, applies $f$ on $X$ to a fresh register $Z$, and measures $z' \in \{0,1\}^n$. If $z' = z$, then measure register $X$ to obtain $x'$, and return $x'$. Otherwise, set $\Delta \coloneqq z \oplus z'$, set $b = 0$ if $z' < z$ and $b = 1$ otherwise, and set $y = f_\Delta(z)$. Then, initialize $\cA$ with $(\Delta,y,b)$ and register $X$. Run $\cA$ and forward its output $x'$ to the challenger.
    \item Output 1 if $f(x') = z$.
\end{itemize}

It suffices to show that $\cA$'s input comes from the same distribution over $(X,\Delta,y,b)$ in both experiments. To see this, we describe an alternative but identical way to sample $(X,\Delta,y,b)$ in the experiment $\Exp_{\mathsf{TCR}}$. Recalling that $h_\Delta = f_\Delta \circ f$, the challenger could (1) apply $f$ on $X$ to a fresh register $Z$, (2) sample $\Delta \gets \{0,1\}^n$, (3) apply $f_\Delta$ on $Z$ to a fresh register $Y$, and (4) measure $Y$ to obtain $y$ and measure $M[h_\Delta]$ on $X$ to obtain $b$. Note that step (4) is equivalent to instead just measuring the $Z$ register to obtain $z$, defining $b = 0$ if $z < z \oplus \Delta$ and $b=1$ if $z > z \oplus \Delta$, and defining $y = f_\Delta(z)$. Thus, we can imagine first applying $f$ on $X$ to a fresh register $Z$, measuring $Z$ to obtain $z$, sampling $\Delta \gets \{0,1\}^n$, and defining $y = f_\Delta(z)$. Defining $z' = z \oplus \Delta$ and using the fact that $\Delta$ was sampled uniformly at random, we see that this is exactly the same distribution that is sampled in $\Exp_{\mathsf{OW}}$, except that $\cA$ is not initialized if $\Delta = 0^{n(\secp)}$ (in which case $\cA'$ wins the experiment anyway). 

\end{proof}

Now, we generalize the notion of almost-regularity (\cref{def:almost-regular}), balanced (\cref{def:balanced}), and one-wayness (\cref{def:OWF}) to function \emph{families}, where there is a set of of $f \in F_\secp$ associated with each security parameter. All previous definitions generalize to this setting with the requirement that they hold with $1-\negl(\secp)$ probability over $f \gets F_\secp$, and all previous claims follow. We consider families of functions with \emph{trapdoors} that allow us to invert the function and obtain public-key encryption along with other cryptographic primitives.

\begin{definition}[Superposition-invertible trapdoor function]
We say that a function family $\cF = \{F_\secp\}_{\secp \in \bbN}$ is a superposition-invertible trapdoor function if there exist algorithms $\Samp,\Invert$ with the following properties.
\begin{itemize}
    \item $\Samp(1^\secp)$: The sampling algorithm samples a uniformly random function $f \in F_\secp$ along with a trapdoor $\td$.
    \item $\Invert(\td,y)$: Given the trapdoor $\td$ and an image $y$, $\Invert$ outputs a state within negligible trace distance of \[\frac{1}{\sqrt{|f^{-1}(y)|}}\sum_{x:f(x)=y}\ket{x}.\]
\end{itemize}
\end{definition}

\begin{remark}
For the case of injective function families $\cF$, the notion of superposition-invertible trapdoor is equivalent to the standard notion of trapdoor, since there is only one preimage per image.
\end{remark}

\begin{claim}\label{claim:phase-recoverability}
Assuming injective trapdoor one-way functions (or more generally, superposition-invertible trapdoor almost-regular one-way functions), there exists a balanced binary-measurement TCR hash with trapdoor phase-recoverability.
\end{claim}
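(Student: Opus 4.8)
The plan is to reuse the construction already analyzed in \cref{thm:target-CR}: starting from the superposition-invertible trapdoor almost-regular one-way function $\cF$ (with sampler/inverter pair $(\Samp,\Invert)$), apply \cref{impthm:balancedOWF} with a suitable universal hash $u$ to obtain the $\delta$-balanced, one-way-over-its-range function $g(u,x) = (u,u(f(x)))$, and then take the hash $h = h_{\secp,\Delta} = f_\Delta \circ g$ with its single-bit measurement $M[h]$. That $\cH$ is a balanced binary-measurement TCR hash is then immediate from \cref{thm:target-CR}, and the only new content is \emph{trapdoor phase-recoverability}. I would define $\Samp(1^\secp)$ to sample $(f,\td_f) \gets \Samp_\cF(1^\secp)$, a universal hash $u$, and a uniform shift $\Delta$, publishing $h$ and keeping $\td = (\td_f,u,\Delta)$; the injective case is the special case in which $\Invert$ merely computes the unique preimage. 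Writing $S_b = \{x : h(x)=y,\ M[h](x)=b\}$, the received register $X$ holds $\ket{\psi_{h,y,b}} \propto \sum_{x \in S_0}\ket{x} + (-1)^b \sum_{x \in S_1}\ket{x}$, and $\Recover(\td,y,X)$ will (i) use $\td$ to prepare a reference state close to $\ket{\psi_{h,y,0}}$, and (ii) apply the projective measurement $\{\proj{\psi_{h,y,0}},\, I-\proj{\psi_{h,y,0}}\}$ to $X$, outputting $0$ on the first outcome and $1$ otherwise. The gap analysis is then routine: $\langle \psi_{h,y,0}|\psi_{h,y,b}\rangle = (|S_0| + (-1)^b|S_1|)/(|S_0|+|S_1|)$, which equals $1$ for $b=0$ and has magnitude at most $1-\delta$ for $b=1$ by the balancedness guarantee (property (2) of \cref{def:BBMhash}). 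Thus the projection succeeds with probability $1$ when $b=0$ and probability at most $(1-\delta)^2$ when $b=1$, so setting $c = \tfrac12(1+(1-\delta)^2)$ and $\epsilon = \tfrac12(1-(1-\delta)^2)$ (a positive constant since $\delta$ is a constant in $(0,1)$) gives exactly the required inequalities.

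The substance of the argument is step (i): showing that the superposition-invertibility of $f$ lifts through the two composition layers to let us prepare $\ket{\psi_{h,y,0}} \propto \sum_{x : h(x)=y}\ket{x}$ efficiently. Since $h^{-1}(y) = S_0 \cup S_1$ with $S_0 = \{x : u(f(x))=y\}$ and $S_1 = \{x : u(f(x))=y\oplus\Delta\}$, the target state is $\propto \sum_{w \in \{y,\,y\oplus\Delta\}}\ \sum_{z : u(z)=w}\ \sum_{x : f(x)=z}\ket{x}$. I would instantiate $u$ with an algebraically structured (e.g.\ linear) universal hash family, so that each preimage set $\{z : u(z)=w\}$ is a coset and a uniform superposition over it can be prepared \emph{directly} rather than by post-selecting on the target value $w$ (the latter would cost an exponential $2^{-n}$ factor). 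Concretely: prepare the two-branch state over $w \in \{y,\,y\oplus\Delta\}$, coherently prepare the coset superposition over $z$, apply $\Invert(\td_f,z)$ to map each $\ket{z}$ with $z \in \mathrm{im}(f)$ to $\tfrac{1}{\sqrt{|f^{-1}(z)|}}\sum_{x : f(x)=z}\ket{x}$ (flagging $z \notin \mathrm{im}(f)$), post-select on the success flag, and uncompute the intermediate $z$- and $w$-registers by recomputing $f$ and $u$ from $X$. For injective $f$ the amplitudes produced are exactly uniform, so the output is (negligibly close to) $\ket{\psi_{h,y,0}}$; projection onto this reference is then implemented exactly by running the preparation unitary in reverse and testing for $\ket{0}$.

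The main obstacle is precisely making the reference preparation both efficient and accurate in the general (non-injective) case. Two points need care. First, the universal hash in \cref{impthm:balancedOWF} must simultaneously be $3\secp$-universal (as needed for balancedness and one-wayness over the range) and admit efficient preimage-coset preparation; I would verify that a structured family — linear or polynomial-based — meets both requirements. Second, the post-selection on $z \in \mathrm{im}(f)$ within a coset must succeed with non-negligible probability, and the $\Invert$ step weights each preimage $x$ by $1/\sqrt{|f^{-1}(f(x))|}$ rather than uniformly; here I would invoke \cref{def:almost-regular} (preimage sizes polynomially related) together with the $\delta$-balancedness of $g$ (\cref{def:balanced}, which forces the image to be dense in the relevant coset) to argue that the prepared state has overlap bounded below by a constant with $\ket{\psi_{h,y,0}}$ and bounded above by a strictly smaller constant with $\ket{\psi_{h,y,1}}$, so that a constant distinguishing gap — and hence constants $c,\epsilon$ — survives, possibly after standard amplitude amplification to boost the image-membership post-selection. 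This reconciliation of the universal-hash balancing layer with the inverter for $f$ is the delicate step; everything else follows from \cref{thm:target-CR} and the overlap computation above.
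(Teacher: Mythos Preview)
Your approach matches the paper's: the same construction via \cref{impthm:balancedOWF} and \cref{thm:target-CR}, the same $\Recover$ that prepares $\ket{\psi_{h,y,0}}$ and projects onto it, and the same overlap calculation giving success probabilities $1$ versus at most $(1-\delta)^2$. For the universal-hash inversion you flag as the first obstacle, the paper makes a concrete choice: the Chor--Goldreich polynomial-evaluation family, inverted coherently via an efficient root-finding algorithm over the finite field (preparing a uniform superposition over the remaining output bits and running root-finding in superposition for the compressing case). This gives a $t(\secp)$-universal family that is superposition-invertible without any trapdoor, resolving your concern about simultaneously meeting the $3\secp$-universality requirement and efficient coset preparation. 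On your second obstacle --- the amplitude skew when composing the universal-hash inverter with $\Invert(\td_f,\cdot)$ in the genuinely almost-regular (non-injective, non-exactly-regular) case --- the paper does not engage: it simply asserts that $\ket{\psi_{h,y,0}}$ can be prepared exactly. Your caution here is well placed, and your proposed workaround (settle for constant overlap with $\ket{\psi_{h,y,0}}$ rather than exact preparation, then recompute the gap) is a reasonable way to close that residual issue.
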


By \cref{thm:PKE} and \cref{cor:compiler}, we obtain the following corollary.

\begin{corollary}
Assuming the existence of injective trapdoor one-way functions (or more generally, superposition-invertible trapdoor almost-regular one-way functions), there exists PKE with $\PVD$. Additionally assuming post-quantum \[X \in \left\{\begin{array}{r}\text{quantum fully-homormophic encryption, attribute-based encryption}, \\ \text{witness encryption, timed-release encryption}\end{array}\right\},\] there exists $X$ with $\PVD$.
\end{corollary}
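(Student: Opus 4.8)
The final statement reduces immediately to \cref{claim:phase-recoverability}: once we exhibit a balanced binary-measurement TCR hash with trapdoor phase-recoverability, \cref{thm:PKE} yields PKE with $\PVD$, and \cref{cor:compiler} upgrades any post-quantum attribute-based, quantum fully-homomorphic, witness, or timed-release encryption scheme to the same primitive with $\PVD$. So the plan is to prove \cref{claim:phase-recoverability}; the public-key and advanced-encryption consequences are then black-box invocations of these two cited results. Since an injective trapdoor one-way function is in particular $1$-regular and hence almost-regular (\cref{def:almost-regular}), and its classical inversion trapdoor is a degenerate superposition-inversion trapdoor (each $\ket{f^{-1}(y)}$ is a single basis state), it suffices to treat the general case of a superposition-invertible trapdoor almost-regular one-way function $\cF$.

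\emph{The construction.} First I would run the balancing transformation of \cref{impthm:balancedOWF} on $\cF$, composing it with a linear $3\secp$-universal hash to obtain a function that is $\delta$-balanced (\cref{def:balanced}) and one-way over its range. Feeding this function into the shift-based hash family $\cH^\cF$ and measurement family $\cM$ of \cref{thm:target-CR} already gives a balanced binary-measurement TCR hash; by \cref{thm:TC-from-TCR} and \cref{thm:CETC-generalization} this hash is automatically $\cM$-target-collapsing and certified everlasting $\cM$-target-collapsing, so the entire security side is inherited for free. The only genuinely new ingredient is the pair $(\Samp,\Recover)$ witnessing trapdoor phase-recoverability.

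\emph{The recovery procedure.} For an image $y$ of $\cH^\cF$, the two measurement classes are the preimage sets $S_0,S_1$ of the two $\Delta$-related images $z_0,z_1=z_0\oplus\Delta$, so the received register holds $\frac{1}{\sqrt{N_0+N_1}}\big(\sum_{x\in S_0}\ket{x}+(-1)^b\sum_{x\in S_1}\ket{x}\big)$ with $N_i=|S_i|$. Using the trapdoor I would coherently ``fold'' the two classes onto a single ancilla qubit: apply the function coherently to tag each branch by its image, relabel $z_0,z_1\mapsto 0,1$, and apply the inverse of the trapdoor state-preparation (controlled on the tag) to collapse each class to a fixed register value, leaving the qubit in $\frac{1}{\sqrt{N_0+N_1}}\big(\sqrt{N_0}\ket{0}+(-1)^b\sqrt{N_1}\ket{1}\big)$. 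A Hadamard-basis measurement then outputs $0$ with probability $\frac{(\sqrt{N_0}+(-1)^b\sqrt{N_1})^2}{2(N_0+N_1)}$, so the two values of $b$ are separated by $\frac{2\sqrt{N_0N_1}}{N_0+N_1}$. The balance guarantee of \cref{def:BBMhash} forces $N_0/N_1$ into a constant range for all but a negligible fraction of $y$, so this gap is a constant $2\epsilon>0$; setting $c$ to the appropriate midpoint of the two (expected) acceptance probabilities gives exactly the one-sided bounds in the definition of trapdoor phase-recoverability, and the $\secp$-fold parallel repetition in \cref{thm:PKE} amplifies it to $1-\negl(\secp)$ decryption correctness. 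In the injective, trapdoor-permutation-like case the classes are singletons, the trapdoor inverts classically, and the fold is exact.

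\emph{The main obstacle.} The delicate point is making the fold implementable: applying the inverse trapdoor state-preparation coherently and controlled on the image requires that uniform superpositions over the (now exponentially large, and a priori unstructured) preimage classes of the balanced composed function can be prepared and un-prepared using only the inner function's superposition-inversion trapdoor together with the structure of the universal hash. This is precisely where the ``superposition-invertible'' strengthening of the assumption is used, and reconciling it with the range compression needed for one-wayness over range is the crux. I expect the argument to lean on the linearity of the universal hash, so that each fiber $h^{-1}(w_i)$ is an efficiently navigable affine subspace, together with trapdoor-recognizability of the inner image, and to incur a negligible error from the trace-distance approximation in $\Invert$. The separate pseudorandom-group-action instantiation of \cite{HMY}, where the classes are algebraically structured cosets, can be viewed as the clean special case in which this fold is manifestly efficient.
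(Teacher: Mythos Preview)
Your reduction of the corollary to \cref{claim:phase-recoverability} together with \cref{thm:PKE} and \cref{cor:compiler} is exactly what the paper does, and your observation that injective trapdoor functions are a degenerate special case is fine. The substantive content is the proof of \cref{claim:phase-recoverability}, and there your proposal and the paper diverge.

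Your $\Recover$ (tag by intermediate image, un-prepare each class controlled on the tag, then Hadamard the tag) and the paper's $\Recover$ (prepare a fresh $\ket{\psi_{h,y,0}}$ and project the received register onto it) are morally equivalent: both reduce to being able to cleanly prepare and un-prepare uniform superpositions over the preimage classes of the \emph{composed} balanced function, not just of the inner one-way function. You correctly flag this as the crux. Where your proposal falls short is in the resolution. You propose to ``lean on the linearity of the universal hash, so that each fiber $h^{-1}(w_i)$ is an efficiently navigable affine subspace,'' but a single linear map is at most $2$-universal, so it cannot serve as the $3\secp$-universal hash that \cref{impthm:balancedOWF} requires; your affine-fiber picture is therefore unavailable here, and the obstacle you identify remains open in your sketch.

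The paper fills this gap with a different, concrete observation: it instantiates the $3\secp$-universal hash with the Chor--Goldreich polynomial-evaluation family, and notes that this family is \emph{itself} superposition-invertible (with no trapdoor) because fibers are the root sets of a low-degree univariate polynomial, which can be enumerated by an efficient root-finding algorithm and then placed in superposition. Composing this with the superposition-inversion trapdoor for the inner one-way function lets one prepare $\ket{\psi_{h,y,0}}$ outright; $\Recover$ is then just the two-outcome projective measurement $\{\ketbra{\psi_{h,y,0}}{\psi_{h,y,0}},\, I-\ketbra{\psi_{h,y,0}}{\psi_{h,y,0}}\}$, giving $\Pr[\to 0\mid b=0]=1$ and $\Pr[\to 0\mid b=1]\le(1-\delta)^2$. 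So your overall plan is right, but the missing idea is replacing ``linear'' by ``Chor--Goldreich polynomial,'' and replacing the fold by a direct projection once the full preimage superposition is preparable.
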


\begin{proof}(Of \cref{claim:phase-recoverability}) Given a superposition-invertible almost-regular one-way function, then we know from \cref{impthm:balancedOWF} that we can compose it with a $3\secp$-universal hash function to obtain a $\delta$-balanced function $\cF$ that is one-way over its range, and \cref{thm:target-CR} tells us that we can then obtain a balanced binary-measurement TCR hash $\cH^\cF = \{H_\secp\}_{\secp \in \bbN}$. It remains to check that the resulting hash has trapdoor phase-recoverability.

To see this, we observe that for any polynomials $m(\secp),n(\secp),t(\secp)$, there exists a superposition-invertible $t(\secp)$-universal hash function family $\{U_\secp : \{0,1\}^{m(\secp)} \to \{0,1\}^{n(\secp)}\}_{\secp \in \bbN}$ (without the need for a trapdoor). For example, we can use the Chor-Goldreich construction \cite{Chor-Goldreich}, where each hash in the family is defined by coefficients of a degree-$(t(\secp)-1)$ univariate polynomial over a finite field, and evaluation is polynomial evaluation. To invert, use a root-finding algorithm (e.g. \cite{Cantor1981ANA}) to recover the (at most polynomial) roots, and then arrange these in superposition. Note that for a compressing universal hash from $\{0,1\}^m \to \{0,1\}^n$, one would use a finite field of size at least $2^m$ and define the hash output to consist of (say) the first $n$ bits of the description of the finite field element that results from polynomial evaluation. In this case, the quantum inverter would first prepare a uniform superposition over all of the remaining $m-n$ bits of the field element, and run the above procedure in superposition.

Thus, given $h \in H_\secp$, where $h = f_\Delta \circ f$ for $\Delta \neq 0^n$, along with a trapdoor $\td$ for $f$, we can efficiently prepare the state \[\ket{\psi_{h,y,0}} = \frac{1}{\sqrt{|h^{-1}(y)|}}\sum_{x:h(x)=y}\ket{x}.\]

Then, the procedure $\Recover(\td,y,X)$ would measure register $X$ in the $\{\dyad{\psi_{h,y,0}}{\psi_{h,y,0}}, \bbI - \dyad{\psi_{h,y,0}}{\psi_{h,y,0}}\}$ basis, and output 0 if the first outcome is observed. We have that with probability $1-\negl(\secp)$ over the sampling of $h$,

\begin{align*}
    &\Pr_{x \gets \{0,1\}^m}\left[\Recover(\td,h(x),\ket{h,h(x),0}) \to 0\right] = 1,\\
    &\Pr_{x \gets \{0,1\}^m}\left[\Recover(\td,h(x),\ket{h,h(x),1}) \to 0\right] \leq (1-\delta)^2,
\end{align*}

by the proof of binding in \cref{thm:commitment}. This completes the proof.

\end{proof}

\subsection{Balanced Binary-Measurement TCR from Pseudorandom Group Actions}
\label{sec:hmy}
Finally, we show that the recent public-key encryption scheme of \cite{HMY} based on pseudorandom group actions has publicly-verifiable deletion, which follows fairly immediately from our framework. First, we need some preliminaries from \cite{JQSY,HMY}.


\begin{definition}[Group Action] 
Let $G$ be a (not necessarily abelian) group, $S$ be a set, and $\star: G \times S \to S$ be a function where we write $g \star s$ to mean $\star(g,s)$. We say that $(G,S,\star)$ is a group action if it satisfies the following:
\begin{itemize}
    \item For the identity element $e \in G$ and any $s \in S$, we have $e \star s = s$.
    \item For any $g,h \in G$ and any $s \in S$, we have $(gh) \star s = g \star (h \star s)$.
\end{itemize}
\end{definition}

\cite{JQSY,HMY} also require a number of efficiency properties from the group action, and we refer the reader to their papers for these specifications.

\begin{definition}[Pseudorandom Group Action]
A group action $(G,S,\star)$ is \emph{pseudorandom} if it satisfies the following:
\begin{itemize}
    \item We have that \[\Pr_{s,t \gets S}[\exists g \in G \text{ s.t. } g \star s = t] = \negl(\secp).\]
    \item For any QPT adversary $\{\cA_\secp\}_{\secp \in \bbN}$,
    \[\big| \Pr_{s \gets S, g \gets S}[\cA_\secp(s,g \star s) = 1] - \Pr_{s,t \gets S}[\cA_\secp(s,t) = 1]\big| = \negl(\secp).\]
\end{itemize}
\end{definition}

Given a pseudorandom group action $(G,S,\star)$, \cite{HMY} consider the following hash family $\cH^{(G,S,\star)} = \{H_h\}_{h \in S_G}$, where $S_G = \{(s_0,s_1) \in S^2: \exists g \in G \text{ s.t. } s_1 = g \star s_0\}$.
\begin{itemize}
    \item The algorithm $\Samp(1^\secp)$ samples $s_0 \gets S,g \gets G$ and outputs $h = (s_0,s_1)$ as the description of the hash and $\td = g$ as the trapdoor.
    \item For an input $(b,k)$ where $b \in \{0,1\}$ and $k \in G$, define $h(b,k) \coloneqq k \star s_b$.
\end{itemize}

\begin{claim}
$\cH^{(G,S,\star)}$ is a balanced binary-measurement TCR hash with trapdoor phase-recoverability.
\end{claim}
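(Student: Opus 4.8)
The plan is to verify the three requirements in turn: (1) trapdoor phase-recoverability via the group element $g$, (2) the $\cM$-target-collision-resistance of $\cH^{(G,S,\star)}$ with respect to a natural binary-outcome measurement $M$, and (3) the balance condition of \cref{def:BBMhash}. The natural measurement function here is $M[h](b,k) \coloneqq b$, i.e. the measurement that reads off which of the two set elements $s_0,s_1$ was acted upon. Under this choice, for a random image $y = k \star s_b$, the two ``halves'' of the preimage set correspond to the $b=0$ and $b=1$ branches, and the phase state is $\ket{\psi_{h,y,c}} \propto \sum_{k : k \star s_0 = y}\ket{0,k} + (-1)^c\sum_{k : k \star s_1 = y}\ket{1,k}$.

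First I would establish \textbf{balance}. Since $s_1 = g \star s_0$ for the sampled trapdoor $g$, the map $k \mapsto kg$ (or $kg^{-1}$) is a bijection between $\{k : k \star s_1 = y\}$ and $\{k : k \star s_0 = y\}$ (using the group action axioms, $k \star s_1 = (kg)\star s_0$). Hence the two preimage branches have \emph{exactly equal} size, so $A_{h,x,0} = A_{h,x,1}$ and the balance quantity is $0 \leq 1-\delta$ for any constant $\delta < 1$; this holds for every $h$ and $x$, so property (2) of \cref{def:BBMhash} is immediate with probability $1$.

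Next I would show \textbf{trapdoor phase-recoverability}. Given $\td = g$, one can map between the two branches: applying the controlled group operation that sends $\ket{0,k} \mapsto \ket{1,kg^{-1}}$ (and its inverse) coherently swaps the $b=0$ and $b=1$ parts of the superposition. This lets one efficiently project onto $\ket{\psi_{h,y,0}}$ versus $\ket{\psi_{h,y,1}}$ — concretely, $\Recover$ disentangles the branch label using $g$, after which the phase $c$ becomes a relative $\pm$ sign on a $\ket{0}\pm\ket{1}$ register that can be read in the Hadamard basis. Because the branches are exactly balanced, this projection succeeds with probability $1$ on the honest states, giving the required constant gap (in fact $c = 1/2, \epsilon = 1/2$, or better). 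This mirrors the decryption procedure of \cite{HMY}, which I would cite as recovering the phase from precisely this superposition given the trapdoor.

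The main obstacle is \textbf{$\cM$-target-collision-resistance}, which is where the pseudorandomness assumption enters. The task is to show that a QPT adversary, given $h = (s_0,s_1)$, a random image $y$, the measured bit $b$, and the superposition over that branch's preimages, cannot produce a preimage in the \emph{other} branch. The key step is a reduction to the pseudorandomness (and the first, ``non-colliding'' property) of the group action: finding a preimage $(1-b,k')$ with $k' \star s_{1-b} = y = k \star s_b$ would yield a group element relating $s_0$ and $s_1$ beyond what is handed out, and more importantly would let one distinguish a genuine pair $(s_0, g\star s_0)$ from a random pair $(s_0,t)$ — since for a random pair no such cross-branch collision exists except with negligible probability, an adversary that reliably crosses branches breaks pseudorandomness. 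I would structure this as: replace $s_1 = g\star s_0$ by a uniformly random $t \in S$ (indistinguishable by pseudorandomness), observe that for random $(s_0,t)$ the two preimage sets lie in disjoint images except with negligible probability (by the non-colliding property $\Pr_{s,t}[\exists g: g\star s = t] = \negl$), so a cross-branch preimage cannot exist, and then conclude that in the real experiment the adversary's cross-branch success probability must also be negligible. Care is needed because the challenger's superposition over one branch must be preparable by the reduction without the trapdoor; I would note that the honest challenger can prepare the branch superposition directly from $s_0$ (or $s_1$) by acting on a uniform superposition over $G$, so no trapdoor is required on the reduction's side, which makes the argument go through.
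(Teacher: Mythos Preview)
Your proposal is correct and follows the same approach as the paper: both use the measurement $M[h](b,k)=b$, and both establish phase-recoverability via the trapdoor $g$ and $\cM$-target-collision-resistance via the pseudorandomness (plus the disjoint-orbits property) of the group action. The paper simply defers these two points to \cite[Theorems~4.10 and~4.19]{HMY} (the latter under the name ``conversion hardness'') rather than spelling out the balance bijection and the reduction as you do; your explicit arguments are precisely what those cited results establish.
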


\begin{proof}
Define predicate family $\cM$ as $M[h](b,k) = b$. That is, it does not depend on $h$, and simply outputs the first bit of its input. Then, this claim actually follows immediately from what is already proven in \cite{HMY}. First, \cite[Theorem 4.10]{HMY} shows that given $\td$ and $y \in S$, it is possible to perfectly distinguish \[\frac{1}{\sqrt{2}}\ket{0,h_0^{-1}(y)} + \frac{1}{\sqrt{2}}\ket{1,h_1^{-1}(y)} ~~ \text{and} ~~ \frac{1}{\sqrt{2}}\ket{0,h_0^{-1}(y)} - \frac{1}{\sqrt{2}}\ket{1,h_1^{-1}(y)},\] where $h_b \coloneqq h(b,\cdot)$, which establishes trapdoor phase-recoverability. Next, \cite[Theorem 4.19]{HMY} shows that $\cH^{(G,S,\star)}$ satisfies \emph{conversion hardness}, which is equivalent to our notion of $\cM$-target-collision-resistance. 
\end{proof}

By \cref{thm:PKE} and \cref{cor:compiler}, we obtain the following corollary.

\begin{corollary}
Assuming the existence pseudorandom group actions, there exists PKE with $\PVD$. Additionally assuming post-quantum \[X \in \left\{\begin{array}{r}\text{quantum fully-homormophic encryption, attribute-based encryption}, \\ \text{witness encryption, timed-release encryption}\end{array}\right\},\] there exists $X$ with $\PVD$.
\end{corollary}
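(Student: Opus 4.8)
The plan is to obtain the corollary as an immediate composition of the claim stated directly above it with the two generic transformations established earlier in \cref{sec:pke} and \cref{sec:generic}. Concretely, that preceding claim shows that, assuming pseudorandom group actions, the hash family $\cH^{(G,S,\star)}$ is a balanced binary-measurement TCR hash with trapdoor phase-recoverability (with the predicate $M[h](b,k) = b$). Thus the entire task reduces to feeding this single object into the black-box constructions that have already been proven to preserve correctness, security, and publicly-verifiable deletion, so that no new quantum-information or cryptographic argument is needed here.

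First I would invoke \cref{thm:PKE}, which states that any binary-measurement TCR hash with trapdoor phase-recoverability yields a public-key encryption scheme with $\PVD$. Instantiating $\cH$ in that theorem with $\cH^{(G,S,\star)}$ directly gives \PKE with $\PVD$ under the pseudorandom group action assumption, establishing the first sentence of the corollary. Here I would verify that the phase-recoverability parameters line up: the claim inherits from \cite[Theorem 4.10]{HMY} a trapdoor procedure that \emph{perfectly} distinguishes the two phase states, so one may take $c = 1/2$ with $\epsilon$ bounded away from $0$, exactly the constant-gap form that \cref{thm:PKE} consumes in its Hoeffding-based correctness amplification.

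For the remaining primitives, I would apply \cref{cor:compiler}, the generic hybrid-encryption compiler, whose hypothesis is precisely the existence of a balanced binary-measurement TCR hash with trapdoor phase-recoverability together with a post-quantum instance of one of the primitives $X$ (quantum fully-homomorphic encryption, attribute-based encryption, witness encryption, or timed-release encryption). Since the preceding claim supplies the required hash object from pseudorandom group actions, \cref{cor:compiler} upgrades each such $X$ to $X$ with $\PVD$ by encrypting the trapdoor $\td$ under $X$ and relying on \cref{lemma:compiler} for certified-everlasting deletion security, which establishes the second sentence.

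Because both \cref{thm:PKE} and \cref{cor:compiler} are stated and proven generically, there is no genuinely hard technical step; the work is purely one of checking that the object produced by the claim matches the interfaces those theorems expect. The only point warranting care is confirming that the balance parameter $\delta$ and the phase-recoverability constants obtained for $\cH^{(G,S,\star)}$ meet the constant-gap requirements of \cref{def:BBMhash} and of trapdoor phase-recoverability, so that the amplification inside \cref{thm:PKE} and the hybrid argument inside \cref{lemma:compiler} go through unchanged. I expect this verification to be immediate, since the group-action hash is perfectly two-to-one and admits perfect phase recovery via the trapdoor.
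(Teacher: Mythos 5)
Your proposal is correct and matches the paper's own proof exactly: the paper derives this corollary by combining the preceding claim (that $\cH^{(G,S,\star)}$ is a balanced binary-measurement TCR hash with trapdoor phase-recoverability) with \cref{thm:PKE} for the PKE statement and \cref{cor:compiler} for the advanced primitives. Your additional check that the perfect phase-recovery of \cite{HMY} satisfies the constant-gap interface of trapdoor phase-recoverability is sound and consistent with the paper's (implicit) reasoning.
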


\printbibliography

@INPROCEEDINGS{892139,  author={Hales, L. and Hallgren, S.},  booktitle={Proceedings 41st Annual Symposium on Foundations of Computer Science},   title={An improved quantum Fourier transform algorithm and applications},   year={2000},  volume={},  number={},  pages={515-525},  doi={10.1109/SFCS.2000.892139}}

@article{Grover2002CreatingST,
  title={Creating superpositions that correspond to efficiently integrable probability distributions},
  author={Lov K. Grover and Terry Rudolph},
  journal={arXiv: Quantum Physics},
  year={2002}
}

@InProceedings{cryptoeprint:2009/285,
author="Stehl{\'e}, Damien
and Steinfeld, Ron
and Tanaka, Keisuke
and Xagawa, Keita",
editor="Matsui, Mitsuru",
title="Efficient Public Key Encryption Based on Ideal Lattices",
booktitle="Advances in Cryptology -- ASIACRYPT 2009",
year="2009",
publisher="Springer Berlin Heidelberg",
address="Berlin, Heidelberg",
pages="617--635",
abstract="We describe public key encryption schemes with security provably based on the worst case hardness of the approximate Shortest Vector Problem in some structured lattices, called ideal lattices. Under the assumption that the latter is exponentially hard to solve even with a quantum computer, we achieve CPA-security against subexponential attacks, with (quasi-)optimal asymptotic performance: ifÂ n is the security parameter, both keys are of bit-lengthÂ {\$}{\{}{\backslash}widetilde{\{}O{\}}{\}}(n){\$}and the amortized costs of both encryption and decryption areÂ {\$}{\{}{\backslash}widetilde{\{}O{\}}{\}}(1){\$}per message bit. Our construction adapts the trapdoor one-way function of Gentry et al. (STOC'08), based on the Learning With Errors problem, to structured lattices. Our main technical tools are an adaptation of Ajtai's trapdoor key generation algorithm (ICALP'99) and a re-interpretation of Regev's quantum reduction between the Bounded Distance Decoding problem and sampling short lattice vectors.",
isbn="978-3-642-10366-7"
}

@article{DBLP:journals/siamcomp/MicciancioR07,
  author    = {Daniele Micciancio and
               Oded Regev},
  title     = {Worst-Case to Average-Case Reductions Based on Gaussian Measures},
  journal   = {{SIAM} J. Comput.},
  volume    = {37},
  number    = {1},
  pages     = {267--302},
  year      = {2007},
  url       = {https://doi.org/10.1137/S0097539705447360},
  doi       = {10.1137/S0097539705447360},
  timestamp = {Fri, 30 Nov 2018 00:00:00 +0100},
  biburl    = {https://dblp.org/rec/journals/siamcomp/MicciancioR07.bib},
  bibsource = {dblp computer science bibliography, https://dblp.org}
}

@inproceedings{DBLP:conf/stoc/Ajtai96,
  author    = {Mikl{\'{o}}s Ajtai},
  editor    = {Gary L. Miller},
  title     = {Generating Hard Instances of Lattice Problems (Extended Abstract)},
  booktitle = {Proceedings of the Twenty-Eighth Annual {ACM} Symposium on the Theory
               of Computing, Philadelphia, Pennsylvania, USA, May 22-24, 1996},
  pages     = {99--108},
  publisher = {{ACM}},
  year      = {1996},
  url       = {https://doi.org/10.1145/237814.237838},
  doi       = {10.1145/237814.237838},
  timestamp = {Wed, 14 Nov 2018 10:51:36 +0100},
  biburl    = {https://dblp.org/rec/conf/stoc/Ajtai96.bib},
  bibsource = {dblp computer science bibliography, https://dblp.org}
}

@book{Wilde13,
author = {Wilde, Mark M.},
title = {Quantum Information Theory},
year = {2013},
isbn = {1107034256},
publisher = {Cambridge University Press},
address = {USA},
edition = {1st},
abstract = {Finally, here is a modern, self-contained text on quantum information theory suitable
for graduate-level courses. Developing the subject 'from the ground up' it covers
classical results as well as major advances of the past decade. Beginning with an
extensive overview of classical information theory suitable for the non-expert, the
author then turns his attention to quantum mechanics for quantum information theory,
and the important protocols of teleportation, super-dense coding and entanglement
distribution. He develops all of the tools necessary for understanding important results
in quantum information theory, including capacity theorems for classical, entanglement-assisted,
private and quantum communication. The book also covers important recent developments
such as superadditivity of private, coherent and Holevo information, and the superactivation
of quantum capacity. This book will be warmly welcomed by the upcoming generation
of quantum information theorists and the already established community of classical
information theorists.}
}

@book{NielsenChuang11,
author = {Nielsen, Michael A. and Chuang, Isaac L.},
title = {Quantum Computation and Quantum Information: 10th Anniversary Edition},
year = {2011},
isbn = {1107002176},
publisher = {Cambridge University Press},
address = {USA},
edition = {10th},
abstract = {One of the most cited books in physics of all time, Quantum Computation and Quantum
Information remains the best textbook in this exciting field of science. This 10th
anniversary edition includes an introduction from the authors setting the work in
context. This comprehensive textbook describes such remarkable effects as fast quantum
algorithms, quantum teleportation, quantum cryptography and quantum error-correction.
Quantum mechanics and computer science are introduced before moving on to describe
what a quantum computer is, how it can be used to solve problems faster than 'classical'
computers and its real-world implementation. It concludes with an in-depth treatment
of quantum information. Containing a wealth of figures and exercises, this well-known
textbook is ideal for courses on the subject, and will interest beginning graduate
students and researchers in physics, computer science, mathematics, and electrical
engineering.}
}

@article{brakerski2021cryptographic,
author = {Brakerski, Zvika and Christiano, Paul and Mahadev, Urmila and Vazirani, Umesh and Vidick, Thomas},
title = {A Cryptographic Test of Quantumness and Certifiable Randomness from a Single Quantum Device},
year = {2021},
issue_date = {October 2021},
publisher = {Association for Computing Machinery},
address = {New York, NY, USA},
volume = {68},
number = {5},
issn = {0004-5411},
url = {https://doi.org/10.1145/3441309},
doi = {10.1145/3441309},
abstract = {We consider a new model for the testing of untrusted quantum devices, consisting of a single polynomial time bounded quantum device interacting with a classical polynomial time verifier. In this model, we propose solutions to two tasks—a protocol for efficient classical verification that the untrusted device is “truly quantum” and a protocol for producing certifiable randomness from a single untrusted quantum device. Our solution relies on the existence of a new cryptographic primitive for constraining the power of an untrusted quantum device: post-quantum secure trapdoor claw-free functions that must satisfy an adaptive hardcore bit property. We show how to construct this primitive based on the hardness of the learning with errors (LWE) problem.},
journal = {J. ACM},
month = {aug},
articleno = {31},
numpages = {47},
keywords = {certified randomness, Test of quantumness, quantum interactive proofs}
}

@inproceedings{hiroka2021quantum,
  author    = {Taiga Hiroka and
               Tomoyuki Morimae and
               Ryo Nishimaki and
               Takashi Yamakawa},
  editor    = {Mehdi Tibouchi and
               Huaxiong Wang},
  title     = {Quantum Encryption with Certified Deletion, Revisited: Public Key,
               Attribute-Based, and Classical Communication},
  booktitle = {Advances in Cryptology - {ASIACRYPT} 2021 - 27th International Conference
               on the Theory and Application of Cryptology and Information Security,
               Singapore, December 6-10, 2021, Proceedings, Part {I}},
  series    = {Lecture Notes in Computer Science},
  volume    = {13090},
  pages     = {606--636},
  publisher = {Springer},
  year      = {2021},
  url       = {https://doi.org/10.1007/978-3-030-92062-3\_21},
  doi       = {10.1007/978-3-030-92062-3\_21},
  timestamp = {Fri, 03 Dec 2021 17:36:49 +0100},
  biburl    = {https://dblp.org/rec/conf/asiacrypt/HirokaMNY21.bib},
  bibsource = {dblp computer science bibliography, https://dblp.org}
}

@InProceedings{Brakerski18,
author="Brakerski, Zvika",
editor="Shacham, Hovav
and Boldyreva, Alexandra",
title="Quantum FHE (Almost) As Secure AsÂ Classical",
booktitle="Advances in Cryptology -- CRYPTO 2018",
year="2018",
publisher="Springer International Publishing",
address="Cham",
pages="67--95",
abstract="Fully homomorphic encryption schemes (FHE) allow to apply arbitrary efficient computation to encrypted data without decrypting it first. In Quantum FHE (QFHE) we may want to apply an arbitrary quantumly efficient computation to (classical or quantum) encrypted data.",
isbn="978-3-319-96878-0"
}

@INPROCEEDINGS{mahadev2018classical,
  author={Mahadev, Urmila},
  booktitle={2018 IEEE 59th Annual Symposium on Foundations of Computer Science (FOCS)}, 
  title={Classical Verification of Quantum Computations}, 
  year={2018},
  volume={},
  number={},
  pages={259-267},
  doi={10.1109/FOCS.2018.00033}}

@article{Banaszczyk1993,
author = {Banaszczyk, W.},
journal = {Mathematische Annalen},
keywords = {transference theorems; upper and lower bounds; reciprocal lattice; convex body; dual lattice; covering radius; successive minima; probability measure},
number = {4},
pages = {625-636},
title = {New bounds in some transference theorems in the geometry of numbers.},
url = {http://eudml.org/doc/165105},
volume = {296},
year = {1993},
}

@inproceedings{hiroka2021certified,
  author    = {Taiga Hiroka and
               Tomoyuki Morimae and
               Ryo Nishimaki and
               Takashi Yamakawa},
  editor    = {Yevgeniy Dodis and
               Thomas Shrimpton},
  title     = {Certified Everlasting Zero-Knowledge Proof for {QMA}},
  booktitle = {Advances in Cryptology - {CRYPTO} 2022 - 42nd Annual International
               Cryptology Conference, {CRYPTO} 2022, Santa Barbara, CA, USA, August
               15-18, 2022, Proceedings, Part {I}},
  series    = {Lecture Notes in Computer Science},
  volume    = {13507},
  pages     = {239--268},
  publisher = {Springer},
  year      = {2022},
  url       = {https://doi.org/10.1007/978-3-031-15802-5\_9},
  doi       = {10.1007/978-3-031-15802-5\_9},
  timestamp = {Wed, 19 Oct 2022 17:37:44 +0200},
  biburl    = {https://dblp.org/rec/conf/crypto/HirokaMNY22.bib},
  bibsource = {dblp computer science bibliography, https://dblp.org}
}

@InProceedings{10.1007/978-3-540-70936-7_3,
author="M{\"u}ller-Quade, J{\"o}rn
and Unruh, Dominique",
editor="Vadhan, Salil P.",
title="Long-Term Security and Universal Composability",
booktitle="Theory of Cryptography",
year="2007",
publisher="Springer Berlin Heidelberg",
address="Berlin, Heidelberg",
pages="41--60",
abstract="Algorithmic progress and future technology threaten today's cryptographic protocols. Long-term secure protocols should not even in future reveal more information to a---then possibly unlimited---adversary.",
isbn="978-3-540-70936-7"
}

@article{Unruh2013,
author = {Unruh, Dominique},
title = {Revocable Quantum Timed-Release Encryption},
year = {2015},
issue_date = {December 2015},
publisher = {Association for Computing Machinery},
address = {New York, NY, USA},
volume = {62},
number = {6},
issn = {0004-5411},
url = {https://doi.org/10.1145/2817206},
doi = {10.1145/2817206},
abstract = {Timed-release encryption is a kind of encryption scheme in which a recipient can decrypt only after a specified amount of time T (assuming that we have a moderately precise estimate of his computing power). A revocable timed-release encryption is one where, before the time T is over, the sender can “give back” the timed-release encryption, provably loosing all access to the data. We show that revocable timed-release encryption without trusted parties is possible using quantum cryptography (while trivially impossible classically).Along the way, we develop two proof techniques in the quantum random oracle model that we believe may have applications also for other protocols.Finally, we also develop another new primitive, unknown recipient encryption, which allows us to send a message to an unknown/unspecified recipient over an insecure network in such a way that at most one recipient will get the message.},
journal = {J. ACM},
month = {dec},
articleno = {49},
numpages = {76},
keywords = {Quantum cryptography, timed-release encryption, random oracles}
}

@inproceedings{10.5555/1756169.1756191,
author = {Jarecki, Stanis\l{}aw and Lysyanskaya, Anna},
title = {Adaptively Secure Threshold Cryptography: Introducing Concurrency, Removing Erasures},
year = {2000},
isbn = {3540675175},
publisher = {Springer-Verlag},
address = {Berlin, Heidelberg},
abstract = {We put forward two new measures of security for threshold schemes secure in the adaptive
adversary model: security under concurrent composition; and security without the assumption
of reliable erasure. Using novel constructions and analytical tools, in both these
settings, we exhibit efficient secure threshold protocols for a variety of cryptographic
applications. In particular, based on the recent scheme by Cramer-Shoup, we construct
adaptively secure threshold cryptosystems secure against adaptive chosen ciphertext
attack under the DDH intractability assumption. Our techniques are also applicable
to other cryptosystems and signature schemes, like RSA, DSS, and ElGamal.Our techniques
include the first efficient implementation, for a wide but special class of protocols,
of secure channels in erasure-free adaptive model. Of independent interest, we present
the notion of a committed proof.},
booktitle = {Proceedings of the 19th International Conference on Theory and Application of Cryptographic Techniques},
pages = {221–242},
numpages = {22},
location = {Bruges, Belgium},
series = {EUROCRYPT'00}
}

@inproceedings{10.1145/237814.238015,
author = {Canetti, Ran and Feige, Uri and Goldreich, Oded and Naor, Moni},
title = {Adaptively Secure Multi-Party Computation},
year = {1996},
isbn = {0897917855},
publisher = {Association for Computing Machinery},
address = {New York, NY, USA},
url = {https://doi.org/10.1145/237814.238015},
doi = {10.1145/237814.238015},
booktitle = {Proceedings of the Twenty-Eighth Annual ACM Symposium on Theory of Computing},
pages = {639–648},
numpages = {10},
location = {Philadelphia, Pennsylvania, USA},
series = {STOC '96}
}

@InProceedings{ananth2020secure,
author="Ananth, Prabhanjan
and La Placa, Rolando L.",
editor="Canteaut, Anne
and Standaert, Fran{\c{c}}ois-Xavier",
title="Secure Software Leasing",
booktitle="Advances in Cryptology -- EUROCRYPT 2021",
year="2021",
publisher="Springer International Publishing",
address="Cham",
pages="501--530",
abstract="Formulating cryptographic definitions to protect against software piracy is an important research direction that has not received much attention. Since natural definitions using classical cryptography are impossible to achieve (as classical programs can always be copied), this directs us towards using techniques from quantum computing. The seminal work of Aaronson [CCC'09] introduced the notion of quantum copy-protection precisely to address the problem of software anti-piracy. However, despite being one of the most important problems in quantum cryptography, there are no provably secure solutions of quantum copy-protection known for any class of functions.",
isbn="978-3-030-77886-6"
}

@inproceedings{Aar09,
  title={Quantum copy-protection and quantum money},
  author={Aaronson, Scott},
  booktitle={2009 24th Annual IEEE Conference on Computational Complexity},
  pages={229--242},
  year={2009},
  organization={IEEE}
}

@inproceedings{GSW2013,
  author    = {Craig Gentry and
               Amit Sahai and
               Brent Waters},
  editor    = {Ran Canetti and
               Juan A. Garay},
  title     = {Homomorphic Encryption from Learning with Errors: Conceptually-Simpler,
               Asymptotically-Faster, Attribute-Based},
  booktitle = {Advances in Cryptology - {CRYPTO} 2013 - 33rd Annual Cryptology Conference,
               Santa Barbara, CA, USA, August 18-22, 2013. Proceedings, Part {I}},
  series    = {Lecture Notes in Computer Science},
  volume    = {8042},
  pages     = {75--92},
  publisher = {Springer},
  year      = {2013},
  url       = {https://doi.org/10.1007/978-3-642-40041-4\_5},
  doi       = {10.1007/978-3-642-40041-4\_5},
  timestamp = {Tue, 14 May 2019 10:00:48 +0200},
  biburl    = {https://dblp.org/rec/conf/crypto/GentrySW13.bib},
  bibsource = {dblp computer science bibliography, https://dblp.org}
}

@inproceedings{BB84,
  added-at = {2009-03-03T17:19:04.000+0100},
  address = {India},
  author = {Bennett, C. H. and Brassard, G.},
  biburl = {https://www.bibsonomy.org/bibtex/2ca89602a28a4416dfc6a74ffae7e3292/bronckobuster},
  booktitle = {Proceedings of IEEE International Conference on Computers, Systems, and Signal Processing},
  interhash = {2ec1f042f41d6db18ff61cfba05e970d},
  intrahash = {ca89602a28a4416dfc6a74ffae7e3292},
  keywords = {imported},
  location = {Bangalore},
  pages = 175,
  timestamp = {2009-03-03T17:20:15.000+0100},
  title = {{Quantum cryptography: Public key distribution and coin tossing}},
  year = 1984
}

@article{Wiesner83,
author = {Wiesner, Stephen},
title = {Conjugate Coding},
year = {1983},
issue_date = {Winter-Spring 1983},
publisher = {Association for Computing Machinery},
address = {New York, NY, USA},
volume = {15},
number = {1},
issn = {0163-5700},
url = {https://doi.org/10.1145/1008908.1008920},
doi = {10.1145/1008908.1008920},
journal = {SIGACT News},
month = jan,
pages = {78–88},
numpages = {11}
}

@inproceedings{cryptoeprint:2007:432,
author = {Gentry, Craig and Peikert, Chris and Vaikuntanathan, Vinod},
title = {Trapdoors for Hard Lattices and New Cryptographic Constructions},
year = {2008},
isbn = {9781605580470},
publisher = {Association for Computing Machinery},
address = {New York, NY, USA},
url = {https://doi.org/10.1145/1374376.1374407},
doi = {10.1145/1374376.1374407},
abstract = {We show how to construct a variety of "trapdoor" cryptographic tools assuming the worst-case hardness of standard lattice problems (such as approximating the length of the shortest nonzero vector to within certain polynomial factors). Our contributions include a new notion of trapdoor function with preimage sampling, simple and efficient "hash-and-sign" digital signature schemes, and identity-based encryption. A core technical component of our constructions is an efficient algorithm that, given a basis of an arbitrary lattice, samples lattice points from a discrete Gaussian probability distribution whose standard deviation is essentially the length of the longest Gram-Schmidt vector of the basis. A crucial security property is that the output distribution of the algorithm is oblivious to the particular geometry of the given basis.},
booktitle = {Proceedings of the Fortieth Annual ACM Symposium on Theory of Computing},
pages = {197–206},
numpages = {10},
keywords = {trapdoor functions, lattice-based cryptography},
location = {Victoria, British Columbia, Canada},
series = {STOC '08}
}

@article{Broadbent_2020,
   title={Quantum Encryption with Certified Deletion},
   ISBN={9783030643812},
   ISSN={1611-3349},
   url={http://dx.doi.org/10.1007/978-3-030-64381-2_4},
   DOI={10.1007/978-3-030-64381-2_4},
   journal={Lecture Notes in Computer Science},
   publisher={Springer International Publishing},
   author={Broadbent, Anne and Islam, Rabib},
   year={2020},
   pages={92–122}
}

@article{Tomamichel2017largelyself,
  doi = {10.22331/q-2017-07-14-14},
  url = {https://doi.org/10.22331/q-2017-07-14-14},
  title = {A largely self-contained and complete security proof for quantum key  distribution},
  author = {Tomamichel, Marco and Leverrier, Anthony},
  journal = {{Quantum}},
  issn = {2521-327X},
  publisher = {{Verein zur F{\"{o}}rderung des Open Access Publizierens in den Quantenwissenschaften}},
  volume = {1},
  pages = {14},
  month = jul,
  year = {2017}
}

@misc{cryptoeprint:2017/258,
      author = {Chris Peikert and Oded Regev and Noah Stephens-Davidowitz},
      title = {Pseudorandomness of Ring-LWE for Any Ring and Modulus},
      howpublished = {Cryptology ePrint Archive, Paper 2017/258},
      year = {2017},
      note = {\url{https://eprint.iacr.org/2017/258}},
      url = {https://eprint.iacr.org/2017/258}
}

@misc{cryptoeprint:2023/325,
      author = {Prabhanjan Ananth and Alexander Poremba and Vinod Vaikuntanathan},
      title = {Revocable Cryptography from Learning with Errors},
      howpublished = {Cryptology ePrint Archive, Paper 2023/325},
      year = {2023},
      note = {\url{https://eprint.iacr.org/2023/325}},
      url = {https://eprint.iacr.org/2023/325}
}

@article{Regev05,
  author = {Regev, Oded},
  title = {On lattices, learning with errors, random linear codes, and cryptography},
  journal = {Journal of the ACM},
  publisher = {ACM},
  address = {New York, NY, USA},
  volume = {56},
  number = {6},
  year = {2005},
  issn = {0004-5411},
  pages = {34:1--34:40},
  doi = {10.1145/1568318.1568324}
}

@misc{BBK22,
  doi = {10.48550/ARXIV.2203.02314},
  
  url = {https://arxiv.org/abs/2203.02314},
  
  author = {Bitansky, Nir and Brakerski, Zvika and Kalai, Yael Tauman},
  
  keywords = {Quantum Physics (quant-ph), Computational Complexity (cs.CC), Cryptography and Security (cs.CR), FOS: Physical sciences, FOS: Physical sciences, FOS: Computer and information sciences, FOS: Computer and information sciences},
  
  title = {Constructive Post-Quantum Reductions},
  
  publisher = {arXiv},
  
  year = {2022},
  
  copyright = {arXiv.org perpetual, non-exclusive license}
}

@inproceedings{Poremba22,
  author    = {Alexander Poremba},
  editor    = {Yael Tauman Kalai},
  title     = {Quantum Proofs of Deletion for Learning with Errors},
  booktitle = {14th Innovations in Theoretical Computer Science Conference, {ITCS}
               2023, January 10-13, 2023, MIT, Cambridge, Massachusetts, {USA}},
  series    = {LIPIcs},
  volume    = {251},
  pages     = {90:1--90:14},
  publisher = {Schloss Dagstuhl - Leibniz-Zentrum f{\"{u}}r Informatik},
  year      = {2023},
  url       = {https://doi.org/10.4230/LIPIcs.ITCS.2023.90},
  doi       = {10.4230/LIPIcs.ITCS.2023.90},
  timestamp = {Thu, 02 Feb 2023 16:55:54 +0100},
  biburl    = {https://dblp.org/rec/conf/innovations/Poremba23.bib},
  bibsource = {dblp computer science bibliography, https://dblp.org}
}

@misc{cryptoeprint:2022/969,
      author = {Taiga Hiroka and Tomoyuki Morimae and Ryo Nishimaki and Takashi Yamakawa},
      title = {Certified Everlasting Functional Encryption},
      howpublished = {Cryptology ePrint Archive, Paper 2022/969},
      year = {2022},
      note = {\url{https://eprint.iacr.org/2022/969}},
      url = {https://eprint.iacr.org/2022/969}
}

@misc{cryptoeprint:2022/786,
      author = {Marcel Dall'Agnol and Nicholas Spooner},
      title = {On the necessity of collapsing},
      howpublished = {Cryptology ePrint Archive, Paper 2022/786},
      year = {2022},
      note = {\url{https://eprint.iacr.org/2022/786}},
      url = {https://eprint.iacr.org/2022/786}
}

@article{balancedOWF,
	Abstract = {We revisit the following question: what are the minimal assumptions needed to construct statistically-hiding commitment schemes? Naor et al. show how to construct such schemes based on any one-way permutation. We improve upon this by showing a construction based on any approximable preimage-size one-way function. These are one-way functions for which it is possible to efficiently approximate the number of pre-images of a given output. A special case is the class of regular one-way functions where all points in the image of the function have the same (known) number of pre-images.},
	Author = {Haitner, Iftach and Horvitz, Omer and Katz, Jonathan and Koo, Chiu-Yuen and Morselli, Ruggero and Shaltiel, Ronen},
	Da = {2009/07/01},
	Date-Added = {2023-02-01 18:27:19 -0800},
	Date-Modified = {2023-02-01 18:27:19 -0800},
	Doi = {10.1007/s00145-007-9012-8},
	Id = {Haitner2009},
	Isbn = {1432-1378},
	Journal = {Journal of Cryptology},
	Number = {3},
	Pages = {283--310},
	Title = {Reducing Complexity Assumptions for Statistically-Hiding Commitment},
	Ty = {JOUR},
	Url = {https://doi.org/10.1007/s00145-007-9012-8},
	Volume = {22},
	Year = {2009},
	Bdsk-Url-1 = {https://doi.org/10.1007/s00145-007-9012-8}}

@article{Chor-Goldreich,
author = {Chor, B. and Goldreich, O.},
title = {On the Power of Two-Point Based Sampling},
year = {1989},
issue_date = {March 1989},
publisher = {Academic Press, Inc.},
address = {USA},
volume = {5},
number = {1},
issn = {0885-064X},
url = {https://doi.org/10.1016/0885-064X(89)90015-0},
doi = {10.1016/0885-064X(89)90015-0},
journal = {J. Complex.},
month = {apr},
pages = {96–106},
numpages = {11}
}

@article{Cantor1981ANA,
  title={A new algorithm for factoring polynomials over finite fields},
  author={David Geoffrey Cantor and Hans Zassenhaus},
  journal={Mathematics of Computation},
  year={1981},
  volume={36},
  pages={587-592}
}

@InProceedings{Yan,
author="Yan, Jun",
editor="Agrawal, Shweta
and Lin, Dongdai",
title="General Properties of Quantum Bit Commitments (Extended Abstract)",
booktitle="Advances in Cryptology -- ASIACRYPT 2022",
year="2022",
publisher="Springer Nature Switzerland",
address="Cham",
pages="628--657",
abstract="While unconditionally-secure quantum bit commitment (allowing both quantum computation and communication) is impossible, researchers turn to study the complexity-based one, a.k.a. computational quantum bit commitment. A computational canonical (non-interactive) quantum bit commitment scheme refers to a kind of schemes such that the commitment consists of just a single (quantum) message from the sender to the receiver that later can be opened by uncomputing the commit stage. In this work, we study general properties of computational quantum bit commitments through the lens of canonical quantum bit commitments. Among other results, we in particular obtain the following two: 1.Any computational quantum bit commitment scheme can be converted into the canonical (non-interactive) form (with its sum-binding property preserved).2.Two flavors of canonical quantum bit commitments are equivalent; that is, canonical computationally-hiding statistically-binding quantum bit commitment exists if and only if the canonical statistically-hiding computationally-binding one exists. Combining this result with the first one, it immediately implies (unconditionally) that computational quantum bit commitment is symmetric.",
isbn="978-3-031-22972-5"
}

@misc{cryptoeprint:2022/1178,
      author = {James Bartusek and Dakshita Khurana},
      title = {Cryptography with Certified Deletion},
      howpublished = {Cryptology ePrint Archive, Paper 2022/1178},
      year = {2022},
      note = {\url{https://eprint.iacr.org/2022/1178}},
      url = {https://eprint.iacr.org/2022/1178}
}

@inproceedings{HMY,
  author    = {Minki Hhan and Tomoyuki Morimae and Takashi Yamakawa},
  title     = {From the Hardness of Detecting Superpositions to Cryptography: Quantum Public Key Encryption and Commitments},
  booktitle = {{Eurocrypt} 2023 (to appear)},
  year      = {2023},
}

@misc{BGGKMRR,
      author = {James Bartusek and Sanjam Garg and Vipul Goyal and Dakshita Khurana and Giulio Malavolta and Justin Raizes and Bhaskar Roberts},
      title = {Obfuscation and Outsourced Computation with Certified Deletion},
      howpublished = {Cryptology ePrint Archive, Paper 2023/265},
      year = {2023},
      url = {https://eprint.iacr.org/2023/265}
}

@inproceedings{JQSY,
author = {Ji, Zhengfeng and Qiao, Youming and Song, Fang and Yun, Aaram},
title = {General Linear Group Action on Tensors: A Candidate for Post-Quantum Cryptography},
year = {2019},
isbn = {978-3-030-36029-0},
publisher = {Springer-Verlag},
address = {Berlin, Heidelberg},
url = {https://doi.org/10.1007/978-3-030-36030-6_11},
doi = {10.1007/978-3-030-36030-6_11},
abstract = {Starting from the one-way group action framework of Brassard and Yung (Crypto’90), we revisit building cryptography based on group actions. Several previous candidates for one-way group actions no longer stand, due to progress both on classical algorithms (e.g., graph isomorphism) and quantum algorithms (e.g., discrete logarithm).We propose the general linear group action on tensors as a new candidate to build cryptography based on group actions. Recent works (Futorny–Grochow–Sergeichuk Lin. Alg. Appl., 2019) suggest that the underlying algorithmic problem, the tensor isomorphism problem, is the hardest one among several isomorphism testing problems arising from areas including coding theory, computational group theory, and multivariate cryptography. We present evidence to justify the viability of this proposal from comprehensive study of the state-of-art heuristic algorithms, theoretical algorithms, hardness results, as well as quantum algorithms.We then introduce a new notion called pseudorandom group actions to further develop group-action based cryptography. Briefly speaking, given a group G acting on a set S, we assume that it is hard to distinguish two distributions of (s, t) either uniformly chosen from , or where s is randomly chosen from S and t is the result of applying a random group action of on s. This subsumes the classical Decisional Diffie-Hellman assumption when specialized to a particular group action. We carefully analyze various attack strategies that support instantiating this assumption by the general linear group action on tensors.Finally, we construct several cryptographic primitives such as digital signatures and pseudorandom functions. We give quantum security proofs based on the one-way group action assumption and the pseudorandom group action assumption.},
booktitle = {Theory of Cryptography: 17th International Conference, TCC 2019, Nuremberg, Germany, December 1–5, 2019, Proceedings, Part I},
pages = {251–281},
numpages = {31},
location = {Nuremberg, Germany}
}

@inproceedings{crypto-2022-32202,
  author    = {Mark Zhandry},
  editor    = {Yevgeniy Dodis and
               Thomas Shrimpton},
  title     = {New Constructions of Collapsing Hashes},
  booktitle = {Advances in Cryptology - {CRYPTO} 2022 - 42nd Annual International
               Cryptology Conference, {CRYPTO} 2022, Santa Barbara, CA, USA, August
               15-18, 2022, Proceedings, Part {III}},
  series    = {Lecture Notes in Computer Science},
  volume    = {13509},
  pages     = {596--624},
  publisher = {Springer},
  year      = {2022},
  url       = {https://doi.org/10.1007/978-3-031-15982-4\_20},
  doi       = {10.1007/978-3-031-15982-4\_20},
  timestamp = {Wed, 19 Oct 2022 17:37:45 +0200},
  biburl    = {https://dblp.org/rec/conf/crypto/Zhandry22b.bib},
  bibsource = {dblp computer science bibliography, https://dblp.org}
}

@inproceedings{crypto-2022-32124,
  title={The Gap Is Sensitive to Size of Preimages: Collapsing Property Doesn't Go Beyond Quantum Collision-Resistance for Preimages Bounded Hash Functions},
  publisher={Springer-Verlag},
  author={Shujiao Cao and Rui Xue},
  year=2022
}

@InProceedings{10.1007/978-3-662-49896-5_18,
author="Unruh, Dominique",
editor="Fischlin, Marc
and Coron, Jean-S{\'e}bastien",
title="Computationally Binding Quantum Commitments",
booktitle="Advances in Cryptology -- EUROCRYPT 2016",
year="2016",
publisher="Springer Berlin Heidelberg",
address="Berlin, Heidelberg",
pages="497--527",
abstract="We present a new definition of computationally binding commitment schemes in the quantum setting, which we call ``collapse-binding''. The definition applies to string commitments, composes in parallel, and works well with rewinding-based proofs. We give simple constructions of collapse-binding commitments in the random oracle model, giving evidence that they can be realized from hash functions like SHA-3. We evidence the usefulness of our definition by constructing three-round statistical zero-knowledge quantum arguments of knowledge for all NP languages.",
isbn="978-3-662-49896-5"
}

@InProceedings{10.1007/978-3-662-53890-6_6,
author="Unruh, Dominique",
editor="Cheon, Jung Hee
and Takagi, Tsuyoshi",
title="Collapse-Binding Quantum Commitments Without Random Oracles",
booktitle="Advances in Cryptology -- ASIACRYPT 2016",
year="2016",
publisher="Springer Berlin Heidelberg",
address="Berlin, Heidelberg",
pages="166--195",
abstract="We construct collapse-binding commitments in the standard model. Collapse-binding commitments were introduced in (Unruh, Eurocrypt 2016) to model the computational-binding property of commitments against quantum adversaries, but only constructions in the random oracle model were known.",
isbn="978-3-662-53890-6"
}

@InProceedings{10.1007/BFb0054137,
author="Simon, Daniel R.",
editor="Nyberg, Kaisa",
title="Finding collisions on a one-way street: Can secure hash functions be based on general assumptions?",
booktitle="Advances in Cryptology --- EUROCRYPT'98",
year="1998",
publisher="Springer Berlin Heidelberg",
address="Berlin, Heidelberg",
pages="334--345",
abstract="We prove the existence of an oracle relative to which there exist several well-known cryptographic primitives, including one-way permutations, but excluding (for a suitably strong definition) collision-intractible hash functions. Thus any proof that such functions can be derived from these weaker primitives is necessarily non-relativizing; in particular, no provable construction of a collision-intractable hash function can exist based solely on a ``black box'' one-way permutation. This result can be viewed as a partial justification for the common practice of treating the collision-intractable hash function as a cryptographic primitive, rather than attempting to derive it from a weaker primitive (such as a one-way permutation).",
isbn="978-3-540-69795-4"
}

@InProceedings{10.1007/978-3-030-26951-7_12,
author="Liu, Qipeng
and Zhandry, Mark",
editor="Boldyreva, Alexandra
and Micciancio, Daniele",
title="Revisiting Post-quantum Fiat-Shamir",
booktitle="Advances in Cryptology -- CRYPTO 2019",
year="2019",
publisher="Springer International Publishing",
address="Cham",
pages="326--355",
abstract="The Fiat-Shamir transformation is a useful approach to building non-interactive arguments (of knowledge) in the random oracle model. Unfortunately, existing proof techniques are incapable of proving the security of Fiat-Shamir in the quantum setting. The problem stems from (1) the difficulty of quantum rewinding, and (2) the inability of current techniques to adaptively program random oracles in the quantum setting. In this work, we show how to overcome the limitations above in many settings. In particular, we give mild conditions under which Fiat-Shamir is secure in the quantum setting. As an application, we show that existing lattice signatures based on Fiat-Shamir are secure without any modifications.",
isbn="978-3-030-26951-7"
}

@inproceedings{AKNYY,
  author    = {Shweta Agarwal and Fuyuki Kitagawa and Ryo Nishimaki and Shota Yamada and Takashi Yamakawa},
  title     = {Public Key Encryption with Secure Key Leasing},
  booktitle = {{Eurocrypt} 2023 (to appear)},
  year      = {2023},
}

\newpage 
\appendix

\end{document}